\DeclareMathOperator{\tr}{tr}
\DeclareMathOperator{\Tr}{Tr}
\DeclareMathOperator{\rank}{rank}
\def\F{\mathbb{F}}
\def\Z{\mathbb{Z}}
\newcommand\independent{\protect\mathpalette{\protect\independenT}{\perp}}
\def\independenT#1#2{\mathrel{\rlap{$#1#2$}\mkern3mu{#1#2}}}
\newtheorem{thm}{Theorem}[section]
\newtheorem{lem}{Lemma}[section]
\newtheorem{cor}{Corollary}[section]
\newtheorem{prop}{Proposition}[section]
\theoremstyle{remark}
\newtheorem{rem}{Remark}[section]
\newtheorem{ex}{Example}[section]
\theoremstyle{definition}
\newtheorem{defin}{Definition}[section]
\numberwithin{equation}{section}
\begin{document} 
 
\title{\bf {Entanglement-Assisted Quantum Error-Correcting Codes over Local Frobenius Rings}\thanks{This work was presented in part at the 2022 IEEE International Symposium on Information Theory (ISIT 2022), held in Espoo, Finland.}}

\author{Tania Sidana and Navin Kashyap\thanks{The authors are with the Department of Electrical Communication Engineering, Indian Institute of Science, Bangalore 560012. Email: \{taniasidana,nkashyap\}@iisc.ac.in. This work was supported an IISc-IoE postdoctoral fellowship awarded to the first author.}}

\date{}
 
\maketitle 

 \begin{abstract} In this paper, we provide a framework for constructing entanglement-assisted quantum error-correcting codes (EAQECCs) from classical additive codes over a finite commutative local Frobenius ring $\mathcal{R}$. At the heart of the framework, and this is one of the main technical contributions of our paper, is a procedure to construct, for an additive code $\mathcal{C}$ over $\mathcal{R}$, a generating set for $\mathcal{C}$ that is in standard form, meaning that it consists purely of isotropic generators and hyperbolic pairs.  
Moreover, when $\mathcal{R}$ is a Galois ring, we give an exact expression for the minimum number of pairs of maximally entangled qudits required to construct an EAQECC from an additive code over $\mathcal{R}$, which significantly extends known results for EAQECCs over finite fields. We also demonstrate how adding extra coordinates to an additive code can give us a certain degree of flexibility in determining the parameters of the EAQECCs that result from our construction.
\end{abstract}

 {\bf Keywords:} { quantum error correction; entanglement-assisted codes; Galois rings; Frobenius rings}

 \section{Introduction}
Quantum error-correcting codes (QECCs) protect quantum states against decoherence caused by the interaction between quantum states and their environment. The stabilizer framework, proposed by Gottesman \cite{gottesman}, is one of the main mechanisms for constructing QECCs. The construction is based on abelian subgroups of the Pauli group, and the resulting QECCs are called quantum stabilizer codes. The stabilizer framework encompasses the first QECC constructed by Shor \cite{shor}, as well as the construction from classical error-correcting codes, discovered independently by Calderbank and Shor \cite{calderbank} and Steane \cite{steane}. The latter construction, now known as the Calderbank-Shor-Steane (CSS) construction, uses dual-containing (or self-orthogonal) classical codes to form QECCs. 

Originally developed for qubits, the stabilizer framework was subsequently extended to higher-dimensional qudit spaces. The (Pauli) error group in this case is generated by unitary operators whose actions on qudits are defined by the algebra of an underlying finite field or ring. The extension of the stabilizer framework to error groups defined via finite fields was executed by Ashikhmin and Knill \cite{ashikhmin}, and Ketkar et al.\ \cite{ketkar}, while the same was done for finite commutative Frobenius rings by Nadella and Klappenecker \cite{nadella}, and Gluesing-Luerssen and Pllaha \cite{luerssen}.

The stabilizer formalism was extended in a different direction by Brun et al.\ \cite{brun}, who gave a method of constructing QECCs (over qubits) from \emph{non-abelian} subgroups of the Pauli group. The idea here was to add more dimensions to the Pauli group, so as to introduce extra degrees of freedom that can be used to ``abelianize'' the original non-abelian subgroup. The code construction required the existence of a small number of pre-shared entanglement qubits between the sender and receiver, where the receiver-end qubits are assumed to be error-free throughout. As a consequence, these codes were called entanglement-assisted quantum error correcting codes (EAQECCs). Within this framework, \emph{any} classical binary linear code can be used as the starting point for constructing an EAQECC. Wilde and Brun \cite{wilde} determined the minimal number of shared qubits required to construct an EAQECC starting from a given binary linear code, and more generally, starting from a given non-abelian subgroup of the Pauli group. 

The theory of EAQECCs extends readily to qudit spaces for which the (Pauli) error groups are defined by finite fields. Indeed, Wilde and Brun observe in \cite[Remark~1]{wilde} that their formula for the minimum number of shared qudits also applies to EAQECCs constructed from linear codes over any prime field; a formal proof of this was given by Luo et al.\ \cite{luo}. Later, Galindo et al.\ \cite{galindo} verified that this formula also applied to EAQECCs obtained from linear codes over an arbitrary finite field $\F_q$. More recently, Nadkarni and Garani \cite{nadkarni} derived an analogous formula for EAQECCs constructed from $\F_p$-additive codes over $\F_q$, where $p$ is the characteristic of $\F_q$.

In this paper, we extend the EAQECC formalism to qudit spaces on which error actions are defined by finite local commutative Frobenius rings. This enables us to construct EAQECCs starting from classical additive codes over such rings, which are overall a much richer class of (classical) codes. It must be pointed out here that Lee and Klappenecker \cite{lee} have previously attempted to construct EAQECCs from free linear codes over finite commutative (not necessarily local) Frobenius rings. However, their EAQECC construction relies crucially on Theorem~5 of their paper, in the proof of which we found a gap that could not readily be filled.\footnote{In the proof of \cite[Theorem 5]{lee}, Lee and Klappenecker replace $\mathbf{w}_k$ with $\mathbf{w}'_{k-2} = e_{k,i} \mathbf{w}_k - \cdots $. However, this may not result in a basis of $R^{2n},$ as $e_{k,i}$ need not be a unit in the ring $R$.}  By restricting our attention further to local rings, we are able to furnish a proof of the result (Theorem~\ref{thm:stdform}) needed for the construction of EAQECCs. Thus, one of the contributions of this paper is to provide a coding-theoretic framework to construct EAQECCs over finite commutative local Frobenius rings from first principles. Using this framework, we can complete the program initiated by Lee and Klappenecker \cite{lee} of constructing EAQECCs from linear codes over finite commutative (not necessarily local) Frobenius rings --- see the discussion following Remark \ref{rem:lee_klapp} towards the end of Section~\ref{sec3}.

We then consider the problem of determining the minimum number of pre-shared pairs of maximally entangled qudits required to construct an EAQECC within our framework, starting from an additive code over a finite commutative Frobenius ring. We succeed in deriving a formula for this number in the special case when the ring is a Galois ring. To get to the answer, we had to first derive it for the basic case of the integer rings $\Z_{p^a},$ which itself turned out to be a somewhat non-trivial task.

Finally, in the spirit of the propagation rules for EAQECCs proposed by Luo et al.\ \cite[Theorems~16 and 18]{luo1}, we explore how lengthening an additive code by inserting extra coordinates can affect the parameters and error-handling capabilities of the EAQECCs obtained via our construction. We describe two methods for lengthening. Using the first method, the EAQECC obtained from the lengthened code has the same dimension as the EAQECC from the original code while requiring fewer pairs of maximally entangled qudits; but this is usually at the expense of a loss in minimum distance. When we employ the second method of lengthening, the EAQECC obtained from the lengthened code requires the same number of maximally entangled qudit pairs as that obtained from the original code, but the minimum distance can now increase. However, the potential increase in minimum distance is always accompanied by a reduction in dimension. 

The remainder of this paper is organized as follows. In Section \ref{sec2}, we establish the basic definitions and notation needed to describe the construction of quantum stabilizer codes and EAQECCs. This section also contains statements of our main results.  In Section \ref{sec3}, we provide the means to construct EAQECCs from any additive code over a finite commutative local Frobenius ring. In Section \ref{sec4}, we derive an explicit form of the minimum number of pre-shared required to construct an EAQECC over a ring, $\Z_{p^a}$, of integers modulo a prime power. This result is vastly generalized in Section \ref{sec5} to cover EAQECCs over any finite Galois ring. In Section \ref{sec:lengthening}, we present two methods to lengthen an additive code so as to flexibly modify the parameters of the EAQECCs constructed from them. The paper ends in Section \ref{sec6} with some concluding remarks. Some supplementary material of a more technical nature are contained in the appendices.

\section{Preliminaries}\label{sec2}  Let $R$ be a finite commutative ring with unity. Let $\text{Hom}(R,\mathbb{C}^*)$ be the set of all additive characters of $R,$ i.e.,  the set of group homomorphisms from $(R,+)$ to $\mathbb{C}^*.$  A ring $R$ is a Frobenius ring if there exists an additive character $\chi$ such that  $\text{Hom}(R,\mathbb{C}^*)=R\cdot\chi.$ Any additive character with this property is called a generating character of $R.$ Finite fields, the rings $\mathbb{Z}_N$ of integers modulo $N,$ Galois rings and finite chain rings are a few examples of finite Frobenius rings.
%

Throughout this paper, we take $\mathcal{R}$ to be a finite commutative local Frobenius ring with generating character $\chi.$  Further, as $\mathcal{R}$ is a finite commutative local ring, the characteristic, $\text{char}{\mathcal R}$, of $\mathcal{R}$  is a power of a prime number, and the cardinality, $|\mathcal{R}|$, of $\mathcal{R}$ is also a power of prime number. Let  $|\mathcal{R}|=p^a=q,$ and $\text{char}{\mathcal R}=p^b,$ where $p$ is a prime.   Furthermore, with $\zeta = \exp\bigl(\frac{2 \pi i}{p^b}\bigr)$ a primitive $p^b$-th root of unity, we have that $\chi(r)\in \langle \zeta\rangle=\{1,\zeta,\zeta^2,\ldots,\zeta^{p^b-1}\}$ for each $r \in \mathcal{R}.$ 
 
A subset $\mathcal{C}$ of $\mathcal{R}^{n}$ is called an \emph{additive code} over $\mathcal{R}$ of length $n$ if $\mathcal{C}$ is an additive subgroup of $\mathcal{R}^{n}.$ Clearly, $\mathcal{C} \subseteq \mathcal{R}^{n}$ is an additive code if and only if $\mathcal{C} \subseteq \mathcal{R}^{n}$ is a module over $\mathbb{Z}_{p^b}.$ An additive code $\mathcal{C}$ over $\mathcal{R}$ will be called \emph{free} if $\mathcal{C}$ is a free module over $\mathbb{Z}_{p^b}$. If an additive code, $\mathcal{C}$, over $\mathcal{R}$ is generated (as a $\mathbb{Z}_{p^b}$-module) by a subset $\mathcal{G}$ of $\mathcal{R}^n$, we will write $\mathcal{C}=\langle \mathcal G \rangle$. Since $\mathbb{Z}_{p^b}$ is also a commutative local ring, all minimal (with respect to inclusion) generating sets of finitely generated $\mathbb{Z}_{p^b}$-modules have the same cardinality. In particular, all minimal generating sets of an additive code $\mathcal{C} \subseteq \mathcal{R}^{n}$ have the same cardinality, and this number is called the \emph{rank} of $\mathcal{C}$, denoted by $\rank(\mathcal{C}).$ 

Although the following result is well known for free modules over $\mathbb{Z}_{p^b},$ we rewrite the statement of the result in terms of free additive codes over $\mathcal{R}$, and give a proof for the sake of completeness.
\begin{prop}\label{prop:free} For a free additive code over $\mathcal{R},$ any minimal generating set is linearly independent over $\mathbb{Z}_{p^b}.$ 
\end{prop}
\begin{proof} Let $\mathcal{C}$ be a free additive code over $\mathcal{R}$, with $\{v_1,v_2,\ldots,v_t\}$ being a basis of $\mathcal{C}$ (as a free module over $\mathbb{Z}_{p^b}$), where $t=\rank(\mathcal{C})$. Recall that all minimal generating sets of $\mathcal{C}$ have the same cardinality $t ~(=\rank(\mathcal{C}))$; let $\{u_1,u_2,\ldots,u_t\}$ be any such set. Define a map $\Lambda: \mathcal{C} \to \mathcal{C}$ as $\Lambda\biggl(\sum\limits_{i=1 }^{t} r_i v_i\biggr)=\sum\limits_{i=1 }^{t} r_i u_i.$ Clearly, $\Lambda$ is a surjective $\mathbb{Z}_{p^b}$-module homomorphism. From this, using \cite[Proposition 1.2]{vasconcelos},  we get that $\Lambda$ is a $\mathbb{Z}_{p^b}$-module isomorphism. This implies that if $\sum\limits_{i=1 }^{t} r_i u_i ~(=\Lambda\biggl(\sum\limits_{i=1 }^{t} r_i v_i\biggr))~=0,$ then $\sum\limits_{i=1 }^{t} r_i v_i=0.$ As $\{v_1,v_2,\ldots,v_t\}$ is a basis of $\mathcal{C}$ as a $\mathbb{Z}_{p^b}$-module, we get $r_1=r_2=\ldots=r_t=0.$ Thus, $\{u_1,u_2,\ldots,u_t\}$ is linearly independent over $\mathbb{Z}_{p^b}$.
\end{proof}

For each $a=(a_0,a_1,\cdots,a_{n-1}) \in \mathcal{R}^n,$ the Hamming weight $\text{wt}_{H}(a)$ of $a$ is defined as $\text{wt}_{\mathrm{H}}(a):=|\{ i : a_i \neq 0\}|.$ The minimum Hamming distance $d_{\mathrm{H}}(\mathcal{C})$ of an additive code $\mathcal{C} \subseteq \mathcal{R}^{n}$ is defined as $d_{\mathrm{H}}(\mathcal{C}):=\min\{\text{wt}_{\text{H}}(a) :a \in \mathcal{C} \setminus \{0\}\}$. We in fact extend this notation to arbitrary subsets $A \subseteq \mathcal{R}^n$: $d_{\mathrm{H}}(A):=\min\{\text{wt}_{\text{H}}(a) :a \in A\setminus \{0\}\}$. If $A = \emptyset$, we set $d_H(A) := \infty$.
 
In this paper, we will mostly be concerned with additive codes of even blocklength. In this context, whenever we consider a tuple $(x,y) \in \mathcal{R}^{2n}$, it is to be implicitly understood that $x$ and $y$ both belong to $\mathcal{R}^n$. The \emph{symplectic weight} of a vector $(x,y)=(x_1,x_2,\ldots,x_n,y_1,y_2,\ldots,y_n) \in \mathcal{R}^{2n}$, denoted by $\text{wt}_{\mathrm{s}}(x,y),$ is defined as $$\text{wt}_{\mathrm{s}}(x,y):=|\{ i : (x_i,y_i) \neq 0\}|.$$ Analogous to the notion of minimum Hamming distance, the \emph{minimum symplectic distance} $d_{\mathrm{s}}(A)$ of a subset $A$ of $\mathcal{R}^{2n}$ is defined as  
 $$d_{\mathrm{s}}(A):=\min\{\text{wt}_{\mathrm{s}}(a) :a \in A\setminus \{0\}\}.$$
If $A = \emptyset$, we set $d_s(A) := \infty$. 

\begin{defin} \label{def:sip} The \emph{symplectic inner product} on $\mathcal{R}^{2n}$ is defined as $\langle (a,b) \mid (a',b')\rangle_{\mathrm{s}} := b \cdot a'-b' \cdot a$  for $a,b,a',b' \in \mathcal{R}^n.$ (Here $b \cdot a'$ and $b' \cdot a$ are the usual dot products in $\mathcal{R}^n.$) Furthermore, for a subset $\mathcal{C} \subseteq \mathcal{R}^{2n}$, we define
\begin{itemize}
\item the \emph{symplectic dual}, $\mathcal{C}^{\perp_\mathrm{s}}$, of $\mathcal{C}$ as
$$\mathcal{C}^{\perp_{\mathrm{s}}}=\{v \in \mathcal{R}^{2n} :  \langle c \mid v\rangle_{\mathrm{s}} = 0 \text{~for all~} c \in \mathcal{C}\},$$
\item and the \emph{$\chi$-symplectic dual}, $\mathcal{C}^{\perp_\chi}$, of $\mathcal{C}$ as $$\mathcal{C}^{\perp_\chi}=\{v \in \mathcal{R}^{2n} : \chi(\langle c \mid  v\rangle_{\mathrm{s}}) = 1 \text{~for all~} c \in \mathcal{C}\}.$$
\end{itemize}
\end{defin}

Note that for any subset $\mathcal{C} \subseteq \mathcal{R}^{2n}$, the duals defined above are additive codes in $\mathcal{R}^{2n}.$ The following lemma is a special case of Lemma 6 of Nadella and Klappenecker  \cite{nadella}.

\begin{lem}\label{card}  Let $\mathcal{C} \subseteq \mathcal{R}^{2n}$ be an additive code. Then
 $|\mathcal{C}||\mathcal{C}^{\perp_\chi}|=|\mathcal{R}^{2n}|.$
 \end{lem}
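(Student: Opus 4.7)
The plan is to exploit the Frobenius structure via character theory: I will realize $\mathcal{C}^{\perp_\chi}$ as the annihilator of $\mathcal{C}$ inside the group $\mathcal{R}^{2n}$ viewed via its Pontryagin dual, and then invoke the standard fact that for a finite abelian group $G$ and subgroup $H$, $|H||H^\circ|=|G|$, where $H^\circ \subseteq \widehat{G}$ is the annihilator.

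First I would define the map
\[
\Psi:\mathcal{R}^{2n}\longrightarrow \widehat{\mathcal{R}^{2n}},\qquad \Psi(v)(c)=\chi(\langle c\,|\,v\rangle_{\mathrm{s}}).
\]
Because the symplectic inner product is $\mathbb{Z}$-bilinear in each slot and $\chi$ is an additive character, $\Psi(v)$ is an additive character of $\mathcal{R}^{2n}$ and $\Psi$ is a group homomorphism. The main technical step is proving $\Psi$ is injective. Suppose $\Psi(v)$ is trivial, with $v=(v_1,v_2)\in\mathcal{R}^{n}\times\mathcal{R}^n$. Writing $c=(a,b)$, the equation $\chi(b\cdot v_1-v_2\cdot a)=1$ must hold for all $a,b\in\mathcal{R}^n$. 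Setting $a=0$ gives $\chi(b\cdot v_1)=1$ for every $b\in\mathcal{R}^n$, and coordinate-wise this means $\chi(r(v_1)_j)=1$ for every $r\in\mathcal{R}$ and every index $j$. Because $\chi$ is a \emph{generating} character, the map $r\mapsto \chi(r\,\cdot\,)$ is a bijection between $\mathcal{R}$ and $\text{Hom}(\mathcal{R},\mathbb{C}^*)$; in particular the only element $x\in\mathcal{R}$ with $\chi(rx)=1$ for all $r\in\mathcal{R}$ is $x=0$. This forces $v_1=0$, and symmetrically (taking $b=0$) forces $v_2=0$. Thus $\ker\Psi=0$, and since $|\mathcal{R}^{2n}|=|\widehat{\mathcal{R}^{2n}}|$, $\Psi$ is a group isomorphism.

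With this in hand, I would identify
\[
\mathcal{C}^{\perp_\chi}=\Psi^{-1}\bigl(\{\psi\in\widehat{\mathcal{R}^{2n}}:\psi(c)=1\text{ for all }c\in\mathcal{C}\}\bigr)=\Psi^{-1}(\mathcal{C}^{\circ}),
\]
where $\mathcal{C}^{\circ}$ is the annihilator of the additive subgroup $\mathcal{C}$ inside $\widehat{\mathcal{R}^{2n}}$. By the classical Pontryagin duality for finite abelian groups, $|\mathcal{C}^{\circ}|=|\mathcal{R}^{2n}|/|\mathcal{C}|$, and because $\Psi$ is a bijection we conclude $|\mathcal{C}^{\perp_\chi}|=|\mathcal{R}^{2n}|/|\mathcal{C}|$, i.e.\ $|\mathcal{C}||\mathcal{C}^{\perp_\chi}|=|\mathcal{R}^{2n}|$, as desired.

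The only nontrivial part is the non-degeneracy of the pairing $(c,v)\mapsto \chi(\langle c|v\rangle_{\mathrm{s}})$, and this is precisely where the Frobenius hypothesis on $\mathcal{R}$ is used; everything else is formal and follows from standard duality of finite abelian groups. No localness of $\mathcal{R}$ is needed here, which is consistent with the result appearing in the same generality in \cite{lee,luerssen}.
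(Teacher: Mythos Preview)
Your proof is correct. The paper itself does not give an argument; it simply cites Lemma~6 of Nadella and Klappenecker \cite{nadella}. Your direct character-theoretic proof---showing that the generating character makes the pairing $(c,v)\mapsto\chi(\langle c\mid v\rangle_{\mathrm{s}})$ non-degenerate, so that $\mathcal{C}^{\perp_\chi}$ is identified with the annihilator $\mathcal{C}^\circ$ under an isomorphism $\mathcal{R}^{2n}\cong\widehat{\mathcal{R}^{2n}}$---is exactly the standard argument underlying that cited result, and your observation that only the Frobenius hypothesis (not localness) is used is also correct. One cosmetic remark: the symbol $\Psi$ is already taken in this paper for the map $\mathcal{P}_n(\mathcal{R})\to\mathcal{R}^{2n}$, so you may wish to rename your isomorphism.
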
 

The definitions below will be needed for our construction in Section~\ref{sec3} of EAQECCs from additive codes over $\mathcal{R}$.

\begin{defin}
\label{def:chi-self-ortho}
An additive code $\mathcal{C} \subseteq \mathcal{R}^{2n}$ is called \emph{$\chi$-self-orthogonal} if $\mathcal{C} \subseteq \mathcal{C}^{\perp_\chi}$.
\end{defin}

\begin{defin} \label{def:chi-ext}
 Let $\mathcal{C} \subseteq \mathcal{R}^{2n}$ be an additive code.
\begin{itemize}
\item A code $\mathcal{C}' \subseteq \mathcal{R}^{2(n+c)}$ is called a \emph{$\chi$-self-orthogonal extension} of $\mathcal{C}$ if $\mathcal{C}' \subseteq \mathcal{C}'^{\perp_\chi}$, and $\mathcal{C}$ can be obtained from $\mathcal{C}' $ by puncturing $\mathcal{C}' $ at the coordinates $n+1,n+2,\ldots,n+c, 2n+c+1,2n+c+2,\ldots,2n+2c$. The number $c$ is called the \emph{entanglement degree} of the extension.

\item A $\chi$-self-orthogonal extension of the code $\mathcal{C}$ with the least entanglement degree among all such extensions is called a \emph{minimal} $\chi$-self-orthogonal extension of $\mathcal{C}.$ 
\end{itemize}
\end{defin}

The reason for the nomenclature of ``entanglement degree'' will get clear when we describe EAQECCs in Sections~\ref{sec:eaqecc} and \ref{sec3}. Briefly, this is the number of pairs of maximally entangled qudits needed in the construction of an EAQECC from an additive code $\mathcal{C}$.

\subsection{Quantum stabilizer codes over local Frobenius rings}\label{sec:stabcodes}

Let $$\mathcal{B}=\{| x\rangle : x \in \mathcal{R}\}$$  be an orthonormal basis of ${\mathbb{C}}^{q}.$ The state of a unit system, a qudit, is a superposition of these basis states of the system and is given by

$$| \psi\rangle= \sum\limits_{x \in \mathcal{R}}^{} a_x  | x\rangle, \text{~where~} a_x \in \mathbb{C} \text{~and~}  \sum\limits_{x \in \mathcal{R}}^{} |a_x|^2 =1.$$
An orthonormal basis of the quantum state space of $n$ qudits ${\mathbb{C}}^{q^n}=({\mathbb{C}}^{q})^{\otimes n}$ is given by 

$$\mathcal{B}^{\otimes n}=\{| x\rangle =| x_1\rangle \otimes | x_2\rangle \otimes \cdots \otimes | x_n\rangle  ~:~  x=(x_1,x_2,\ldots,x_n) \in \mathcal{R}^n\}.$$

For $a \in \mathcal{R},$ define two linear maps $X(a)$ and $Z(b)$ on ${\mathbb{C}}^{q}$ by their action on the basis $\mathcal{B}$  as 
$$X(a)(| x\rangle) =| x+a\rangle \text{ and } Z(a)(| x\rangle) =\chi(ax)| x\rangle \text{~for all~} x \in \mathcal{R}.$$
Extend these maps to unitary maps on ${\mathbb{C}}^{q^n}$ as follows:
$$X(a)=X(a_1)\otimes X(a_2)\otimes \cdots\otimes X(a_n) \text{~and~} Z(a)=Z(a_1)\otimes Z(a_2)\otimes \cdots\otimes Z(a_n) \text{~for~} a=(a_1,a_2,\ldots,a_n) \in \mathcal{R}^n.$$
Clearly,  $$X(a)(| x\rangle) =| x+a\rangle \text{ and } Z(a)(| x\rangle) =\chi(a \cdot x)| x\rangle \text{~for all~} a, x \in \mathcal{R}^n,$$ where $a \cdot x=\sum\limits_{i=1}^{n} a_ix_i$ is the dot product in $\mathcal{R}^n.$
The set $\mathcal{E}_n(\mathcal{R}) := \{X(a)Z(b) : a,b \in \mathcal{R}^n\}$ is called an \emph{error basis} of the $n$-qudit error group defined below in Definition~\ref{def:Pauli}. The following lemma tells us about the multiplicative and commutative properties of elements of the error basis.

\begin{lem}\label{comm}\cite{luerssen} Let $P=X(a)Z(b)$ and $P'=X(a')Z(b')$ with $a,b,a',b' \in \mathcal{R}^n$ be two elements of $\mathcal{E}_n(\mathcal{R}).$ Then $P^\dagger=P^{-1}=\chi(b \cdot a)X(-a)Z(-b)$ and $PP'=\chi(b \cdot a')X(a+a')Z(b+b').$ Furthermore,  $P$ and $P'$ commute with each other if and only if $\chi(b \cdot a'-b' \cdot a)=1$ 
\end{lem}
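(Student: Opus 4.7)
The plan is to verify both claims by direct computation on basis vectors $|x\rangle \in \mathcal{B}^{\otimes n}$, using only the definitions of $X(a)$ and $Z(b)$ given just above the lemma and the fact that $\chi$ is an additive character, so $\chi(u+v) = \chi(u)\chi(v)$ for all $u,v \in \mathcal{R}$. Since both sides of the claimed identity are linear operators on $\mathbb{C}^{q^n}$, it suffices to show they agree on every basis vector.

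First I would compute $PP'|x\rangle = X(a)Z(b)X(a')Z(b')|x\rangle$ step by step, working from the right. The operator $Z(b')$ contributes the scalar $\chi(b'x)$; then $X(a')$ shifts to $|x+a'\rangle$; next $Z(b)$ multiplies by $\chi(b(x+a')) = \chi(bx)\,\chi(ba')$ via additivity of $\chi$; and finally $X(a)$ shifts to $|x+a+a'\rangle$. Collecting the scalars yields
\[
PP'|x\rangle = \chi(ba')\,\chi\bigl((b+b')x\bigr)\,|x+a+a'\rangle.
\]
A similar direct application of the definitions shows that $X(a+a')Z(b+b')|x\rangle = \chi((b+b')x)\,|x+a+a'\rangle$. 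Comparing these expressions on every basis vector gives $PP' = \chi(ba')\,X(a+a')Z(b+b')$, which is the first claim.

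For the commutativity statement, I would apply the formula just derived to both orderings. Swapping the roles of $(a,b)$ and $(a',b')$ gives $P'P = \chi(b'a)\,X(a+a')Z(b+b')$. Because $X(a+a')Z(b+b')$ is unitary, hence nonzero, we have $PP' = P'P$ if and only if $\chi(ba') = \chi(b'a)$. Since $\chi$ is a group homomorphism into $\mathbb{C}^*$, this equality is equivalent to $\chi(ba' - b'a) = 1$, as required.

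There is no genuine obstacle here: the statement is a bookkeeping exercise. The only mild subtleties are keeping the order of $X$ and $Z$ straight (one must not silently commute $X(a')$ past $Z(b)$, as it is precisely their noncommutation that produces the scalar $\chi(ba')$) and invoking that $\chi$ factors through addition so that phases combine multiplicatively. In principle one could carry out the argument one tensor factor at a time and then tensor up, but writing $\chi$ on sums and using the dot-product notation lets the $n$-qudit case be handled in a single line, so I would simply do it that way.
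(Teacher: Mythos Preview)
Your argument is correct: the direct computation on basis vectors $|x\rangle$ is exactly the standard way to verify this identity, and your handling of the scalars and of the commutativity criterion is accurate. Note that the paper does not actually supply its own proof of this lemma --- it is quoted from \cite{luerssen} without argument --- so there is nothing to compare against; your write-up is precisely the routine verification one would expect.
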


To define the $n$-qudit error group, called the Pauli group, let us fix some notations first.
 Let \begin{equation*} N~=~\left\{\begin{array}{ll}
p^b & \text{ if~} p \text{~is odd} ;\\
2p^b & \text{ if~} p \text{~is even}.\\
\end{array}\right. \end{equation*}
Further, let $\omega \in \mathbb{C}^*$ be a primitive $N$-th root of unity.
\begin{defin} \cite{luerssen}  The \emph{Pauli group} $\mathcal{P}_n(\mathcal{R})$ is defined as 
$$\mathcal{P}_n(\mathcal{R}) := \{\omega^\ell X(a)Z(b) ~:~ 0 \leq \ell \leq N-1, a, b \in \mathcal{R}^n\}.$$ 
\label{def:Pauli}
\end{defin}
 Define a map $$\Psi : \mathcal{P}_n(\mathcal{R}) \to \mathcal{R}^{2n} \text{~as~} \Psi(\omega^\ell X(a)Z(b))=(a,b).$$ The map $\Psi$ is a surjective group homomorphism with $\text{ker~}\Psi =\{\omega^\ell I ~:~ 0 \leq\ell \leq N-1 \}.$
 
 The \emph{weight} of an operator $\omega^\ell X(a)Z(b)=\omega^\ell X(a_1)\otimes X(a_2)\otimes \cdots\otimes X(a_n)Z(a_1)\otimes Z(a_2)\otimes \cdots\otimes Z(a_n) \in \mathcal{P}_n(\mathcal{R})$ is defined as 
 $$\text{wt}(\omega^\ell X(a)Z(b)):=|\{ i : (a_i,b_i) \neq 0\}|.$$
 That is, the weight of an operator is the number of non-scalar components of the tensor product that forms the operator.
 
A quantum error correcting code over $\mathcal{R}$ is a $K$-dimensional subspace of $\mathbb{C}^{q^n}.$
A quantum code $\mathcal{Q}$ can detect an error $E \in \mathcal{P}_n(\mathcal{R})$ if and only if  $\langle u| E |v \rangle=\lambda_E\langle u | v \rangle$ for each $|u \rangle,|v \rangle \in \mathcal{Q}$, where $\lambda_E \in \mathbb{C}$ is a constant depending only on $E$.

 A quantum code $\mathcal{Q}$ has minimum distance $D$ if it can detect all errors of weight at most $D-1$ and cannot detect some error of weight $D$. A quantum code $\mathcal{Q}$ with minimum distance $D$ can correct any error of weight  at most  $\left\lceil \frac{D-1}{2}\right\rceil$. If $\mathcal{Q} \subseteq \mathbb{C}^{q^n}$ is a quantum code  of dimension $1,$ then clearly, for each  $E \in \mathcal{P}_n(\mathcal{R}),$ we have $\langle u | E | v \rangle=\lambda_E\langle u | v \rangle$ for each $|u \rangle,|v \rangle \in \mathcal{Q}.$ So in this case, the minimum distance of the quantum code $\mathcal{Q}$ is defined as the largest integer $D$ such that for each non-identity operator $E \in \mathcal{P}_n(\mathcal{R})$ with $\text{wt}(E) < D$, we have $\langle u | E | v \rangle=0$ for all $|u \rangle,|v \rangle \in \mathcal{Q}$. (See, for example, \cite{DG04}.) A quantum code $\mathcal{Q} \subseteq \mathbb{C}^{q^n}$ of dimension $K$ is referred to as an $((n,K))_q$ quantum code; additionally if it has minimum distance $D$, then it is an $((n,K,D))_q$ quantum code. The subscript $q$ may be dropped if there is no ambiguity. 
 
A quantum code $\mathcal{Q}$ is said to be \emph{non-degenerate} if for any two arbitrary correctable errors $E_1, E_2$ with $E_1 \neq E_2,$ $ E_1 | u \rangle$ and $E_2 | v \rangle$ are linearly independent for any $| u \rangle, | v \rangle \in \mathcal{Q}.$ A quantum code $\mathcal{Q}$ is said to be \emph{degenerate} if it is not non-degenerate --- see, for example, \cite{calder}.

\begin{defin}\cite{luerssen} \begin{itemize}\item[(a)] A subgroup $\mathcal{S}$ of $\mathcal{P}_n(\mathcal{R})$ is called a stabilizer group if 
$\mathcal{S}$ is abelian and $\mathcal{S}  \cap \text{ker } \Psi= \{I_{q{^{n}}}\}.$
\item[(b)] A subspace $\mathcal{Q}$ of $\mathbb{C}^{q^n}$ is called a quantum stabilizer code if there exists a stabilizer group $\mathcal{S}$ such that
$$\mathcal{Q}=\mathcal{Q}(\mathcal{S})~:= \{| x\rangle \in \mathbb{C}^{q^n} ~:~ V | x\rangle =| x\rangle \text{~for all~} V \in \mathcal{S}\}.$$
\end{itemize}
\end{defin}

The following theorem tells us about the dimension of a quantum stabilizer code.
\begin{thm}\label{dim} \cite{luerssen} Let $\mathcal{S} \subseteq \mathcal{P}_n(\mathcal{R}) $ be a stabilizer group and $\mathcal{Q}(\mathcal{S}) \subseteq \mathbb{C}^{q^n}$ be the corresponding quantum stabilizer code. Then, the dimension of $\mathcal{Q}(\mathcal{S})$ is equal to $ q^n/|\mathcal{S}|.$
\end{thm}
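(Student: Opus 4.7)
The plan is to realize $\mathcal{Q}(\mathcal{S})$ as the image of an orthogonal projector built from the elements of $\mathcal{S}$, and then read off its dimension as a trace. Concretely, I would set
\[
\Pi := \frac{1}{|\mathcal{S}|} \sum_{V \in \mathcal{S}} V
\]
and check that $\Pi$ is the orthogonal projector onto $\mathcal{Q}(\mathcal{S})$. Hermiticity follows because $\mathcal{S}$ is a group of unitaries, so $V^\dagger = V^{-1} \in \mathcal{S}$ for each $V \in \mathcal{S}$; idempotency follows from the standard averaging argument, using that for any fixed $V \in \mathcal{S}$ the map $W \mapsto VW$ permutes $\mathcal{S}$. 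Once $\Pi$ is a projector, it is immediate that its image is exactly $\mathcal{Q}(\mathcal{S})$: any vector in $\mathcal{Q}(\mathcal{S})$ is fixed by $\Pi$, and any vector of the form $\Pi|y\rangle$ is fixed by every $V \in \mathcal{S}$.

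Given this, $\dim \mathcal{Q}(\mathcal{S}) = \tr(\Pi) = \frac{1}{|\mathcal{S}|} \sum_{V \in \mathcal{S}} \tr(V)$, and the rest of the proof is a character computation. Writing $V = \omega^\ell X(a)Z(b)$ and working in the basis $\mathcal{B}^{\otimes n}$,
\[
\tr(V) \;=\; \omega^\ell \sum_{x \in \mathcal{R}^n} \chi(bx)\, \langle x \mid x+a\rangle,
\]
which vanishes if $a \neq 0$ because $\{|x\rangle\}$ is orthonormal. If $a = 0$, the trace collapses to $\omega^\ell \sum_{x \in \mathcal{R}^n} \chi(bx)$, and the factorization $\sum_{x \in \mathcal{R}^n}\chi(bx) = \prod_{i=1}^n \sum_{x_i \in \mathcal{R}} \chi(b_i x_i)$ reduces matters to the one-variable sum.

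The core step is then: for $b_i \in \mathcal{R}$, $\sum_{x_i \in \mathcal{R}} \chi(b_i x_i) = 0$ unless $b_i = 0$. This is where the Frobenius hypothesis is essential, because it guarantees that the map $r \mapsto \chi(b_i r)$ is a nontrivial character of $(\mathcal{R},+)$ precisely when $b_i \neq 0$; orthogonality of characters then forces the sum to vanish. Consequently $\tr(V) = 0$ for every $V \in \mathcal{P}_n(\mathcal{R})$ whose image under $\Psi$ is nonzero, i.e., for every $V \notin \ker \Psi$.

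Finally, the stabilizer condition $\mathcal{S} \cap \ker \Psi = \{I_{q^n}\}$ ensures that the only element of $\mathcal{S}$ contributing to the trace is $I_{q^n}$ itself, with $\tr(I_{q^n}) = q^n$. Hence
\[
\dim \mathcal{Q}(\mathcal{S}) \;=\; \frac{1}{|\mathcal{S}|}\sum_{V \in \mathcal{S}} \tr(V) \;=\; \frac{q^n}{|\mathcal{S}|},
\]
as required. I expect the only genuinely delicate point to be justifying the vanishing of the character sum over $\mathcal{R}$ in terms of the generating-character property; everything else is either a group-averaging manipulation or a direct use of orthonormality of $\mathcal{B}^{\otimes n}$.
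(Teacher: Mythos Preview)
The paper does not supply its own proof of this theorem; it is simply quoted from \cite{luerssen}. Your projector-and-trace argument is correct and is the standard proof of this fact. The one step you flagged as delicate---that $b_i \neq 0$ forces $x \mapsto \chi(b_i x)$ to be a nontrivial character---is easily justified: the generating-character property makes $r \mapsto r\cdot\chi$ a surjection from $\mathcal{R}$ onto $\text{Hom}(\mathcal{R},\mathbb{C}^*)$, and since $|\mathcal{R}| = |\text{Hom}(\mathcal{R},\mathbb{C}^*)|$ for any finite abelian group, this map is a bijection, hence has trivial kernel. With that made explicit, your argument is complete.
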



It is well known that the set of undetectable errors for a quantum stabilizer code $\mathcal{Q}(\mathcal{S})$ are those which commute with all the elements of $\mathcal{S}$ but are not themselves elements of $\mathcal{S}$ (ignoring the overall phase factor of the error and of the elements of $\mathcal{S}$). Hence the error-correction properties of a quantum stabilizer code depend on the centralizer of the stabilizer group. 

\subsection{EAQECCs over local Frobenius rings}\label{sec:eaqecc}

Quantum stabilizer codes can be constructed only from abelian subgroups of $\mathcal{P}_n(\mathcal{R}).$  To construct QECCs from non-abelian subgroups of $\mathcal{P}_n(\mathcal{R})$, there is a framework of entanglement-assisted quantum error correcting codes (EAQECCs), which involves the use of maximally entangled states shared between the transmitter and the receiver. Brun et al.\ \cite{brun} first proposed this construction from non-abelian subgroups of $\mathcal{P}_n(\mathbb{Z}_2)$. They constructed quantum stabilizer codes from stabilizer groups obtained by extending the operators in the non-abelian subgroups into an appropriate higher dimensional space to form abelian subgroups. Maximally entangled qubit pairs, termed as \emph{ebits}, are added for the extension; one qubit from each pair is held by the transmitter and the other qubit is held by the receiver. The qubits held at the receiver end are assumed to be error-free.

The basic idea of Brun et al.\ can be extended to construct EAQECCs from non-abelian subgroups of $\mathcal{P}_n(\mathcal{R}).$ For this, we need a method to extend a non-abelian subgroup of $\mathcal{P}_n(\mathcal{R})$ into an appropriate higher dimensional space to form an abelian group.  To this end, we first note that if $\mathcal{S}$ is a subgroup of $\mathcal{P}_n(\mathcal{R}),$ then $\Psi(\mathcal{S}) \subseteq \mathcal{R}^{2n}$ is an additive subgroup of $\mathcal{R}^{2n}.$ Moreover, by Lemma \ref{comm}, we see that an operator $P=\omega^\ell X(a)Z(b) \in \mathcal{P}_n(\mathcal{R})$ commutes with elements of the subgroup $\mathcal{S}$ if and only if $\chi(b \cdot a'-b' \cdot a)=1$ for each $\omega^{\ell'}X(a')Z(b')\in \mathcal{S}.$ From this, we observe that if $V_1 \in \mathcal{P}_n(\mathcal{R})$ commutes with all elements of $\mathcal{S},$ then $\Psi(V_1) \in \Psi (\mathcal{S})^{\perp_\chi}.$ 

Thus, extending $\mathcal{S}$ to make it an abelian subgroup $\mathcal{S}'$ of $\mathcal{P}_{n+c}(\mathcal{R})$ for some $c$ is equivalent to extending $\mathcal{C} := \Psi(\mathcal{S})$ to $\mathcal{C}' := \Psi(\mathcal{S})' \subseteq  \mathcal{R}^{2(n+c)}$ such that $\mathcal{C}' \subseteq \mathcal{C}'^{\perp_\chi}.$ In other words, it is equivalent to constructing a $\chi$-self-orthogonal extension $\mathcal{C}' \subseteq  \mathcal{R}^{2(n+c)}$ of the additive code $\mathcal{C} \subseteq \mathcal{R}^{2n}.$ In Theorem~\ref{thm:2step}, we provide a method to construct such a $\chi$-self-orthogonal extension. Then, in Theorem~\ref{thm:eaqecc}, we give a construction of an $((n+c,q^{n+c}/|\mathcal{C'}|))$ quantum stabilizer code from $\mathcal{C}'$, which will be the desired EAQECC. As in the Brun et al.\ framework, $c$ extra pairs of maximally entangled qudits are involved in the construction. The transmitter and receiver each hold one qudit from each maximally entangled pair; the qudits held by the receiver are assumed to be error-free. This quantum code is referred to as an $((n,q^{n+c}/|\mathcal{C'}|;c))$ EAQECC over $\mathcal{R}$. If $c=0$, then this is simply an $((n,q^{n}/|\mathcal{C}|))$ quantum stabilizer code.

As the maximally entangled qudits are assumed to be maintained error-free at the receiver end, we note that if $E=\omega^{\ell} X(a,a')Z(b,b') \in \mathcal{P}_{n+c}(\mathcal{R})$, with $a,b \in \mathcal{R}^{n},$ $a',b' \in \mathcal{R}^{c}$, is an error operating on the transmitted codeword, then we must have $a'=b'=(0,0,\ldots,0)\in \mathcal{R}^{c}.$ Thus, only errors of the form $\omega^{\ell} X(a,0)Z(b,0) \in \mathcal{P}_{n+c}(\mathcal{R})$, with $a,b \in \mathcal{R}^{n}$, are assumed to occur in this model. We then say that an $((n,q^{n+c}/|\mathcal{C}'| ;c))$EAQECC has minimum distance $D$ if it can detect all errors of the form $\omega^{\ell} X(a,0)Z(b,0) \in \mathcal{P}_{n+c}(\mathcal{R})$, with $a,b \in \mathcal{R}^{n}$,  of weight at most $D-1$, but it cannot detect some error of this form of weight $D.$  If for an EAQECC $\mathcal{Q}$, all errors of the form $\omega^{\ell} X(a,0)Z(b,0) \in \mathcal{P}_{n+c}(\mathcal{R})$, with $a,b \in \mathcal{R}^{n}$, are detectable, then the minimum distance of the code is defined as the largest integer $D$ such that for each non-identity operator $E= X(a,0)Z(b,0) \in \mathcal{P}_{n+c}(\mathcal{R})$, with $\text{wt}(E) < D$, we have $\langle u| E | v \rangle=0$ for all $|u \rangle,|v \rangle \in \mathcal{Q}$. An $((n,q^{n+c}/|\mathcal{C}'| ;c))$ EAQECC with minimum distance $D$ is referred to as an $((n,q^{n+c}/|\mathcal{C}'|,D ;c))$ EAQECC over $\mathcal{R}$; again, if $c=0$, then this is simply an $((n,q^{n}/|\mathcal{C}|,D))$ quantum stabilizer code.
 An EAQECC  with minimum distance $D$ can correct any error of the form $\omega^{\ell} X(a,0)Z(b,0) \in \mathcal{P}_{n+c}(\mathcal{R})$,  with $a,b \in \mathcal{R}^{n}$,  of weight at most  $\left\lceil \frac{D-1}{2}\right\rceil.$

\subsection{A summary of our main results}
In this paper, we provide a framework to construct EAQECCs from classical additive codes over a finite  commutative local Frobenius ring $\mathcal{R}$. Recall that $|\mathcal{R}| = q = p^a$ and $\text{char}\mathcal{R} = p^b$. Our main result  is as follows:

  \begin{thm}\label{dt1} Let $\mathcal{C} \subseteq \mathcal{R}^{2n}$ be an additive code,  i.e.,  $\mathcal{C}$ is a module over $\mathbb{Z}_{p^b}.$  From $\mathcal{C}$, we can construct an $((n,K, D; c))$ EAQECC over $\mathcal{R},$ where the number of pairs of maximally entangled qudits needed is $c = \frac12 \rank(\mathcal{C}/(\mathcal{C} \cap \mathcal{C}^{\perp_{{\chi}}}))$, the minimum distance is
\begin{equation*}
D~=~\left\{\begin{array}{ll}
d_{\mathrm{s}}(\mathcal{C}^{\perp_\chi}) & \text{~~if~~} \mathcal{C}^{\perp_\chi} \subseteq \mathcal{C} \\
d_{\mathrm{s}}(\mathcal{C}^{\perp_\chi}\setminus \mathcal{C}) & \text{~~otherwise} \, ,
\end{array}\right. 
\end{equation*}
and the dimension $K$ is bounded as $q^{n+c}/ (|\mathcal{C}| \, {p^{\sum_{t=1}^{b-1} (b-t)\rho_t}}) \leq  K \leq q^{n+c}/|\mathcal{C}|$, the $\rho_t$'s being numbers determined by a certain chain of subcodes of $\mathcal{C}$. If either \emph{(a)}~$\mathcal{C}$ is free or \emph{(b)}~$\mathcal{C}/(\mathcal{C} \cap \mathcal{C}^{\perp_{\chi}})$ is a free module over $\mathbb{Z}_{p^b}$, then $K = q^{n+c}/|\mathcal{C}|$. In the case of \emph{(b)}, we additionally have $c=\frac12 \bigl[\rank(\mathcal{C}) - \rank(\mathcal{C} \cap \mathcal{C}^{\perp_\chi})\bigr]$.
 \end{thm}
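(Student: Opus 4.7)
My plan is to assemble the EAQECC from the three building blocks established earlier in the paper: Theorem~\ref{tEAC} produces a minimal $\chi$-self-orthogonal extension of $\mathcal{C}$, Theorem~\ref{thm:eaqecc} promotes such an extension into an EAQECC via the stabilizer construction, and the symplectic normal-form result Theorem~\ref{sym} pins down the entanglement degree. The remaining work is a parameter computation, which I would split into three pieces corresponding to $c$, $D$, and $K$.

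For the entanglement degree, the key observation is that $\chi \circ \langle \cdot \mid \cdot \rangle_{\mathrm{s}}$ descends to a nondegenerate $\mathbb{Z}_{p^b}$-bilinear alternating form on the quotient module $\mathcal{C}/(\mathcal{C}\cap \mathcal{C}^{\perp_\chi})$. Theorem~\ref{sym} puts this form into symplectic normal form, yielding a hyperbolic decomposition with $c = \frac{1}{2}\rank(\mathcal{C}/(\mathcal{C}\cap \mathcal{C}^{\perp_\chi}))$ hyperbolic pairs; each pair is killed by exactly one added coordinate-pair in the extension, so Theorem~\ref{tEAC} furnishes $\mathcal{C}' \subseteq \mathcal{R}^{2(n+c)}$ with precisely this entanglement degree.

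The minimum distance then drops out of an analysis of admissible errors. Since the $c$ entanglement qudits are assumed error-free, admissible errors have the form $X(a,0)Z(b,0)$ with $(a,b) \in \mathcal{R}^{2n}$. Such an error commutes with every stabilizer iff $(a,b,0,0) \in (\mathcal{C}')^{\perp_\chi}$, which by the block structure of the extension from Theorem~\ref{tEAC} is equivalent to $(a,b) \in \mathcal{C}^{\perp_\chi}$; it acts trivially iff $(a,b,0,0) \in \mathcal{C}'$, which forces $(a,b) \in \mathcal{C}\cap \mathcal{C}^{\perp_\chi}$. The undetectable-but-nontrivial information errors therefore correspond to $\mathcal{C}^{\perp_\chi}\setminus \mathcal{C}$, giving $D = d_{\mathrm{s}}(\mathcal{C}^{\perp_\chi}\setminus \mathcal{C})$ whenever this set is non-empty. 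When $\mathcal{C}^{\perp_\chi}\subseteq \mathcal{C}$, every admissible error is either trivial or detectable, and the alternative distance definition at the end of Section~\ref{sec:eaqecc} yields $D = d_{\mathrm{s}}(\mathcal{C}^{\perp_\chi})$.

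The dimension follows from Theorem~\ref{thm:eaqecc} together with Theorem~\ref{dim}: these give $K = q^{n+c}/|\mathcal{C}'|$, so bounds on $K$ translate into bounds on $|\mathcal{C}'|$. In the free case the extension can be chosen so that $|\mathcal{C}'| = |\mathcal{C}|$ exactly, yielding $K = q^{n+c}/|\mathcal{C}|$. In general, decomposing $\mathcal{C}$ as a $\mathbb{Z}_{p^b}$-module into cyclic summands $\mathbb{Z}_{p^t}$ with multiplicity $\rho_t$, the symplectic pairing of torsion generators may require lifting generators of order $p^t < p^b$ to order $p^b$, inflating $|\mathcal{C}'|$ by a factor of at most $p^{\sum_{t=1}^{b-1}(b-t)\rho_t}$. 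This torsion bookkeeping is the main obstacle: one must verify, along the chain of subcodes defining the $\rho_t$'s, that the symplectic-partner construction from Theorem~\ref{sym} respects the $\mathbb{Z}_{p^b}$-module structure at each level. Once that is pinned down, the stated lower bound on $K$ follows immediately.
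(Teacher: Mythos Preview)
Your overall assembly is exactly the paper's: Theorem~\ref{dt1} is obtained by chaining Theorem~\ref{sym} (hyperbolic/isotropic generating set), Theorem~\ref{tEAC} (the $\chi$-self-orthogonal extension $\mathcal{C}'$ with the stated entanglement degree and size bounds), and Theorem~\ref{thm:eaqecc} (the stabilizer construction giving $K=q^{n+c}/|\mathcal{C}'|$ and the distance formula). Your treatment of $c$ and $D$ is fine. One small slip: Theorem~\ref{tEAC} does not assert minimality of the extension---that is established only later, and only over $\mathbb{Z}_{p^a}$ and Galois rings---but Theorem~\ref{dt1} does not claim minimality either, so this is harmless.

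The real gap is in your account of the $\rho_t$'s and the inflation of $|\mathcal{C}'|$. You identify $\rho_t$ with the multiplicity of $\mathbb{Z}_{p^t}$ in the cyclic decomposition of $\mathcal{C}$ and describe the inflation as ``lifting generators of order $p^t$ to order $p^b$.'' That is not what the paper does, and it is not correct. The paper's $\rho_t$ is
\[
\rho_t \;=\; \rank\bigl(\mathcal{C}/(\mathcal{C}\cap\mathcal{C}^{\perp_{\chi,t-1}})\bigr)\;-\;\rank\bigl(\mathcal{C}/(\mathcal{C}\cap\mathcal{C}^{\perp_{\chi,t}})\bigr),
\]
which, via Proposition~\ref{prop:count}, equals twice the number of hyperbolic pairs $(u_{i1},u_{i2})$ for which $\chi(\langle u_{i1}\mid u_{i2}\rangle_{\mathrm{s}})$ is a primitive $p^t$-th root of unity. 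This is data about the \emph{symplectic form}, not about the invariant factors of $\mathcal{C}$ as an abstract $\mathbb{Z}_{p^b}$-module; one can have $\mathcal{C}$ free and still have $\rho_t>0$ for some $t<b$ (e.g., over $\mathbb{Z}_{p^2}$ take the free rank-$2$ code spanned by $(1,0,0,0)$ and $(0,1,p,0)$ in $\mathbb{Z}_{p^2}^4$, whose symplectic pairing is $-p$). The inflation itself is not a lifting phenomenon: it is the size of $\ker(\varphi:\mathcal{C}'\to\mathcal{C})$, which is bounded by observing that any $\mathbb{Z}_{p^b}$-relation among the generators of $\mathcal{C}$, paired symplectically against a hyperbolic partner, forces $\chi(\gamma_\ell)^{\xi_\ell}=1$ and hence $\xi_\ell\in p^t\mathbb{Z}_{p^b}$ whenever $\chi(\gamma_\ell)$ has order $p^t$. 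Counting the at most $p^{b-t}$ choices per coefficient yields the bound $p^{\sum_t (b-t)\rho_t}$. Your torsion-based bookkeeping would not recover this; you need the chain $\mathcal{C}^{\perp_{\chi,t}}$ and Proposition~\ref{prop:count}.
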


Theorem~\ref{dt1} is a direct consequence of Theorems~\ref{thm:2step} and \ref{thm:eaqecc} proved in Section~\ref{sec3}. The precise expression for the numbers $\rho_t$ can be found in the restatement of Theorem~\ref{dt1} at the end of that section.

\medskip
 
In the second part of the paper (Sections \ref{sec4}--\ref{sec5}), we explicitly derive the minimum number of pre-shared pairs of maximally entangled qudits required to construct an EAQECC from a submodule over the integer ring $\Z_{p^a}$, and more generally, the minimum number of pre-shared pairs of maximally entangled qudits required to construct an EAQECC from an additive code over a Galois ring.  We state our results in the two theorems below. 

\begin{thm}\label{dt2} Let  $\mathcal{C} \subseteq \mathbb{Z}_{p^a}^{2n}$ be a submodule. From $\mathcal{C}$, we can construct an $((n,K, D; c))$ EAQECC over $\Z_{p^a}$, where the minimum number, $c$, of pairs of maximally entangled qudits needed for the construction is equal to $\frac12 \rank(\mathcal{C}/(\mathcal{C} \cap \mathcal{C}^{\perp_{\mathrm{s}}}))$,  the minimum distance is 
\begin{equation*}
D~=~\left\{\begin{array}{ll}
d_{\mathrm{s}}(\mathcal{C}^{\perp_{\mathrm{s}}}) & \text{~~if~~} \mathcal{C}^{\perp_{\mathrm{s}}} \subseteq \mathcal{C} \\  
d_{\mathrm{s}}(\mathcal{C}^{\perp_{\mathrm{s}}}\setminus \mathcal{C} )  & \text{~~otherwise} \, , \end{array}\right. 
\end{equation*}
and the dimension $K$ is bounded as $p^{a(n+c)}/ (|\mathcal{C}| \, {p^{\sum_{t=1}^{a-1} (a-t)\rho_t}}) \leq  K \leq p^{a(n+c)}/|\mathcal{C}|$, the $\rho_t$'s being numbers determined by a certain chain of subcodes of $\mathcal{C}$. If either \emph{(a)}~$\mathcal{C}$ is free or \emph{(b)}~$\mathcal{C}/(\mathcal{C} \cap \mathcal{C}^{\perp_{\mathrm{s}}})$ is a free module over $\mathbb{Z}_{p^b}$, then $K = p^{a(n+c)}/|\mathcal{C}|$. In the case of \emph{(b)}, we additionally have $c=\frac12 \bigl[\rank(\mathcal{C}) - \rank(\mathcal{C} \cap \mathcal{C}^{\perp_{\mathrm{s}}})\bigr]$.
 \end{thm}

\medskip

Theorem~\ref{dt2} follows by putting together the results of Theorems~\ref{thm:2step} and \ref{thm:eaqecc}, and Theorem~\ref{4thm}, proved in Sections~\ref{sec3} and \ref{sec4}, respectively. The precise expression for the numbers $\rho_t$ can be found in the restatement of the theorem at the end of Section~\ref{sec4}.

\begin{thm}\label{dt3} Let $\mathcal{C} \subseteq \text{GR}(p^b,m)^{2n}$ be an additive code over the Galois ring $\text{GR}(p^b,m)$.   From $\mathcal{C}$, we can construct an  $((n,K,D;c))$ EAQECC over $\text{GR}(p^b,m),$ where the minimum number, $c$, of pairs of maximally entangled qudits needed for the construction is equal to  $\left\lceil \frac1{2m}\rank(\mathcal{C}/(\mathcal{C} \cap \mathcal{C}^{\perp_{\Tr}}))  \right\rceil$, the minimum distance is 
\begin{equation*}
D~=~\left\{\begin{array}{ll}
d_{\mathrm{s}}(\mathcal{C}^{\perp_{\Tr}}) & \text{~~if~~} \mathcal{C}^{\perp_{\Tr}} \subseteq \mathcal{C} \\
d_{\mathrm{s}}(\mathcal{C}^{\perp_{\Tr}}\setminus \mathcal{C} )  & \text{~~otherwise}\, , 
\end{array}\right.
\end{equation*}
and the dimension $K$ is bounded as $p^{bm(n+c)}/ (|\mathcal{C}| \, {p^{\sum_{t=1}^{b-1} (b-t)\rho_t}}) \leq  K \leq p^{bm(n+c)}/|\mathcal{C}|$, the $\rho_t$'s being numbers determined by a certain chain of subcodes of $\mathcal{C}$. If either \emph{(a)}~$\mathcal{C}$ is free or \emph{(b)}~$\mathcal{C}/(\mathcal{C} \cap \mathcal{C}^{\perp_{\Tr}})$ is a free module over $\mathbb{Z}_{p^b}$, then $K = p^{bm(n+c)}/|\mathcal{C}|$. In the case of \emph{(b)}, we additionally have $c=\left\lceil \frac1{2m}[\rank(\mathcal{C}) - \rank(\mathcal{C} \cap \mathcal{C}^{\perp_{\Tr}})]\right\rceil$.
\end{thm}

In the statement of the theorem above, $\mathcal{C}^{\perp_{\Tr}}=\{v \in \text{GR}(p^b,m)^{2n} :{\Tr}( \langle v | c\rangle_{\mathrm{s}}) = 0 \; \forall \, c \in \mathcal{C}\}$ is the trace-symplectic dual of $\mathcal{C}$, defined with respect to the generalized trace map ${\Tr}: \text{GR}(p^b,m) \to \mathbb{Z}_{p^b}$. Section~\ref{sec5} contains an exposition of the machinery of the map $\Tr$, and the theorem is a consequence of the results proved in that section. Again, the numbers $\rho_t$ are expressed precisely in the restatement of Theorem~\ref{dt3} at the end of that section.

Theorem~\ref{dt3} significantly extends the results of Galindo et al.\ \cite{galindo} and Nadkarni and Garani \cite{nadkarni} obtained for EAQECCs derived from codes over finite fields. For ready reference, we state below the result for finite fields obtained as a corollary of the theorem. This result can also be inferred from the work of Nadkarni and Garani \cite{nadkarni}.

\begin{cor}
\label{cor:Fpm}
 Let $\mathcal{C} \subseteq \mathbb{F}_{p^m}^{2n}$ be an additive code over the finite field $\F_{p^m}$. From $\mathcal{C}$, we can construct an  $((n,p^{m(n+c)}/{|\mathcal{C}|} ,D;c))$ EAQECC over $\mathbb{F}_{p^m},$ where the minimum number, $c$, of pairs of maximally entangled qudits needed for the construction is equal to  
$ \left\lceil  \frac{1}{2m} (\dim_{\F_p}(\mathcal{C}) - \dim_{\F_p}(\mathcal{C} \cap \mathcal{C}^{\perp_{\tr}})) \right\rceil$, and
\begin{equation*}
D~=~\left\{\begin{array}{ll}
d_{\mathrm{s}}(\mathcal{C}^{\perp_{\tr}}) & \text{~~if~~}  \mathcal{C}^{\perp_{\tr}} \subseteq \mathcal{C}\\ 
d_{\mathrm{s}}(\mathcal{C}^{\perp_{\tr}}\setminus \mathcal{C} )  & \text{~~otherwise}\,.
\end{array}\right. 
\end{equation*}
Here, $\mathcal{C}^{\perp_{\tr}} := \{v \in \mathbb{F}_{p^m} ^{2n} : \tr( \langle v | c\rangle_{\mathrm{s}}) = 0 \; \forall \, c \in \mathcal{C}\}$ is the trace-symplectic dual of $\mathcal{C}$ defined with respect to the trace map $\tr: \F_{p^m} \to \F_p$ given by $\tr(z) = z+z^p+z^{p^2}+\cdots+z^{p^{m-1}}$.
\end{cor}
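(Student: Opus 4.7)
The plan is to derive Corollary~\ref{cor:Fpm} as a direct specialization of Theorem~\ref{dt3} to the case $b=1$, since the finite field $\F_{p^m}$ is exactly the Galois ring $\text{GR}(p,m)$. The whole proof consists of tracking what each ingredient of Theorem~\ref{dt3} becomes once $b=1$, and verifying that the resulting statements collapse to those in the corollary.

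First, I would verify that the generalized trace map $\Tr:\text{GR}(p^b,m)\to\mathbb{Z}_{p^b}$ constructed in Section~\ref{sec5} reduces, at $b=1$, to the standard Frobenius trace $\tr:\F_{p^m}\to\F_p$ given by $\tr(z)=z+z^p+\cdots+z^{p^{m-1}}$. Consequently the trace-symplectic dual $\mathcal{C}^{\perp_{\Tr}}$ becomes $\mathcal{C}^{\perp_{\tr}}$ in the sense of the corollary, and the minimum-distance formula in Theorem~\ref{dt3} transfers verbatim.

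Next, I would observe that an additive code $\mathcal{C}\subseteq\F_{p^m}^{2n}$ is, by the general definition, a module over $\mathbb{Z}_{p^b}=\mathbb{Z}_p=\F_p$, i.e., an $\F_p$-subspace of $\F_{p^m}^{2n}$. Since every $\F_p$-vector space is a free module, the "free" clause of Theorem~\ref{dt3} is automatic, and $K=p^{m(n+c)}/|\mathcal{C}|$ exactly. Equivalently, the correction exponent $\sum_{t=1}^{b-1}(b-t)\rho_t$ in Theorem~\ref{dt3} is an empty sum, hence zero, so the lower and upper bounds on $K$ coincide regardless of the chain of subcodes used to define the $\rho_t$. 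Because ranks over $\F_p$ are just $\F_p$-dimensions, we also get $\rank(\mathcal{C}/(\mathcal{C}\cap\mathcal{C}^{\perp_{\Tr}}))=\dim_{\F_p}(\mathcal{C})-\dim_{\F_p}(\mathcal{C}\cap\mathcal{C}^{\perp_{\tr}})$.

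Finally, I would substitute these observations into the formula
\[
c~=~\Bigl\lceil\tfrac{1}{2m}\,\rank(\mathcal{C}/(\mathcal{C}\cap\mathcal{C}^{\perp_{\Tr}}))\Bigr\rceil
\]
of Theorem~\ref{dt3}, which becomes the stated ceiling expression in the corollary, and read off the minimum distance directly. There is no genuine obstacle; the only point that truly requires care is confirming that the generalized trace $\Tr$ introduced for $\text{GR}(p^b,m)$ in Section~\ref{sec5} specializes to the classical field trace $\tr$ when $b=1$, but this is immediate from the Witt-style construction of $\text{GR}(p^b,m)$, which trivializes in the field case.
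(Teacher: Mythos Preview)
Your proposal is correct and follows essentially the same route as the paper: specialize Theorem~\ref{dt3} to $b=1$, note that $\Tr$ becomes the field trace $\tr$, and use the fact that additive codes over $\F_{p^m}$ are $\F_p$-vector spaces (hence free $\mathbb{Z}_p$-modules with rank equal to $\F_p$-dimension), so that the bounds on $K$ collapse and $\rank(\mathcal{C}/(\mathcal{C}\cap\mathcal{C}^{\perp_{\Tr}}))=\dim_{\F_p}(\mathcal{C})-\dim_{\F_p}(\mathcal{C}\cap\mathcal{C}^{\perp_{\tr}})$. The paper's justification is essentially the same one-paragraph observation.
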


  \section{Constructing EAQECCs from  Additive Codes over Local Frobenius Rings}\label{sec3}
  
In this section, we provide the details of our method for constructing EAQECCs from additive codes over a finite commutative local Frobenius ring $\mathcal{R}$ with generating character $\chi$. We again recall that $|\mathcal{R}| = q = p^a$ and $\text{char}\mathcal{R} = p^b$. At the heart of the construction is a mechanism to obtain a $\chi$-self-orthogonal extension $\mathcal{C}'$ of an additive code $\mathcal{C}$. This requires some preliminary development.

\subsection{Standard-Form Generating Sets of Additive Codes}
We start with some definitions.
\begin{defin} \label{def:sympl} A subset $\{a_{11},a_{12},a_{21},a_{22},\ldots,a_{e1},a_{e2}\}$
 of $\mathcal{R}^{2n}$ is said to be a \emph{symplectic subset} of $\mathcal{R}^{2n}$ if
  $\chi(\langle a_{i1} \mid a_{j1} \rangle_{\mathrm{s}}) = \chi(\langle a_{i2} \mid a_{j2} \rangle_{\mathrm{s}}) = \chi(\langle a_{i1} \mid a_{k 2} \rangle_{\mathrm{s}}) = 1$ and  $\chi(\langle a_{i1} \mid a_{i2} \rangle_{\mathrm{s}})\neq1$ for all $i ,j, k \in \{1,2,\ldots,e\}$ with $i \neq k.$
 \end{defin}

\begin{defin} Let $\mathcal{C} \subseteq \mathcal{R}^{2n}$ be an additive code, i.e.,  $\mathcal{C} \subseteq \mathcal{R}^{2n}$ is a module over $\mathbb{Z}_{p^b}.$ Further, let $\mathcal{G}$ be a generating set of  $\mathcal{C} \subseteq \mathcal{R}^{2n}$ as a $\mathbb{Z}_{p^b}$-module.
\begin{itemize}
\item A generator $g \in \mathcal{G}$ is called an \emph{isotropic generator} if  $\chi(\langle g | h \rangle_{\mathrm{s}})=1$ for all $h \in \mathcal{G}.$
\item Two generators $g,g' \in \mathcal{G}$ are called a \emph{hyperbolic pair} if $\chi(\langle g | g' \rangle_{\mathrm{s}})\neq1$, $\chi(\langle g| h \rangle_{\mathrm{s}})= 1$ for all $h \in \mathcal{G} \setminus \{g'\}$, and $\chi(\langle g'| h)\rangle_{\mathrm{s}})=1$ for all $h \in \mathcal{G} \setminus \{g\}$.
\end{itemize}  
\end{defin}
 
Note that a generator is isotropic if and only if  it belongs to $\mathcal{C}^{\perp_\chi}$. Thus, an additive code $\mathcal{C} \subseteq \mathcal{R}^{2n}$ with a generating set $\mathcal{G}$ is $\chi$-self-orthogonal if and only if all the generators in $\mathcal{G}$ are isotropic. 

\begin{defin} 
A generating set $\mathcal{G}$ of an additive code $\mathcal{C} \subseteq \mathcal{R}^{2n}$ is said to be in \emph{standard form} if it consists solely of isotropic generators and hyperbolic pairs.
\label{def:stdform}
\end{defin}

It is far from being clear at this stage whether generating sets in standard form exist for additive codes over rings. This fact has been established for additive codes over fields \cite{brun}, \cite{EAQECC_bookchapter}, \cite{galindo}, \cite{luo}, \cite{nadkarni} but not yet, to the best of our knowledge, for codes over finite rings beyond fields (see Remark~\ref{rem:lee_klapp}). We will extend this fundamental result to additive codes over any finite local Frobenius ring in Theorem~\ref{thm:stdform} a little later in this section. But first, we give an idea of why the notion of a standard-form generating set is useful.

 \begin{prop} \label{prop:ext}
 Let $\mathcal{C} \subseteq \mathcal{R}^{2n}$ be an additive code with a generating set in standard form, in which there are exactly $e$ hyperbolic pairs, $u_{i1},u_{i2}$, $i = 1,2,\ldots,e$. Then, $\mathcal{C}$ has a $\chi$-self-orthogonal extension $\mathcal{C}'  \subseteq \mathcal{R}^{2(n+c)}$ if and only if there is a symplectic subset $\{a_{11},a_{12},a_{21},a_{22},\ldots,a_{e1},a_{e2}\} \subset \mathcal{R}^{2c}$ of cardinality $2e$ such that $\chi(\langle u_{i1} \mid u_{i2} \rangle_{\mathrm{s}}) = \chi(\langle a_{i1} \mid a_{i2} \rangle_{\mathrm{s}})$ for $i=1,2,\ldots,e$.
 \end{prop}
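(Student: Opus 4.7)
The plan is to prove both directions of the biconditional by decomposing the symplectic inner product on $\mathcal{R}^{2(n+c)}$ into a contribution from the original $2n$ coordinates and a contribution from the appended $2c$ coordinates, and then exploiting the additivity of the character $\chi$.

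For the ``if'' direction, I would construct $\mathcal{C}'$ explicitly. Extend each isotropic generator $g$ of $\mathcal{C}$ to $(g,0)\in \mathcal{R}^{2(n+c)}$, and extend each hyperbolic pair $u_{i1},u_{i2}$ by appending the corresponding $a_{i1},a_{i2}$ in the coordinate arrangement for which the contribution $\pm\langle a_{i1}|a_{i2}\rangle_{\text{s}}$ combines with $\langle u_{i1}|u_{i2}\rangle_{\text{s}}$ to lie in $\ker\chi$; this is where the matching identity $\chi(\langle u_{i1}|u_{i2}\rangle_{\text{s}})=\chi(\langle a_{i1}|a_{i2}\rangle_{\text{s}})$ is consumed. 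Let $\mathcal{C}'$ be the $\mathbb{Z}_{p^b}$-submodule of $\mathcal{R}^{2(n+c)}$ generated by these extended vectors. Because $\langle\cdot|\cdot\rangle_{\text{s}}$ is bilinear and $\chi$ is an additive character, verifying $\mathcal{C}' \subseteq (\mathcal{C}')^{\perp_\chi}$ reduces to checking $\chi(\langle\cdot|\cdot\rangle_{\text{s}})=1$ on pairs of generators. This splits into four cases: (a) two isotropic extensions, where isotropy in the $\mathcal{C}$-block alone suffices; (b) an isotropic paired with a hyperbolic extension, similarly; (c) two hyperbolic extensions from distinct pairs, in which the hyperbolic-pair property gives triviality in the $\mathcal{C}$-block and the first three conditions of Definition~\ref{def:sympl} give triviality in the $a$-block; and (d) two hyperbolic extensions from the same pair, where the matching condition forces the two non-trivial contributions to cancel. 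The puncturing condition is immediate from the construction.

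For the ``only if'' direction, given a $\chi$-self-orthogonal extension $\mathcal{C}'$, I would use surjectivity of the coordinate projection $\pi:\mathcal{C}' \to \mathcal{C}$ (which is guaranteed by the puncturing hypothesis) to lift each generator $u_{ij}$ of $\mathcal{C}$ to some $u'_{ij}\in\mathcal{C}'$, and write $u'_{ij}=(u_{ij},a_{ij})$ with $a_{ij}\in\mathcal{R}^{2c}$. Expanding $\chi(\langle u'_{ij}|u'_{kl}\rangle_{\text{s}})=1$ and separating the $u$-block contribution from the $a$-block contribution yields: for $i\neq k$, the hyperbolic-pair property trivializes the $u$-block contribution and so forces $\chi(\langle a_{i1}|a_{k1}\rangle_{\text{s}})=\chi(\langle a_{i2}|a_{k2}\rangle_{\text{s}})=\chi(\langle a_{i1}|a_{k2}\rangle_{\text{s}})=1$; for $i=k$, the non-trivial $u$-block contribution pins $\chi(\langle a_{i1}|a_{i2}\rangle_{\text{s}})$ to the required matching value, which is automatically non-trivial. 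This supplies every clause of Definition~\ref{def:sympl} together with the matching identity. Cardinality exactly $2e$ then follows automatically: any coincidence such as $a_{i1}=a_{j1}$, $a_{i1}=a_{j2}$, or $a_{i1}=a_{i2}$ would force some $\chi$-product that is required to be non-trivial (by the within-pair relation) to equal one required to be trivial (by the cross-pair relations or by $\chi(\langle v|v\rangle_{\text{s}})=\chi(0)=1$), a contradiction.

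The main obstacle here is bookkeeping rather than conceptual: one must track the sign that arises when the antisymmetric symplectic form is decomposed across the block structure $(\text{old }2n)\oplus(\text{new }2c)$, so that the matching equality as stated in the proposition (rather than an ``inverse'' version) emerges on both sides of the biconditional. Once that convention is pinned down, the entire argument is a short case analysis using only bilinearity of $\langle\cdot|\cdot\rangle_{\text{s}}$, the character property of $\chi$, and the defining relations of hyperbolic pairs and symplectic subsets.
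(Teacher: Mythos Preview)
Your proposal is correct and follows essentially the same approach as the paper: extend the generators explicitly (appending zeros to the isotropic ones) for the forward direction, and lift the hyperbolic generators through the puncturing projection for the converse, in both cases reducing to a pairwise check via the block decomposition of $\langle\cdot\mid\cdot\rangle_{\text{s}}$ on $\mathcal{R}^{2(n+c)}$. The paper pins down the sign you flag by taking $a_{i1}=(-\hat v_i,\hat w_i)$ and $a_{i2}=(-\hat x_i,\hat y_i)$ (negating the first half of the appended block so the matching identity comes out as stated rather than inverted), but otherwise the two arguments coincide; your explicit distinctness check for cardinality $2e$ is a small addition the paper leaves implicit.
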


 \begin{proof} Let $\mathcal{G} = \{u_{11},u_{12},\ldots,u_{e1},u_{e2},z_1,\ldots,z_d\}$ be a standard-form generating set of $\mathcal{C}$, with $z_1,\ldots,z_d$ being the isotropic generators. As the generators live in $\mathcal{R}^{2n}$, we can write them as
 $$
 u_{i1} = (v_i,w_i) \text{ and }  u_{i2} = (x_i,y_i) \text{ for } i = 1,2,\ldots,e,\ \text{ and }  z_j = (v_{e+j},w_{e+j}) \text{ for } j=1,2\ldots,d,
 $$
 where each of the components $u_i,v_i,x_i,y_i$ lies in $\mathcal{R}^n$. 
 
 Now, suppose that $\{a_{11},a_{12},a_{21},a_{22},\ldots,a_{e1},a_{e2}\} \subset \mathcal{R}^{2c}$ is a symplectic subset such that $\chi(\langle u_{i1} \mid u_{i2} \rangle_{\mathrm{s}}) = \chi(\langle a_{i1} \mid a_{i2} \rangle_{\mathrm{s}})$ for $i=1,2,\ldots,e$. Let $a_{i1} = (b_i,c_i)$ and $a_{i2} = (r_i,s_i)$, the components being from $\mathcal{R}^c$. We then extend the components $v_i,w_i,x_i,y_i$ of the generators in $\mathcal{G}$ to $v_i',w_i',x_i',y_i' \in \mathcal{R}^{n+c}$ as follows:
 \begin{gather}
 v_i' = (v_i,-b_i), \ w_i' = (w_i,c_i), \ x_i' = (x_i,-r_i),\ y_i' = (y_i,s_i) \ \text{ for } i = 1,2,\ldots,e, \label{eq:ext.1} \\
\text{and } v_i' = (v_i,0), w_i' = (w_i,0) \text{ for } i = e+1,\ldots,e+d, \label{eq:ext.2}
 \end{gather}
 where $0$ is the zero element of $\mathcal{R}^c$. Finally, set 
 \begin{equation}
 u_{i1}' := (v_i',w_i') \text{ and }  u'_{i2} = (x'_i,y'_i) \text{ for } i = 1,2,\ldots,e,\ \text{ and }  z'_j = (v'_{e+j},w'_{e+j}) \text{ for } j=1,2\ldots,d.
 \label{eq:ext.3}
 \end{equation}
 Then, $\mathcal{G}' = \{u'_{11},u'_{12},\ldots,u'_{e1},u'_{e2},z'_1,\ldots,z'_d\}$ generates an additive code $\mathcal{C}' \subseteq \mathcal{R}^{2(n+c)}$, and the generators in $\mathcal{G}'$ are all isotropic. For instance, $\langle u'_{i1} \mid u'_{i2}\rangle_s = w_i'x_i'-v_i'y_i' = w_ix_i-c_ir_i - (v_iy_i - b_is_i) = \langle u_{i1} \mid u_{i2}\rangle_s - \langle a_{i1} \mid a_{i2}\rangle_s$, so that 
 $$
 \chi(\langle u'_{i1} \mid u'_{i2}\rangle_s) = \chi(\langle u_{i1} \mid u_{i2}\rangle_s) \cdot \bigl(\chi(\langle a_{i1} \mid a_{i2}\rangle_s)\bigr)^{-1} = 1.
 $$ 
 It follows that $\mathcal{C}'$ is a $\chi$-self-orthogonal extension of $\mathcal{C}$.

Conversely, if $\mathcal{C}'$ is a $\chi$-self-orthogonal extension of $\mathcal{C}$, then it has codewords $u'_{11},u'_{12},\ldots,u'_{e1},u'_{e2}$ that are extensions of the generators $u_{11},u_{12},\ldots,u_{e1},u_{e2}$ that form the hyperbolic pairs in $\mathcal{G}$. That is, we can write
 $$
 u'_{i1} = \bigl((v_i,\hat{v}_i),(w_i,\hat{w}_i)\bigr) \text{ and }  u'_{i2} = \bigl((x_i,\hat{x}_i),(y_i,\hat{y}_i)\bigr) \text{ for } i = 1,2,\ldots,e,
 $$
for some $\hat{u}_i,\hat{v}_i,\hat{x}_i,\hat{y}_i \in \mathcal{R}^c$. Since $\mathcal{C}'$ is $\chi$-self-orthogonal, we have $\chi(\langle u'_{ik} \mid u'_{j\ell}\rangle_{\mathrm{s}}) = 1$ for all $i,j \in \{1,2,\ldots,e\}$ and $k, \ell \in \{1,2\}$. Hence, setting
$$
a_{i1} = (-\hat{v}_i,\hat{w}_i) \text{ and } a_{i2} = (-\hat{x}_i,\hat{y}_i) \text{ for } i = 1,2,\ldots,e,
$$
it is easy to verify that  $\{a_{11},a_{12},a_{21},a_{22},\ldots,a_{e1},a_{e2}\} \subset \mathcal{R}^{2c}$ is a symplectic subset such that $\chi(\langle u_{i1} \mid u_{i2} \rangle_{\mathrm{s}}) = \chi(\langle a_{i1} \mid a_{i2} \rangle_{\mathrm{s}})$ for $i=1,2,\ldots,e$. 
 \end{proof}

The proof of Proposition~\ref{prop:ext} shows that a $\chi$-self-orthogonal extension $\mathcal{C}' \subseteq \mathcal{R}^{2(n+c)}$ of an additive code $\mathcal{C} \subseteq \mathcal{R}^{2n}$ can be constructed in two steps:
\begin{itemize}
\item[(1)] Find a generating set $\mathcal{G}$ in standard form of the additive code $\mathcal{C} \subseteq \mathcal{R}^{2n}$.
\item[(2)] Find a symplectic subset of $\mathcal{R}^{2c}$, for some suitable choice of $c$, satisfying the property required by Proposition~\ref{prop:ext}. The desired code $\mathcal{C}'$ is generated by the set $\mathcal{G}'$ obtained by extending the generators in $\mathcal{G}$ by $2c$ coordinates as prescribed in \eqref{eq:ext.1}--\eqref{eq:ext.3}. 
\end{itemize}   
Indeed, this prescription for a construction of a $\chi$-self-orthogonal extension is important enough for our development that we will give it a name: the Two-Step Construction. 

In Theorems~\ref{thm:stdform} and \ref{thm:2step} below, we will show that it is always possible to construct a $\chi$-self-orthogonal extension of an additive code $\mathcal{C} \subseteq \mathcal{R}^{2n}$ by following the Two-Step Construction. The first of these theorems, stated next, shows that any additive code $\mathcal{C} \subseteq \mathcal{R}^{2n}$ has a (minimal) generating set that is in standard form. The proof is by construction of such a set.

\begin{thm}\label{thm:stdform} Let $\mathcal{C} \subseteq \mathcal{R}^{2n}$ be an additive code. There exists a symplectic subset $\mathcal{T} \subseteq \mathcal{C} \setminus \mathcal{C}^{\perp_\chi}$ of cardinality $|\mathcal{T}| = \rank(\mathcal{C}/(\mathcal{C} \cap \mathcal{C}^{\perp_\chi}))$, such that the following statements hold for \emph{any} minimal generating set $\mathcal{S}$ of the additive code $\mathcal{C} \cap \mathcal{C}^{\perp_\chi}$:
\begin{itemize}
\item[(a)] $\mathcal{S} \cup \mathcal{T}$ is a generating set of $\mathcal{C}$ in standard form, with $\mathcal{S}$ being the set of isotropic generators and $\mathcal{T}$ being the set of hyperbolic pairs\footnote{We use ``the set of hyperbolic pairs'' as shorthand for ``the set of generators that form hyperbolic pairs''. Thus, the cardinality of a set of hyperbolic pairs is equal to the number of generators that the set contains, which is actually \emph{twice} the number of hyperbolic pairs formed by these generators.\label{fn:hyppairs}}.
\item[(b)] There is a subset $\mathcal{S}_0 \subseteq \mathcal{S}$ such that $\mathcal{S}_0 \cup \mathcal{T}$ is a minimal generating set of $\mathcal{C}$.
\item[(c)] If $\mathcal{C}/(\mathcal{C} \cap \mathcal{C}^{\perp_\chi})$ is free (as a module over $\mathbb{Z}_{p^b}$), then $\mathcal{S} \cup \mathcal{T}$ is itself a minimal generating set of $\mathcal{C}$.
\end{itemize}

\end{thm}  
\begin{proof} 
Let $\pi : \mathcal{C}  \to \mathcal{C}/(\mathcal{C} \cap \mathcal{C}^{\perp_\chi})$ be the canonical projection map that takes $(v,w) \in \mathcal{C}$ to the coset $(v,w) + (\mathcal{C} \cap \mathcal{C}^{\perp_\chi})$. Let $\mathcal{T}_0=\{(a_1,b_1),(a_2,b_2),\ldots,(a_f,b_f)\} \subseteq \mathcal{C} \setminus \mathcal{C}^{\perp_\chi}$ be such that $\pi(\mathcal{T}_0) := \{\pi((a_1,b_1)),\pi((a_2,b_2)),\ldots,\pi((a_f,b_f))\}$ is a minimal generating set of $\mathcal{C}/(\mathcal{C} \cap \mathcal{C}^{\perp_\chi})$ as a $\mathbb{Z}_{p^b}$-module. Thus, $|\mathcal{T}_0| = |\pi(\mathcal{T}_0)| = \rank(\mathcal{C}/(\mathcal{C} \cap \mathcal{C}^{\perp_\chi}))$.

Now,  let $\mathcal{S}$ be \emph{any} minimal generating set of the additive code $\mathcal{C} \cap \mathcal{C}^{\perp_\chi}  \subseteq \mathcal{R}^{2n}$ as a $\mathbb{Z}_{p^b}$-module. We assert that  $\mathcal{S} \cup \mathcal{T}_0$ generates $\mathcal{C}$.
To see this, note that for any $(v,w) \in \mathcal{C},$ we must have 
$\pi(v,w)=z_1\pi((a_1,b_1))+z_2\pi((a_2,b_2))+\ldots+z_f\pi((a_f,b_f))$ 
for some $z_i \in \mathbb{Z}_{p^b}.$  Since $z_1 \pi((a_1,b_1)) + z_2\pi((a_2,b_2)) + \cdots + z_f \pi((a_f,b_f)) = \pi (z_1 (a_1,b_1) + z_2 (a_2,b_2) + \cdots + z_f (a_f,b_f))$, we have $(v,w)-z_1(a_1,b_1)-z_2(a_2,b_2)-\ldots-z_f(a_f,b_f) \in \mathcal{C} \cap \mathcal{C}^{\perp_\chi},$ which implies that $(v,w) \in \langle \mathcal{S} \cup \mathcal{T}_0 \rangle$. This completes the proof of the assertion.


(a)\ Since the generators in $\mathcal{S}$ belong to $\mathcal{C}^{\perp_\chi}$, they are all isotropic. We will not tamper with $\mathcal{S}$; instead, we will transform $\mathcal{T}_0$ into a symplectic subset. For $1 \leq i,j \leq f,$ $\chi(b_i \cdot a_j-b_j \cdot a_i)$ is a $p^b$-th root of unity, so let $\chi(b_i \cdot a_j-b_j \cdot a_i)=\zeta^{\ell_{i,j}},$ where $\zeta = \exp(\frac{2\pi i}{p^b})$ and $0\leq \ell_{i,j} < p^b.$ Further, for $1 \leq i,j \leq f,$ we note that $\chi(b_j \cdot a_i-b_i \cdot a_j)=\chi(b_i \cdot a_j-b_j \cdot a_i)^{-1},$ which gives $\ell_{i,j} \equiv -\ell_{j,i} \pmod{p^b}$. For each $i \in \{1,2,\ldots,f\}$, we must have $\ell_{i,j} \ne 0$ for some $j \in \{1,\ldots,f\}$; otherwise, $(a_i,b_i)$ is in $\mathcal{C} \cap \mathcal{C}^{\perp_\chi}$, so that $\pi((a_i,b_i))=\mathcal{C} \cap \mathcal{C}^{\perp_\chi},$ contradicting the minimality of $\pi(\mathcal{T}_0)$.

Let $\ell_{\hat{\imath},\hat{\jmath}}$ with $\hat\imath,\hat\jmath \in \{1,2, \ldots,f\}$ be such that
$\gcd(\ell_{\hat\imath,\hat\jmath},p^b)=\min\{\gcd(\ell_{i,j},p^b) : \ell_{i,j}\neq 0 \text{~and~} 1 \leq i,j \leq f \}.$ Clearly, $\gcd(\ell_{\hat\imath,\hat\jmath},p^b)$ divides $\gcd(\ell_{i,j},p^b)$ for $1 \leq i,j \leq f.$
 As $\ell_{\hat\imath,\hat\jmath}\neq 0,$  we have  $\chi(b_{\hat\imath} \cdot a_{\hat\jmath} - b_{\hat\jmath} \cdot a_{\hat\imath}) \neq 1.$   Swap $(a_1,b_1)$ and $(a_2,b_2)$  with $(a_{\hat\imath},b_{\hat\imath})$ and $(a_{\hat\jmath},b_{\hat\jmath}),$ respectively.

For $3 \leq i \leq f,$   replace $ (a_i,b_i)$ with 
 $$(a_i',b_i')=(a_i,b_i) +u_i \, (a_1,b_1) + v_i \, (a_2,b_2),$$ 
 where $u_i$ and $v_i$ are solutions of the linear equations 
$$\ell_{1,2}u_i\equiv\ell_{2,i}(\text{mod~}p^b) \ \ \ \text{ and } \ \ \ \ell_{1,2}v_i\equiv-\ell_{1,i}(\text{mod~}p^b).$$
Such $u_i$ and $v_i$ always exist, since $\gcd(\ell_{1,2},p^b)$ divides $\ell_{2,i}$ and $\ell_{1,i}$. By doing this, we get a new set 
$$\mathcal{T}_1 = \{(a_1,b_1),(a_2,b_2),(a_3',b_3'),\ldots,(a_f',b_f')\}$$
such that $\chi(b_2 \cdot a_1-b_1 \cdot a_2)\neq1$, $\chi(b_j' \cdot a_1-b_1 \cdot a_j')=1$, and $\chi(b_j' \cdot a_2-b_2 \cdot a_j')=1$  for $j \in \{3,4,\ldots,f\}$. In other words, $(a_1,b_1)$ and $(a_2,b_2)$ form a hyperbolic pair. Since $\mathcal{T}_0$ is recoverable from $\mathcal{T}_1$, we see that $\mathcal{S} \cup \mathcal{T}_1$ also generates $\mathcal{C}$. Moreover, $\pi(\mathcal{T}_0)$ is recoverable from $\pi(\mathcal{T}_1) := \{\pi((a_1,b_1)),\pi((a_2,b_2)),\pi((a_3',b_3')),\ldots,\pi((a_f',b_f'))$, so $\pi(\mathcal{T}_1)$ is also a minimal generating set of $\mathcal{C}/(\mathcal{C} \cap \mathcal{C}^{\perp_\chi})$. In particular, $|\mathcal{T}_1| = |\pi(\mathcal{T}_1)| = \rank(\mathcal{C}/(\mathcal{C} \cap \mathcal{C}^{\perp_\chi}))$.

By repeatedly applying the above process to the generators $(a_3',b_3'),\ldots,(a_f',b_f'),$ we will eventually obtain a symplectic subset 
 $$\mathcal{T} = \{(v_1,w_1),(v_2,w_2),\ldots,(v_{c},w_{c}),(x_1,y_1),(x_2,y_2),\ldots, (x_{c},y_{c})\}$$ 
such that $\mathcal{S} \cup \mathcal{T}$ generates $\mathcal{C}$, the generators $(v_i,w_i)$ and $(x_i,y_i)$, $i=1,2,\ldots,c$, are hyperbolic pairs, and $\pi(\mathcal{T})$ is a minimal generating set of $\mathcal{C}/(\mathcal{C} \cap \mathcal{C}^{\perp_\chi})$. In particular, $|\mathcal{T}| = |\pi(\mathcal{T})| = \rank(\mathcal{C}/(\mathcal{C} \cap \mathcal{C}^{\perp_\chi}))$.

(b)\ Note that no generator in $\mathcal{T}$ can be obtained by taking $\mathbb{Z}_{p^b}$-linear combinations of other generators   in $\mathcal{S} \cup \mathcal{T}$. Indeed, if, say, $(v_1,w_1)$ were expressible as a linear combination of other generators in $\mathcal{S} \cup \mathcal{T}$, we would end up with $\chi(\langle (v_1,w_1) | (x_1,y_1) \rangle_{\mathrm{s}}) = 1$, which would contradict the fact that $(v_1,w_1)$ and $(x_1,y_1)$ form a hyperbolic pair. However, it is possible that some of the generators in $\mathcal{S}$ can be obtained by taking linear combinations of other generators in $\mathcal{S} \cup \mathcal{T}$. Iteratively removing such generators from $\mathcal{S}$,  we are left with an $\mathcal{S}_0 \subseteq \mathcal{S}$ such that $\mathcal{S}_0 \cup \mathcal{T}$ is a minimal generating set of $\mathcal{C}$. Observe that $\mathcal{S}_0 \cup \mathcal{T}$ is also a generating set in standard form.

(c)\ Suppose that $\mathcal{C}/(\mathcal{C} \cap \mathcal{C}^{\perp_\chi})$ is free as a module over $\mathbb{Z}_{p^b}$. By the argument in (b) above, it suffices to show that no generator in $\mathcal{S}$ can be obtained as a linear combination of other generators in $\mathcal{S} \cup \mathcal{T}$. To prove this, we use the fact that $\{\pi((v_1,w_1)),\pi((v_2,w_2)),$ $\ldots, \pi((v_{c},w_{c})),$ $\pi((x_1,y_1)),\pi((x_2,y_2)),\ldots, \pi((x_{c},y_{c}))\}$ is a linearly independent set over $\mathbb{Z}_{p^b}$ --- this holds because, by Proposition \ref{prop:free},  any minimal generating set of the free $\mathbb{Z}_{p^b}$-module $\mathcal{C}/(\mathcal{C} \cap \mathcal{C}^{\perp_\chi})$ is linearly independent.


Now, suppose that some $(v,w) \in \mathcal{S}$ can be obtained as a linear combination of other generators in $\mathcal{S} \cup \mathcal{T}$, i.e., 
\begin{equation}\label{eq:iso}
(v,w)=a_1(v_1,w_1)+a_2(v_2,w_2)+\cdots+a_c(v_c,w_c)+b_1(x_1,y_1)+b_2(x_2,y_2)+\cdots+b_c(x_c,y_c)+(f,g) 
\end{equation} 
for some $a_i,b_i \in \mathbb{Z}_{p^b}$, and some $(f,g) \in \mathcal{C} \cap \mathcal{C}^{\perp_\chi}$ that is a linear combination of elements in  $\mathcal{S} \setminus (v,w)$. Applying the projection map $\pi$ to \eqref{eq:iso}, we get 
$0=a_1\pi(v_1,w_1)+a_2\pi(v_2,w_2)+\cdots+a_c\pi(v_c,w_c)+b_1\pi(x_1,y_1)+b_2\pi(x_2,y_2)+\cdots+b_c\pi(x_c,y_c).$
Since  $\{\pi((v_1,w_1)),\pi((v_2,w_2)),$ $\ldots, \pi((v_{c},w_{c})),$ $\pi((x_1,y_1)),\pi((x_2,y_2)),\ldots, \pi((x_{c},y_{c}))\}$ is a linearly independent set over $\mathbb{Z}_{p^b},$ we have $a_i=b_i=0$ for all $i$. Plugging this back into \eqref{eq:iso}, we find that $(v,w) \in \mathcal{S}$ is expressible as a linear combination of elements in $\mathcal{S} \setminus (v,w)$, which contradicts the minimality of $\mathcal{S}$. We conclude that no generator in $\mathcal{S}$ can be obtained as a linear combination of other generators in $\mathcal{S} \cup \mathcal{T}$, and hence, that $\mathcal{S} \cup \mathcal{T}$ is a minimal generating set of $\mathcal{C}$.
\end{proof}

In the following example, we illustrate the steps described in the proof of Theorem~\ref{thm:stdform} to find, for an additive code, a minimal generating set in standard form.
\begin{ex}  Let $\mathcal{C}$ be an additive code over $\mathbb{Z}_{16}$ of length 8 with a minimal generating set 
$\{(a_1,b_1),(a_2,b_2), $ $(a_3,b_3),$ $(a_4,b_4),(a_5,b_5)\},$
 where $(a_1,b_1)=(1,3,0,0,0,0,0,0),$  $(a_2,b_2)=(0,0,0,0,1,1,0,0),$ $(a_3,b_3)=(0,0,1,0,2,  $ $2,$ $0,0),$ $(a_4,b_4)=(2,6,0,0,0,0,8,0)$ and $(a_5,b_5)=(0,0,0,1,0,0,0,0).$ The map $\chi: \mathbb{Z}_{16} \to \mathbb{C}^*$ defined by $\chi(z) = \zeta^z$, with $\zeta=\exp\bigl(\frac{\pi i}{8}\bigr)$, is a generating character of the ring $\mathbb{Z}_{16}$. 
 
 We first need to find minimal generating sets of $\mathcal{C}/(\mathcal{C} \cap \mathcal{C}^{\perp_\chi})$ and $\mathcal{C} \cap \mathcal{C}^{\perp_\chi}$ as $\mathbb{Z}_{16}$-modules. We begin by observing that $ \langle 4(a_1,b_1),4(a_2, $ $b_2), 2(a_3,b_3),$ $(a_5,b_5) \rangle \subseteq \mathcal{C} \cap \mathcal{C}^{\perp_\chi}$. We will argue that the reverse inclusion also holds. To see this, consider any $\alpha_1(a_1,b_1)+\alpha_2(a_2,b_2)+\alpha_3(a_3,b_3)+\alpha_4(a_4,b_4)+\alpha_5(a_5,b_5) \in \mathcal{C} \cap \mathcal{C}^{\perp_\chi}$. As $(a_5,b_5) \in  \mathcal{C} \cap \mathcal{C}^{\perp_\chi},$ we have $(c_1,d_1):=\alpha_1(a_1,b_1)+\alpha_2(a_2,b_2)+\alpha_3(a_3,b_3)+\alpha_4(a_4,b_4) \in \mathcal{C} \cap \mathcal{C}^{\perp_\chi}$.  This implies that $\chi(\langle (c_1,d_1) | (a_1,b_1) \rangle_{\mathrm{s}}) =\chi(\langle (c_1,d_1) | (a_2,b_2) \rangle_{\mathrm{s}}) =\chi(\langle (c_1,d_1) | (a_3,b_3) \rangle_{\mathrm{s}})=\chi(\langle (c_1,d_1) | (a_4,b_4) \rangle_{\mathrm{s}})=1$, which  gives 
 $4\alpha_2+8\alpha_3=0,$ $4\alpha_1+8\alpha_4=0,$ $8\alpha_1+8\alpha_4=0$ and $8\alpha_2+8\alpha_3=0.$ Thus, $\alpha_1,\alpha_2 \in \langle 4 \rangle$ and $\alpha_3,\alpha_4 \in \langle 2 \rangle.$  Consequently, $$\mathcal{C} \cap \mathcal{C}^{\perp_\chi} \subseteq \langle 4(a_1,b_1),4(a_2,b_2), 2(a_3,b_3),2(a_4,b_4),(a_5,b_5) \rangle= \langle 4(a_1,b_1),4(a_2,b_2), 2(a_3,b_3),(a_5,b_5) \rangle$$ 
the equality above arising from the fact that $2(a_4,b_4) = 4(a_1,b_1)$. We have thus shown that $ \mathcal{C} \cap \mathcal{C}^{\perp_\chi} = \langle 4(a_1,b_1),4(a_2,$ $b_2), $ $2(a_3,b_3),(a_5,b_5) \rangle$, meaning that $\{4(a_1,b_1),4(a_2,b_2), $ $2(a_3,b_3),(a_5,b_5) \}$
 is a generating set of $\mathcal{C} \cap \mathcal{C}^{\perp_\chi}$ as a $\mathbb{Z}_{16}$-module. It is straightforward to check that this is in fact a minimal generating set of $\mathcal{C} \cap \mathcal{C}^{\perp_\chi}$.
 
  Now, let $\pi : \mathcal{C}  \to \mathcal{C}/(\mathcal{C} \cap \mathcal{C}^{\perp_\chi})$ be the canonical projection map that takes $(v,w) \in \mathcal{C}$ to the coset $(v,w) + (\mathcal{C} \cap \mathcal{C}^{\perp_\chi})$. It can be verified that $\{\pi((a_1,b_1)),\pi((a_2,b_2)), \pi((a_3,b_3)),$ $\pi((a_4,b_4)) \}$ is a minimal generating set of $\mathcal{C}/(\mathcal{C} \cap \mathcal{C}^{\perp_\chi})$ as a  $\mathbb{Z}_{16}$-module. So, set (as in the proof of Theorem~\ref{thm:stdform}), $\mathcal{T}_0:=\{(a_1,b_1),(a_2,b_2),(a_3,b_3),(a_4,b_4)\}$.  
 With $\mathcal{S}:=\{4(a_1,b_1),4(a_2,b_2), $ $2(a_3,b_3),(a_5,b_5)\}$, we note (as argued in the proof of the theorem) that $\mathcal{S} \cup \mathcal{T}_0$ is a generating set of $\mathcal{C}$. 
 
 Following the proof of Theorem \ref{thm:stdform}(a), we will transform $\mathcal{T}_0$ into a symplectic subset. For $1 \leq i,j \leq 4,$ let $\chi(b_i \cdot a_j-b_j \cdot a_i)=\zeta^{\ell_{i,j}}.$ One easily computes $\ell_{1,2}=12,$ $\ell_{1,3}=8,$  $\ell_{1,4}=0,$ $\ell_{2,3}=0,$ $\ell_{2,4}=8$ and $\ell_{3,4}=8.$ Note that $\gcd(\ell_{1,2},16)=\min\{\gcd(\ell_{i,j},16) : \ell_{i,j}\neq 0 \text{~and~} 1 \leq i,j \leq 4 \}=4.$ For $i =3,4,$ set
 $$(a_i',b_i')=(a_i,b_i) +u_i \, (a_1,b_1) + v_i \, (a_2,b_2),$$ 
 where $u_i$ and $v_i$ are solutions of the linear equations 
$$12u_i\equiv\ell_{2,i}(\text{mod~}16) \ \ \ \text{ and } \ \ \ 12v_i\equiv-\ell_{1,i}(\text{mod~}16).$$ Solving these equations, we get $u_3=0,v_3=2$, $u_4=2$ and $v_4=0$, so that $(a'_3,b'_3)=(0,0,1,0,4,4,0,0)$ and $(a'_4,b'_4)=(4,12,0,0,0,0,8,0)$. For $i = 3,4$, replacing $(a_i,b_i)$ in $\mathcal{T}_0$ with $(a_i',b_i')$, we get a new set $\mathcal{T}_1 :=\{(a_1,b_1),(a_2,b_2),(a'_3,b'_3),(a'_4,b'_4)\}$. This $\mathcal{T}_1$ is a symplectic subset, and $\mathcal{S} \cup \mathcal{T}_1$ is a generating set of $\mathcal{C}$ in standard form. From this, we readily obtain that $\{(a_5,b_5)\}  \cup \mathcal{T}_1$ is a minimal generating set of $\mathcal{C}$ in standard form.  \qed \end{ex}
  

The following result is an easy consequence of Theorem~\ref{thm:stdform}.

\begin{cor}
For an additive code $\mathcal{C} \subseteq \mathcal{R}^{2n}$, we have $\rank(\mathcal{C}/(\mathcal{C} \cap \mathcal{C}^{\perp_\chi})) \ge \rank(\mathcal{C}) - \rank(\mathcal{C} \cap \mathcal{C}^{\perp_\chi})$. Equality holds if $\mathcal{C}/(\mathcal{C} \cap \mathcal{C}^{\perp_\chi})$ is free as a module over $\mathbb{Z}_{p^b}$.
\label{cor:ranks}
\end{cor}
\begin{proof}
Let $\mathcal{S}$ and $\mathcal{T}$ be as in the statement of Theorem~\ref{thm:stdform}. Since $\mathcal{S}$ is a minimal generating set of $\mathcal{C} \cap \mathcal{C}^{\perp_\chi}$, we have $|\mathcal{S}| = \rank(\mathcal{C} \cap \mathcal{C}^{\perp_\chi})$. Hence, by part (a) of the theorem, we have
$$
\rank(\mathcal{C}) \le |\mathcal{S} \cup \mathcal{T}| = |\mathcal{S}| + |\mathcal{T}| = \rank(\mathcal{C} \cap \mathcal{C}^{\perp_\chi}) + \rank(\mathcal{C}/(\mathcal{C} \cap \mathcal{C}^{\perp_\chi})).
$$
If $\mathcal{C}/(\mathcal{C} \cap \mathcal{C}^{\perp_\chi})$ is free, then by part (c) of the theorem, the first inequality above is in fact an equality. 
\end{proof}

In the remainder of this subsection, we establish some useful structural properties of \emph{any} standard-form generating set of an additive code $\mathcal{C} \subseteq \mathcal{R}^{2n}$, which we will use in the sequel.

Recall that for each $r \in \mathcal{R}$, we have $\chi(r)\in \langle \zeta \rangle=\{1,\zeta,\zeta^2,\ldots,\zeta^{p^b-1}\}$, where $\zeta = \exp\bigl(\frac{2 \pi i}{p^b}\bigr)$ is a primitive $p^b$-th root of unity. The subgroups of $\langle \zeta \rangle$ are $H_t := \langle \zeta^{p^{b-t}} \rangle = \{1,\zeta^{p^{b-t}},\zeta^{2p^{b-t}},\ldots,\zeta^{(p^{t}-1)p^{b-t}}\}$, for $t = 0,1,\ldots,b$.  The subgroup $H_0$ is simply $\langle 1 \rangle = \{1\}$, and the indexing is chosen so that $H_0 \le H_1 \leq \cdots \le H_b = \langle \zeta \rangle$. For a subset $\mathcal{C} \subseteq \mathcal{R}^{2n}$ and $t = 0,1,\ldots,b$, define
\begin{equation}
\mathcal{C}^{\perp_{\chi,t}} = \{(v',w') \in \mathcal{R}^{2n}: \chi(\langle(v,w) \mid (v',w')\rangle_{\mathrm{s}}) \in H_t \text{ for all } (v,w) \in \mathcal{C}\}.
\label{def:perp_t}
\end{equation}
Note that $\mathcal{C}^{\perp_{\chi,t}}$ is an additive code. Clearly, we have the nested sequence $\mathcal{C}^{\perp_\chi} = \mathcal{C}^{\perp_{\chi,0}} \subseteq \mathcal{C}^{\perp_{\chi,1}} \subseteq \cdots \subseteq \mathcal{C}^{\perp_{\chi,b}} = \mathcal{R}^{2n}$.

\begin{prop}\label{prop:count} 
Let $\mathcal{C} \subseteq \mathcal{R}^{2n}$ be an additive code, and $\mathcal{G}$ any generating set of $\mathcal{C}$ in standard form. Let $\mathcal{S}$ and $\mathcal{T}$ denote the sets of isotropic generators and hyperbolic pairs, respectively, in $\mathcal{G}$. The following then hold:
\begin{itemize}
\item[(a)] For $t = 0,1,\ldots,b$, we have $\left|\mathcal{T} \cap \mathcal{C}^{\perp_{\chi,t}}\right| = \rank(\mathcal{C} / (\mathcal{C} \cap \mathcal{C}^{\perp_{\chi}})) - \rank(\mathcal{C} / (\mathcal{C} \cap \mathcal{C}^{\perp_{\chi,t}}))$. In particular\footnote{See Footnote~\ref{fn:hyppairs}.}, $\left|\mathcal{T}\right| = \rank(\mathcal{C} / (\mathcal{C} \cap \mathcal{C}^{\perp_{\chi}}))$.
\item[(b)] There is a subset $\mathcal{S}_0 \subseteq \mathcal{S}$ such that $\mathcal{S}_0 \cup \mathcal{T}$ is a minimal generating set of $\mathcal{C}$.
\end{itemize}
\end{prop}
\begin{proof}
(a)\ Let $t$ be any integer in $\{0,1,\ldots,b\}$, and set $H = H_t$. Let $\mathcal{V} = \mathcal{T} \setminus \mathcal{C}^{\perp_{\chi,t}}$ be the set of hyperbolic generators\footnote{Of course, $\mathcal{V}$ also depends on $t$, but we suppress this dependence from the notation for simplicity.} that do \emph{not} belong to $\mathcal{C}^{\perp_{\chi,t}}$. We will first show that $|\mathcal{V}| = \rank(\mathcal{C} / (\mathcal{C} \cap \mathcal{C}^{\perp_{\chi,t}}))$.

Let $(v_i,w_i)$ and $(x_i,y_i)$, $i=1,2,\ldots,c$, be the hyperbolic pairs that make up $\mathcal{T}$. Observe that if $\chi( \langle (v_i,w_i) \mid (x_i,y_i) \rangle_{\mathrm{s}}) \in H$, then by the way hyperbolic pairs are defined, it follows that both $(v_i,w_i)$ and $(x_i,y_i)$ are in $\mathcal{C}^{\perp_{\chi,t}}$. So, the generators constituting such hyperbolic pairs cannot be in $\mathcal{V}$. On the other hand, if $\chi( \langle (v_i,w_i) \mid (x_i,y_i) \rangle_{\mathrm{s}}) \notin H$, then $(v_i,w_i)$ and $(x_i,y_i)$ are both in $\mathcal{V}$. Thus, we may assume that $\mathcal{V}$ consists of the hyperbolic pairs $(v_i,w_i)$ and $(x_i,y_i)$, $i=1,2,\ldots,h$, and hence, $\chi( \langle (v_i,w_i) \mid (x_i,y_i) \rangle_{\mathrm{s}}) \notin H$ for these pairs.

Let $\pi : \mathcal{C}  \to \mathcal{C}/(\mathcal{C} \cap \mathcal{C}^{\perp_{\chi,t}})$ be the canonical projection map that takes $(v,w) \in \mathcal{C}$ to the coset $[v,w] := (v,w) + (\mathcal{C} \cap \mathcal{C}^{\perp_{\chi,t}})$. Since $\pi$ maps any generator in $\mathcal{G} \setminus \mathcal{V}$ to $\mathcal{C} \cap \mathcal{C}^{\perp_{\chi,t}}$, i.e., to the coset $[0,0]$, it must be the case that $\pi(\mathcal{V}) := \{\pi((v_1,w_1)), \pi((x_1,y_1)), \ldots, \pi((v_h,w_h)), \pi((x_h,y_h))\}$ is a generating set for the quotient $\mathcal{C} / (\mathcal{C} \cap \mathcal{C}^{\perp_{\chi,t}})$, again viewed as a $\mathbb{Z}_{p^b}$-module. We claim that for no generator in $\mathcal{V}$ is it the case that its image under $\pi$ is expressible as a $\mathbb{Z}_{p^b}$-linear combination of the $\pi$-images of other generators in $\mathcal{V}$. It then follows that the restriction of $\pi$ to $\mathcal{V}$ is one-to-one, so that $|\mathcal{V}| = |\pi(\mathcal{V})|$, and that $\pi(\mathcal{V})$ is a minimal generating set of $\mathcal{C} / (\mathcal{C} \cap \mathcal{C}^{\perp_{\chi,t}})$, so that $|\pi(\mathcal{V})| = \rank(\mathcal{C} / (\mathcal{C} \cap \mathcal{C}^{\perp_{\chi,t}}))$. Thus, proving the claim suffices to prove that $|\mathcal{V}| = \rank(\mathcal{C} / (\mathcal{C} \cap \mathcal{C}^{\perp_{\chi,t}}))$.

We prove the claim by way of contradiction. Assume, to the contrary, that $\pi((v_1,w_1))$ is expressible as 
$$
\pi((v_1,w_1)) = \eta_1 \cdot \pi((x_1,y_1)) + \sum_{i=2}^h [\theta_i \cdot \pi((v_i,w_i)) + \eta_i \cdot \pi((x_i,y_i))]
$$
for some $\eta_i$'s and $\theta_i$'s from $\mathbb{Z}_{p^b}$. Then, $\pi\bigl((v_1,w_1) - \eta_1 \cdot (x_1,y_1) - \sum_{i=2}^h [\theta_i \cdot (v_i,w_i) + \eta_i \cdot (x_i,y_i)]\bigr) = [0,0]$, or equivalently, $(v_1,w_1) - \eta_1 \cdot (x_1,y_1) - \sum_{i=2}^h [\theta_i \cdot (v_i,w_i) + \eta_i \cdot (x_i,y_i)] =: (v',w')$ belongs to $\mathcal{C} \cap \mathcal{C}^{\perp_{\chi,t}}$.

Consider the quantity $\kappa := \chi( \langle (v',w') \mid (x_1,y_1) \rangle_{\mathrm{s}})$. Since $(v',w') \in \mathcal{C}^{\perp_{\chi,t}}$ and $(x_1,y_1) \in \mathcal{C}$, we must have $\kappa \in H$. On the other hand, 
since $(v_1,w_1)$ and $(x_1,y_1)$ form a hyperbolic pair,  $\kappa$ reduces to $\chi( \langle (v_1,w_1) \mid (x_1,y_1) \rangle_{\mathrm{s}})$, and moreover, since $(v_1,w_1)$ and $(x_1,y_1)$ are in $\mathcal{V}$, we have $\kappa = \chi( \langle (v_1,w_1) \mid (x_1,y_1) \rangle_{\mathrm{s}}) \notin H$. This is the desired contradiction that proves the claim, and hence, the fact that $|\mathcal{V}| = \rank(\mathcal{C} / (\mathcal{C} \cap \mathcal{C}^{\perp_{\chi,t}}))$.

Now, observe that in the special case of $t = 0$, we have $\mathcal{V}$ being all of $\mathcal{T}$, and consequently, $\left|\mathcal{T}\right| = \rank(\mathcal{C} / (\mathcal{C} \cap \mathcal{C}^{\perp_{\chi}}))$. More generally, for any $t$, we have that $\mathcal{T}$ is the disjoint union of $\mathcal{V}$ and $\mathcal{T} \cap \mathcal{C}^{\perp_{\chi,t}}$; hence, 
\begin{align*}
\left|\mathcal{T} \cap \mathcal{C}^{\perp_{\chi,t}}\right| \ &= \ \left|\mathcal{T}\right| - |\mathcal{V}| \\
&= \ \rank(\mathcal{C} / (\mathcal{C} \cap \mathcal{C}^{\perp_{\chi}})) - \rank(\mathcal{C} / (\mathcal{C} \cap \mathcal{C}^{\perp_{\chi,t}})),
\end{align*}
which completes the proof of part (a) of the proposition.

(b)\ The argument here is identical to that for part (b) of Theorem~\ref{thm:stdform}.
\end{proof}

We will make use of this result in the proof of Theorem~\ref{thm:2step} that is to follow.

\subsection{Constructing a $\chi$-Self-Orthogonal Extension} \label{sec:chi-self-ortho}
 
For the second step in the Two-Step Construction of a $\chi$-self-orthogonal extension of an additive code $\mathcal{C}$, we need a suitable symplectic subset. Such a subset can always be obtained from the standard-form generating set of $\mathcal{C}$ guaranteed by Theorem~\ref{thm:stdform}(a), as we explain in the proof of the first part of the theorem below.
  
  \begin{thm}\label{thm:2step}  For any additive code $\mathcal{C} \subseteq \mathcal{R}^{2n}$, the following statements hold:
  \begin{itemize}
  \item[(a)] It is always possible to construct a $\chi$-self-orthogonal extension of $\mathcal{C}$, with entanglement degree equal to $\frac12 \rank(\mathcal{C}/(\mathcal{C} \cap \mathcal{C}^{\perp_\chi}))$, using the Two-Step Construction. 
  \item[(b)]  For any $\mathcal{C}'$ that is a $\chi$-self-orthogonal extension of $\mathcal{C}$ obtained using the Two-Step Construction, we have $|\mathcal{C}|  \leq |\mathcal{C}'| \leq |\mathcal{C}| \cdot p^{\sum_{t=1}^{b-1} (b-t)  \rho_t}$, with $\rho_t:=\rank(\mathcal{C}/(\mathcal{C} \cap \mathcal{C}^{\perp_{\chi,t-1}}))-\rank(\mathcal{C}/(\mathcal{C} \cap \mathcal{C}^{\perp_{\chi,t}}))$. Additionally, if either $\mathcal{C}$ or $\mathcal{C}/(\mathcal{C} \cap \mathcal{C}^{\perp_\chi})$ is a free module over $\mathbb{Z}_{p^b},$ then $|\mathcal{C}'| =|\mathcal{C}|$.
  \end{itemize}
\end{thm}

\begin{proof}
(a)\ For Step~(1) of the Two-Step Construction, we appeal to Theorem~\ref{thm:stdform}(a) to get a standard-form generating set of $\mathcal{C}$. Let 
\begin{equation}
\mathcal{G}=\{(v_1,w_1),(v_2,w_2),\ldots, (v_{c+d},w_{c+d}),(x_1,y_1),(x_2,y_2),\ldots, (x_{c},y_{c})\}
\label{eq:stdformG}
\end{equation}
be this generating set, where, for $i = 1,2\ldots,c$, the generators $(v_i,w_i)$ and $(x_i,y_i)$ form hyperbolic pairs, and the generators $(v_i,w_i)$, $i = c+1,\ldots,c+d$, are isotropic. Moreover, $2c = \rank(\mathcal{C}/(\mathcal{C} \cap \mathcal{C}^{\perp_\chi}))$ by Proposition~\ref{prop:count}. 

For $i = 1,2,\ldots,c$, set $a_{i1} := (-\gamma_i e_i,0)$ and $a_{i2} := (0,e_i)$, where $\gamma_i = \langle(v_i,w_i) \mid (x_i,y_i)\rangle_{\mathrm{s}}$, $e_i = (0,\ldots,0,1,0,$ $\ldots,$ $0) \in \mathcal{R}^c$, the $1$ occurring in the $i$th coordinate, and $0$ is the zero element of $\mathcal{R}^c$. It is then easy to verify that $\{a_{11},a_{12},a_{21},a_{22},\ldots,a_{c1},a_{c2}\} \subset \mathcal{R}^{2c}$ is a symplectic subset such that $\chi(\langle (v_i,w_i) \mid (x_i,y_i) \rangle_{\mathrm{s}}) = \chi(\langle a_{i1} \mid a_{i2} \rangle_{\mathrm{s}})$ for $i=1,2,\ldots,c$, as required by Step~(2) of the Two-Step Construction.

Hence, by Proposition~\ref{prop:ext}, $\mathcal{C}$ has a $\chi$-self-orthogonal extension $\mathcal{C}' \subseteq \mathcal{R}^{2(n+c)}$. Indeed, $\mathcal{C}'$ is the additive code generated by $\mathcal{G}' = \{u'_{11},u'_{12},\ldots,u'_{c1},u'_{c2},z'_1,\ldots,z'_d\}$, with 
$$
 u_{i1}' := (v_i,-\gamma_i e_i,w_i,0) \text{ and }  u'_{i2} = (x_i,0,y_i,e_i)  \text{ for } i = 1,2,\ldots,c,
 $$
 and  $z'_j = (v_{c+j},0,w_{c+j},0) \text{ for } j=1,2\ldots,d$, where $0$ is the zero element of $\mathcal{R}^c$. 

(b)\  Let $\mathcal{C}'$ be any $\chi$-self-orthogonal extension of $\mathcal{C}$ obtained using the Two-Step Construction. Let $\mathcal{G}$ as in \eqref{eq:stdformG} be the standard-form generating set of $\mathcal{C}$ used in the construction; by virtue of Proposition~\ref{prop:count}(b), we may assume, without loss of generality, that $\mathcal{G}$ is a minimal generating set of $\mathcal{C}$. We will use $\mathcal{T}$ to denote the set of all hyperbolic pairs in $\mathcal{G}$, i.e., $\mathcal{T} = \{(v_1,w_1),(v_2,w_2),\ldots, (v_{c},w_{c}),(x_1,y_1),(x_2,y_2),\ldots, (x_{c},y_{c})\}$.

Now, let
$$
\mathcal{G}'=\{(v_1,m_1,w_1,n_1),\ldots,(v_{c},m_{c},w_{c},n_{c}),(v_{c+1},0,w_{c+1},0),\ldots, (v_{c+d},0,w_{c+d},0),(x_1,r_1,y_1,s_1),\ldots, (x_{c},r_c,y_{c},s_c)\}
$$
be the generating set of $\mathcal{C}'$ that extends $\mathcal{G}$ in the manner of \eqref{eq:ext.1}--\eqref{eq:ext.3}.
Further, let $\varphi$ denote the restriction of $\mathcal{C}'$ onto the coordinates of $\mathcal{C}$, i.e., $\varphi$ maps $(v,f,w,g) \in \mathcal{C}'$ to $(v,w) \in \mathcal{C}$. Finally, let $\mathcal{C}_0$ denote the subcode of $\mathcal{C}'$ consisting of codewords of the form $(0,f,0,g)$, for some $f,g \in \mathcal{R}^c$. Then, $\ker \varphi = \mathcal{C}_0$, and hence, $|\mathcal{C}'| = |\mathcal{C}| \cdot |\mathcal{C}_0|$. Since $(0,0,0,0) \in \mathcal{C}_0$, we have $|\mathcal{C}_0| \ge 1$, so that $|\mathcal{C}'| \ge |\mathcal{C}|$. We next work towards an upper bound on $|\mathcal{C}_0|$.


From the definitions of the generating sets $\mathcal{G}$ and $\mathcal{G}'$, we see that $|\mathcal{C}_0|$ equals the number of distinct $(f,g) \in \mathcal{R}^{2c}$ that can be obtained as 
\begin{equation}
\sum_{i=1}^c [\xi_{i,1} \cdot (m_i,n_i) + \xi_{i,2} \cdot (r_i,s_i)] + \sum_{j=1}^d \eta_j \cdot (0,0)
\label{eq:fg}
\end{equation}
for some  $\xi_{i,1},\xi_{i,2},\eta_j \in \mathbb{Z}_{p^b}$ such that 
\begin{equation}
\sum_{i=1}^c [\xi_{i,1} \cdot (v_i,w_i) + \xi_{i,2} \cdot (x_i,y_i)] + \sum_{j=1}^d \eta_j \cdot (v_{c+j},w_{c+j}) \ = \ (0,0).
\label{eq:C0}
\end{equation}
The $\eta_j$ terms do not contribute to the overall sum in \eqref{eq:fg}, so can be dropped from there (but not from \eqref{eq:C0}).

Fix some $\ell \in \{1,2,\ldots,c\}$. Taking the symplectic product with $(x_{\ell},y_{\ell})$ on both sides of \eqref{eq:C0}, and applying the generating character $\chi$, we obtain ${\chi(\gamma_\ell)}^{\xi_{\ell,1}} = 1$, where we recall that $\gamma_\ell = \langle(v_\ell,w_\ell) \mid (x_\ell,y_\ell)\rangle_{\mathrm{s}}$. Similarly, taking the symplectic product with $(v_{\ell},w_{\ell})$ on both sides of \eqref{eq:C0}, and applying the generating character $\chi$, we obtain ${\chi(\gamma_\ell)}^{\xi_{\ell,2}} = 1$. Note that since $(v_{\ell},w_{\ell})$ and $(x_{\ell},y_{\ell})$ form a hyperbolic pair, we have $\chi(\gamma_\ell) \neq 1$. Let $t$ be the least positive integer such that $\chi(\gamma_\ell) \in H_t := \langle \zeta^{p^{b-t}} \rangle$. This means that $\chi(\gamma_\ell)$ is of the form $\exp\bigl(\frac{2\pi i \, z}{p^b}\bigr)$ with $\gcd(z,p^b) = p^{b-t}$. Then, for ${\chi(\gamma_\ell)}^{\xi_{\ell,1}} = {\chi(\gamma_\ell)}^{\xi_{\ell,2}} = 1$, we must have $\xi_{\ell,1}$ and $\xi_{\ell,2}$ being multiples of $p^t$. In particular, if $t=b$, then $\xi_{\ell,1}=\xi_{\ell,2}=0$ (in $\mathbb{Z}_{p^b}$). From all of this, we infer that whenever \eqref{eq:C0} holds, the expression in \eqref{eq:fg} is in fact of the form 
\begin{equation*}
\sum_{t = 1}^{b-1} \sum_{\ell \in \mathcal{I}_t}  [\xi_{\ell,1} \cdot (m_\ell,n_\ell) + \xi_{\ell,2} \cdot (r_\ell,s_\ell)],
\end{equation*}
where $\mathcal{I}_t = \{i : \text{$t$ is the least positive integer such that $\chi(\gamma_i) \in H_t$}\}$, and $\xi_{\ell,1},\xi_{\ell,2} \in \mathbb{Z}_{p^b}$ are multiples of $p^t$. In particular, for $\ell \in \mathcal{I}_t$, we can have at most $p^{b-t}$ possible choices for each of $\xi_{\ell,1}$ and $\xi_{\ell,2}$. Consequently, 
$$
|\mathcal{C}_0| \le \prod_{t=1}^{b-1} \prod_{\ell \in \mathcal{I}_t} p^{2(b-t)} = p^{\sum_{t=1}^{b-1} 2(b-t)|\mathcal{I}_t|}.
$$

Now, observe that $\ell \in \mathcal{I}_t$ iff $(v_{\ell},w_{\ell})$ and $(x_\ell,y_\ell)$ are both in $\mathcal{C}^{\perp_{\chi,t}} \setminus \mathcal{C}^{\perp_{\chi,t-1}}$ --- this can be inferred from the definition of $\mathcal{C}^{\perp_{\chi,t}}$ in \eqref{def:perp_t}. Hence, $2|\mathcal{I}_t| = \left|\mathcal{T} \cap \mathcal{C}^{\perp_{\chi,t}} \right| - \left|\mathcal{T} \cap \mathcal{C}^{\perp_{\chi,t-1}} \right|$, which from Proposition~\ref{prop:count}, equals $\rho_t := \rank(\mathcal{C}/(\mathcal{C} \cap \mathcal{C}^{\perp_{\chi,t-1}}))-\rank(\mathcal{C}/(\mathcal{C} \cap \mathcal{C}^{\perp_{\chi,t}}))$. Thus, we conclude that $|\mathcal{C}_0| \le p^{\sum_{t=1}^{b-1} (b-t) \rho_t}$, and hence, $|\mathcal{C}'| = |\mathcal{C}|\cdot |\mathcal{C}_0|  \le |\mathcal{C}| \cdot p^{\sum_{t=1}^{b-1} (b-t) \rho_t}$.

Finally, let $\mathcal{C}/(\mathcal{C} \cap \mathcal{C}^{\perp_\chi})$ be free as a $\mathbb{Z}_{p^b}$-module. Further, let $\pi : \mathcal{C}  \to \mathcal{C}/(\mathcal{C} \cap \mathcal{C}^{\perp_\chi})$ be the canonical projection map. Clearly, $\pi(\mathcal{T}) := \{\pi((v_1,w_1)),\pi((v_2,w_2)),$ $\ldots, \pi((v_{c},w_{c})),\pi((x_1,y_1)),$ $\pi((x_2,y_2)),\ldots, \pi((x_{c},y_{c}))\}$ is a generating set of $\mathcal{C}/(\mathcal{C} \cap \mathcal{C}^{\perp_\chi})$ as a $\mathbb{Z}_{p^b}$-module. Since $2c= \rank(\mathcal{C}/(\mathcal{C} \cap \mathcal{C}^{\perp_\chi})),$ we see that $\pi(\mathcal{T})$ is in fact a minimal generating set of $\mathcal{C}/(\mathcal{C} \cap \mathcal{C}^{\perp_\chi})$. 
Recall  that, by Proposition \ref{prop:free}, for a finite  free  module over $\mathbb{Z}_{p^b}$, any minimal generating set  is linearly independent over $\mathbb{Z}_{p^b}.$ This implies that $\pi(\mathcal{T})$ is a linearly independent set over $\mathbb{Z}_{p^b}.$ Applying $\pi$ to both sides of \eqref{eq:C0}, we see that \eqref{eq:C0} holds only if all the coefficients $\xi_{i,1},\xi_{i,2}$ are equal to $0$. 

On the other hand, if $\mathcal{C}$ is free as a  $\mathbb{Z}_{p^b}$-module, then, using Proposition \ref{prop:free}, $\mathcal{G},$ being a minimal generating set, is linearly independent over $\mathbb{Z}_{p^b}.$  Thus \eqref{eq:C0} holds only if all the coefficients $\xi_{i,1},\xi_{i,2},\eta_j$ are equal to $0$. Thus, if either $\mathcal{C}$ or $\mathcal{C}/(\mathcal{C} \cap \mathcal{C}^{\perp_\chi})$ is a free module over $\mathbb{Z}_{p^b},$ then $\mathcal{C}_0$ consists of only the $(0,0,0,0)$ codeword, so that $|\mathcal{C}'|=|\mathcal{C}|\cdot |\mathcal{C}_0| = |\mathcal{C}|$.  This completes the proof of the theorem.
\end{proof}

%

\subsection{Constructing an EAQECC from a $\chi$-Self-Orthogonal Extension}
It only remains to provide a means of constructing an EAQECC from a $\chi$-self-orthogonal extension of $\mathcal{C}$ obtained via the Two-Step Construction. This is done in the theorem below, which also specifies the dimension and minimum distance of the resulting EAQECC.
 
\begin{thm}\label{thm:eaqecc}
Let $\mathcal{C} \subseteq \mathcal{R}^{2n}$ be an additive code over $\mathcal R$,  and let $\mathcal{C}' \subseteq \mathcal{R}^{2(n+c)}$ be a $\chi$-self-orthogonal extension of $\mathcal{C}$ with entanglement degree $c$, obtained using the Two-Step Construction. Then, there  exists an $((n,q^{n+c}/{|\mathcal{C}'|} ,D;c))$ EAQECC,  where
\begin{equation*}D~=~\left\{\begin{array}{ll}
 d_{\mathrm{s}}(\mathcal{C}^{\perp_\chi}) & \text{~~if~~} \mathcal{C}^{\perp_\chi}\subseteq \mathcal{C}\, ;\\
d_{\mathrm{s}}(\mathcal{C}^{\perp_\chi}\setminus \mathcal{C} )  & \text{~~otherwise} \,.
\end{array}\right. \vspace{-1mm}\end{equation*}
\end{thm}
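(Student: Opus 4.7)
The plan is to first realize the EAQECC as a quantum stabilizer code on $n+c$ qudits built from $\mathcal{C}'$, then compute the minimum distance by characterizing the undetectable errors on the channel-accessible qudits. Since $\mathcal{C}'$ is $\chi$-self-orthogonal, Lemma \ref{comm} ensures that for any $(a,b),(a',b') \in \mathcal{C}'$ the operators $X(a)Z(b)$ and $X(a')Z(b')$ commute. Lifting each generator of $\mathcal{C}'$ in the explicit set $\mathcal{G}'$ from the proof of Theorem \ref{tEAC} to its corresponding Pauli operator, and taking the group they generate, produces an abelian subgroup $\mathcal{S}' \le \mathcal{P}_{n+c}(\mathcal{R})$ with $\Psi(\mathcal{S}') = \mathcal{C}'$. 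Following \cite{luerssen} the lifts can be arranged so that $\mathcal{S}' \cap \ker \Psi = \{I\}$, giving $|\mathcal{S}'| = |\mathcal{C}'|$; then Theorem \ref{dim} produces a stabilizer code $\mathcal{Q} := \mathcal{Q}(\mathcal{S}')$ of dimension $q^{n+c}/|\mathcal{C}'|$, which I then interpret as the claimed EAQECC by designating the last $c$ qudits as the noiseless receiver-side entanglement qudits.

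The technical bridge between the distance of this code and the symplectic dual of $\mathcal{C}$ is the following identity: for any $(a,b) \in \mathcal{R}^{2n}$ and any $(v,f,w,g) \in \mathcal{R}^{2(n+c)}$,
$$\langle (a,0,b,0) \,|\, (v,f,w,g)\rangle_{\mathrm{s}} \ =\ bv - wa \ =\ \langle (a,b) \,|\, (v,w)\rangle_{\mathrm{s}},$$
since the zero coordinates annihilate any contribution from $f,g$. Combined with the fact that the coordinate projection $(v,f,w,g) \mapsto (v,w)$ carries $\mathcal{C}'$ surjectively onto $\mathcal{C}$, this immediately yields the equivalence $(a,0,b,0) \in (\mathcal{C}')^{\perp_\chi} \Leftrightarrow (a,b) \in \mathcal{C}^{\perp_\chi}$.

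With the code and the bridge in place, I would analyze detectability. In the EAQECC setting an error $E = \omega^\ell X(a,0)Z(b,0)$ is undetectable precisely when it acts as a non-trivial logical operator, meaning $E$ lies in the centralizer of $\mathcal{S}'$ but is not a product of a stabilizer and an operator of the form $X(0,f)Z(0,g)$ supported entirely on the noiseless entanglement qudits. Centralizer membership translates, via Lemma \ref{comm} and the bridge identity, to $(a,b) \in \mathcal{C}^{\perp_\chi}$. Representability as a stabilizer times an entanglement-qudit operator amounts to the existence of $f,g \in \mathcal{R}^c$ with $(a,f,b,g) \in \mathcal{C}'$, which by surjectivity of the projection reduces to $(a,b) \in \mathcal{C}$. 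Hence the undetectable errors correspond precisely to the nonzero $(a,b) \in \mathcal{C}^{\perp_\chi} \setminus \mathcal{C}$.

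The two cases of the theorem then drop out. If $\mathcal{C}^{\perp_\chi} \not\subseteq \mathcal{C}$, then $D$ is the minimum symplectic weight over $\mathcal{C}^{\perp_\chi} \setminus \mathcal{C}$, which is $d_{\mathrm{s}}(\mathcal{C}^{\perp_\chi} \setminus \mathcal{C})$ by definition. If $\mathcal{C}^{\perp_\chi} \subseteq \mathcal{C}$, every nonzero error is detectable, and the paper's alternative definition from Section \ref{sec:eaqecc} kicks in: $D$ is the smallest weight for which the orthogonality $\langle u \,|\, E \,|\, v\rangle = 0$ fails on $\mathcal{Q}$, which via the stabilizer criterion happens exactly when $E$ commutes with $\mathcal{S}'$, i.e.\ when $(a,b) \in \mathcal{C}^{\perp_\chi}$, giving $D = d_{\mathrm{s}}(\mathcal{C}^{\perp_\chi})$. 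The main subtlety I anticipate is justifying the EAQECC detectability criterion itself, specifically that elements of $\mathcal{S}' \cdot \{X(0,f)Z(0,g) : f,g \in \mathcal{R}^c\}$ act trivially on the logical subsystem because their entanglement-qudit factor can be absorbed by the receiver; once that equivalence is accepted, the remainder is direct bookkeeping via the bridge identity.
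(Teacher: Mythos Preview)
Your outline reaches the correct formula, but the detectability criterion you build around ``stabilizer times an entanglement-qudit operator $X(0,f)Z(0,g)$'' is both unnecessary and not what the paper actually defines. In Section~\ref{sec:eaqecc} the EAQECC is \emph{literally} the stabilizer code $\mathcal{Q}(\mathcal{S}')$ on $n+c$ qudits, with the error model restricted to operators of the form $X(a,0)Z(b,0)$; detectability is the ordinary stabilizer-code condition $\langle u\,|\,E\,|\,v\rangle = \lambda_E\langle u\,|\,v\rangle$. There is no receiver-side absorption in the definition, so the subtlety you flag need not be confronted at all. The paper instead uses the standard trichotomy: if $E\notin C(\mathcal{S}')$ then $\langle u|E|v\rangle=0$; if $E\in\mathcal{S}'$ (up to phase) then $E$ acts as a scalar; and if $E\in C(\mathcal{S}')\setminus\mathcal{S}'$ then $E$ is \emph{undetectable}, the last being proved by a short projection-operator argument (enlarge $\mathcal{S}'$ by $\lambda_E^{-1}E$ and compare dimensions via Theorem~\ref{dim}).

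What makes this work is a fact your surjectivity argument obscures: for $E=X(a,0)Z(b,0)$ with $(a,b)\in\mathcal{C}^{\perp_\chi}$, one has $(a,0,b,0)\in\mathcal{C}'$ if and only if $(a,b)\in\mathcal{C}$. The ``only if'' is puncturing; the ``if'' uses that $(a,b)$ then lies in $\mathcal{C}\cap\mathcal{C}^{\perp_\chi}$, which in the explicit $\mathcal{G}'$ of Theorem~\ref{tEAC} is generated by the isotropic generators $z_j'=(v_{c+j},0,w_{c+j},0)$ padded with zeros. This is why your criterion and the paper's coincide on the relevant errors --- surjectivity alone only gives \emph{some} lift $(a,f,b,g)$, not the zero-padded one, so your route silently relies on the same structural fact. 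Once you use the standard $C(\mathcal{S}')\setminus\mathcal{S}'$ criterion and this lifting observation, the two cases for $D$ follow exactly as you wrote.
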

\begin{proof} From the given $\chi$-self-orthogonal extension, $\mathcal{C'}$, of $\mathcal{C}$, we will first construct a stabilizer group $\mathcal{A} \le \mathcal{P}_{n+c}(\mathcal{R})$ of size equal to $|\mathcal{C}'|$. This is done by adapting the proof of the ``$\Leftarrow$'' part of Theorem~3.12 in \cite{luerssen}.

We start by defining
$$ \mathcal{W}:=\{ \omega^\ell X(v)Z(w) : 0 \leq \ell \leq N-1, (v,w) \in \mathcal{C}'\} \subseteq \mathcal{P}_{n+c}(\mathcal{R}).$$
Note that $\mathcal{W}$ is an abelian subgroup of $\mathcal{P}_{n+c}(\mathcal{R}).$ 
Let $\xi$ be a nontrivial character of $(\mathcal{W},\cdot)$ such that $\xi(\omega^\ell I)=\omega^\ell$ for $0 \leq \ell < N.$ (A character with this property does exist as any character of a subgroup $\langle \omega I \rangle$ of the finite abelian group $\mathcal{W}$  can be
extended to a character of $\mathcal{W}$.) 
Now, define
$$ \mathcal{A}:=\{ \xi(\chi(vw)X(-v)Z(-w))X(v)Z(w) : (v,w) \in \mathcal{C}'\} \subseteq \mathcal{P}_{n+c}(\mathcal{R}).$$
Following the proof of the ``$\Leftarrow$'' part of Theorem~3.12 in \cite{luerssen}, it can be verified that $\mathcal{A}$ is an abelian subgroup of $\mathcal{P}_{n+c}(\mathcal{R})$ such that $\mathcal{A} \cap \text{ker } \Psi= \{I_{q{^{(n+c)}}}\}$. Thus, $\mathcal{A}$ is a stabilizer group of size $|\mathcal{A}| = |\mathcal{C}'|$.


 
  By Theorem \ref{dim}, the quantum code $\mathcal{Q}(\mathcal{A}) \subseteq \mathbb{C}^{q^{n+c}}$ has dimension equal to $q^{n+c}/|\mathcal{A}|.$ This proves the existence of an $((n,q^{n+c}/|\mathcal{C}'|; c))$ EAQECC over $\mathcal{R}.$

We now proceed to determine the minimum distance $D$ of the EAQECC. Let $C(\mathcal{A})$  be a centralizer of $\mathcal{A}$ in $\mathcal{P}_{n+c}(\mathcal{R}),$ i.e., $C(\mathcal{A})$ is a set of all the elements of  $\mathcal{P}_{n+c}(\mathcal{R})$ which commute with all the elements of $\mathcal{A}.$ Further, let $E=\omega^{\ell} X(a,a')Z(b,b') \in \mathcal{P}_{n+c}(\mathcal{R})$ with $a,b \in \mathcal{R}^{n},$ $a',b' \in \mathcal{R}^{c}$ be an error occurred on the transmitted codeword. As the receiver-end qudits are assumed to be maintained error-free, we have  $E=\omega^{\ell} X(a,0)Z(b,0) \in \mathcal{P}_{n+c}(\mathcal{R}).$ Further, as the overall phase factor of the error does not matter, we can assume that $E= X(a,0)Z(b,0) \in \mathcal{P}_{n+c}(\mathcal{R})$; indeed, from now on, in this proof, we will ignore the overall phase factor of the elements of $C(\mathcal{A})$ and $\mathcal{A}.$ Note that an element $E= X(a,0)Z(b,0) \in C(\mathcal{A})$  if and only if $(a,b) \in \mathcal{C}^{\perp_\chi}.$

If $E \not\in C(\mathcal{A})$, then we have $EF\neq FE$ for some $F=\omega^{k} X(c,c')Z(d,d') \in \mathcal{A}.$ That is, we have $\chi(b\cdot c-a\cdot d)\neq 1.$ Using this, we observe that $$\langle u | E| v \rangle=\langle u | EF| v \rangle=\chi(b\cdot c-a\cdot d) \langle u | FE| v \rangle=\chi(b\cdot c-a\cdot d) \langle u | E| v \rangle.$$
As $\chi(b\cdot c-a\cdot d)\neq1,$ we get $\langle u | E| v \rangle=0.$ This implies that $E$ is detectable. 
Further, when $E \in\mathcal{A},$ we have $\langle u | E| v \rangle=\langle u |  v \rangle,$ hence $E$ is detectable. Now we assume that $E \in C(\mathcal{A}) \setminus \mathcal{A}$. Here we assert that $E$ is not detectable. We will prove this by contradiction. Suppose that $E$ is detectable. This implies that for each $|v \rangle \in \mathcal{Q}(\mathcal{A}),$ we have $E|v \rangle =\lambda_E |v \rangle$ some complex scalar $\lambda_E.$ Note that for the projection operator $P= \frac{1}{|\mathcal{A}|}\sum\limits_{F \in \mathcal{A}}^{} F,$  we have $EP=\lambda_EP,$ which implies that $\lambda_E \neq 0.$ Now consider an abelian subgroup $\mathcal{A}'$ of $\mathcal{P}_{n+c}(\mathcal{R})$ generated by $\lambda_E^{-1}E$ and all elements of $\mathcal{A}.$ By Theorem \ref{dim}, the quantum code $\mathcal{Q}(\mathcal{A}') \subseteq \mathbb{C}^{q^{n+c}}$ has dimension equal to $q^{n+c}/|\mathcal{A}'|<q^{n+c}/|\mathcal{A}|.$ This implies that not all elements of $\mathcal{Q}(\mathcal{A})$ remain invariant under $\lambda_E^{-1}E.$ This is a contradiction to the detectability of $E.$ This proves our assertion.

Thus an error $E= X(a,0)Z(b,0) \in \mathcal{P}_{n+c}(\mathcal{R})$ is not detectable if and only if $E \in C(\mathcal{A}) \setminus \mathcal{A}$,  i.e., an error $E= X(a,0)Z(b,0) \in \mathcal{P}_{n+c}(\mathcal{R})$ is not detectable if and only if $(a,b) \in \mathcal{C}^{\perp_\chi} \setminus \mathcal{C}.$  From this, we see that if $\mathcal{C}^{\perp_\chi} \setminus \mathcal{C} \not=  \emptyset,$ then $D=d_{\mathrm{s}}(\mathcal{C}^{\perp_\chi}\setminus \mathcal{C}).$ Now, suppose that $\mathcal{C}^{\perp_\chi} \subseteq \mathcal{C}.$  Then, there is no element in  $C(\mathcal{A}) \setminus \mathcal{A}$ of the form  $X(a,0)Z(b,0),$ hence all errors are detectable. Note that  $E=X(a,0)Z(b,0) \in \mathcal{P}_{n+c}(\mathcal{R})$ with $\text{wt}(E) < d_{\mathrm{s}}(\mathcal{C}^{\perp_\chi})$ does not belong to $C(\mathcal{A})$ as $(a,b) \not \in \mathcal{C}^{\perp_\chi} .$ This implies that for each non-identity operator $E=X(a,0)Z(b,0) \in \mathcal{P}_{n+c}(\mathcal{R}),$ with $\text{wt}(E) < d_{\mathrm{s}}(\mathcal{C}^{\perp_\chi}),$ we have $\langle u | E| v \rangle=0$ for each $| u \rangle, | v \rangle \in \mathcal{Q}(\mathcal{A})$.  Further,  for each $(a,b) \in \mathcal{C}^{\perp_\chi}\subseteq \mathcal{C},$ we see that $E =X(a,0)Z(b,0) \in \mathcal{A},$ which implies that $ \langle u | E| v \rangle=\langle u | v \rangle.$ Thus the minimum distance $D$ of the code is equal to $d_{\mathrm{s}}(\mathcal{C}^{\perp_\chi}).$ This completes the proof of the theorem.\end{proof}

Observe that Theorem~\ref{dt1}, which is restated below with the numbers $\rho_t$ precisely specified, follows readily from Theorems~\ref{thm:2step} and \ref{thm:eaqecc}.

\begin{thm}[Restatement of Theorem~\ref{dt1}]  \label{thm:dt1_restate} 
Let $\mathcal{C} \subseteq \mathcal{R}^{2n}$ be an additive code,  i.e.,  $\mathcal{C}$ is a module over $\mathbb{Z}_{p^b}.$  From $\mathcal{C}$, we can construct an $((n,K, D; c))$ EAQECC over $\mathcal{R},$ where the number of pairs of maximally entangled qudits needed is $c = \frac12 \rank(\mathcal{C}/(\mathcal{C} \cap \mathcal{C}^{\perp_{{\chi}}}))$, the minimum distance is
\begin{equation*}
D~=~\left\{\begin{array}{ll}
d_{\mathrm{s}}(\mathcal{C}^{\perp_\chi}) & \text{~~if~~} \mathcal{C}^{\perp_\chi} \subseteq \mathcal{C} \\
d_{\mathrm{s}}(\mathcal{C}^{\perp_\chi}\setminus \mathcal{C}) & \text{~~otherwise} \, ,
\end{array}\right. 
\end{equation*}
and the dimension $K$ is bounded as $q^{n+c}/ (|\mathcal{C}| \, {p^{\sum_{t=1}^{b-1} (b-t)\rho_t}}) \leq  K \leq q^{n+c}/|\mathcal{C}|$, with $\rho_t:=\rank(\mathcal{C}/(\mathcal{C} \cap \mathcal{C}^{\perp_{\chi,t-1}}))-\rank(\mathcal{C}/(\mathcal{C} \cap \mathcal{C}^{\perp_{\chi,t}}))$. If either \emph{(a)}~$\mathcal{C}$ is free or \emph{(b)}~$\mathcal{C}/(\mathcal{C} \cap \mathcal{C}^{\perp_{\chi}})$ is a free module over $\mathbb{Z}_{p^b}$, then $K = q^{n+c}/|\mathcal{C}|$. In the case of \emph{(b)}, we additionally have $c=\frac12 \bigl[\rank(\mathcal{C}) - \rank(\mathcal{C} \cap \mathcal{C}^{\perp_\chi})\bigr]$.
\end{thm}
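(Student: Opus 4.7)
The plan is to assemble the restated Theorem~\ref{dt1} by chaining together the three main results of Section~\ref{sec3}, namely Theorem~\ref{sym}, Theorem~\ref{tEAC}, and Theorem~\ref{thm:eaqecc}. Since the statement we are proving is explicitly flagged as a direct consequence of those theorems, the task is really one of bookkeeping: matching the entanglement degree, dimension, and distance produced by the construction with what the statement requires.

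First, I would apply Theorem~\ref{tEAC} to the given additive code $\mathcal{C} \subseteq \mathcal{R}^{2n}$. This produces a $\chi$-self-orthogonal extension $\mathcal{C}' \subseteq \mathcal{R}^{2(n+c)}$ whose entanglement degree is exactly $c = \tfrac{1}{2}\rank(\mathcal{C}/(\mathcal{C}\cap\mathcal{C}^{\perp_\chi}))$, the first parameter claimed by the theorem. The same theorem supplies the two-sided bound $|\mathcal{C}| \le |\mathcal{C}'| \le |\mathcal{C}|\cdot p^{\sum_{t=1}^{b-1}(b-t)\rho_t}$, with $\rho_t = \rank(\mathcal{C}/(\mathcal{C}\cap\mathcal{C}^{\perp_{\chi,t-1}})) - \rank(\mathcal{C}/(\mathcal{C}\cap\mathcal{C}^{\perp_{\chi,t}}))$, and it guarantees equality $|\mathcal{C}'|=|\mathcal{C}|$ in the case where $\mathcal{C}$ is a free $\mathbb{Z}_{p^b}$-module.

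Next, I would feed the code $\mathcal{C}'$ into Theorem~\ref{thm:eaqecc}, which converts a $\chi$-self-orthogonal extension into an $((n, q^{n+c}/|\mathcal{C}'|, D; c))$ EAQECC. Its minimum distance is $D = d_{\mathrm{s}}(\mathcal{C}^{\perp_\chi})$ when $\mathcal{C}^{\perp_\chi}\subseteq\mathcal{C}$ and $D = d_{\mathrm{s}}(\mathcal{C}^{\perp_\chi}\setminus\mathcal{C})$ otherwise, precisely as stated. Notice that $D$ is determined entirely by $\mathcal{C}$ and $\mathcal{C}^{\perp_\chi}$, independent of the particular $\chi$-self-orthogonal extension chosen; this is why no ambiguity arises despite the range of possible sizes of $\mathcal{C}'$.

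Finally, setting $K = q^{n+c}/|\mathcal{C}'|$ and substituting the bounds on $|\mathcal{C}'|$ from Theorem~\ref{tEAC} immediately yields
\[
\frac{q^{n+c}}{|\mathcal{C}|\,p^{\sum_{t=1}^{b-1}(b-t)\rho_t}} \ \le \ K \ \le \ \frac{q^{n+c}}{|\mathcal{C}|},
\]
and when $\mathcal{C}$ is free the lower bound collapses to the upper one, giving $K = q^{n+c}/|\mathcal{C}|$. There is no real obstacle here beyond carefully identifying the parameters; the only subtle point is remembering that the $\rho_t$ appearing in the dimension bound is the same as the one controlling $|\mathcal{C}'|$ in Theorem~\ref{tEAC}, which is why the restatement at the end of the section is needed to make the expression unambiguous.
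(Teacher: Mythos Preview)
Your proposal is correct and matches the paper's own treatment: the paper simply notes that the restated Theorem~\ref{dt1} ``follows readily from Theorems~\ref{tEAC} and \ref{thm:eaqecc},'' and you have spelled out precisely that chain --- applying Theorem~\ref{tEAC} to obtain $\mathcal{C}'$ with the stated entanglement degree and size bounds, then Theorem~\ref{thm:eaqecc} to read off $K=q^{n+c}/|\mathcal{C}'|$ and $D$, and finally substituting. The mention of Theorem~\ref{sym} is harmless but superfluous, since Theorem~\ref{tEAC} already invokes it internally.
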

 
\begin{rem}  \label{rem:lee_klapp} We point out here that Lee and Klappenecker \cite[Proposition 2]{lee} provided a method to construct an EAQECC from a free linear code over a finite commutative Frobenius ring. The construction proposed in \cite[Proposition 2]{lee} relies crucially on \cite[Theorem 5]{lee}, which is analogous to our Theorem~\ref{thm:stdform}. However, we found a gap in the proof of \cite[Theorem 5]{lee} that could not readily be filled, namely, that replacing $\mathbf{w}_k$ with $\mathbf{w}'_{k-2} = e_{k,i} \mathbf{w}_k - \cdots $ may not result in a basis of $R^{2n}$, as $e_{k,i}$ may not be a unit in the ring $R$. \qed
\end{rem}

As we will now argue, our results in this section in fact offer a mathematically rigorous means of constructing EAQECCs from linear codes (not necessarily free) over finite commutative (not necessarily local) Frobenius rings.
By the Chinese Remainder Theorem, a finite commutative Frobenius ring $R$ can be written as direct product of finite commutative local rings. By \cite[Remark 1.3]{wood}, a finite commutative ring $R \cong R_1\times R_2 \times \cdots \times R_t$ is Frobenius if and only if each $R_i$ is Frobenius.
Thus, for a finite commutative Frobenius ring $R,$ we have $R \cong R_1\times R_2 \times \cdots \times R_t,$ where each $R_i$ is a local Frobenius ring.
 By \cite[Theorem 3.1]{dougherty19},  if $\chi_{R_i}$ is a generating character for $R_i$, then a character $\chi$ of $R$ defined for  $a=(a_1,a_2,\ldots,a_t) \in R$ by $$\chi(a)=\prod_{i=1}^{t}\chi_{R_i}(a_i) $$ is a generating character for $R$. 
 
It is straightforward to see that any linear code, $C$, of length $2n$ over $R$ is of the form $C_1\times C_2 \times \cdots \times C_t$, where each $C_i$ is a linear code of length $2n$ over $R_i$. Now, by Theorem~\ref{thm:stdform}, each $C_i$ has a generating set, $\mathcal{G}_i$, in standard form. 
 Define $\mathcal{G} \subset C$ to be the set consisting of all $(c_1,c_2,\ldots,c_t) \in C_1\times C_2 \times \cdots \times C_t$ such that for some $j \in [t]$, we have $c_j \in \mathcal{G}_j$ while $c_i = 0$ for all $i \ne j$.
 It is clear that $\mathcal{G}$ is a generating set of $C$. Moreover, it is an easy exercise to verify that $\mathcal{G}$ is in standard form with respect to $\chi({\langle \cdot | \cdot \rangle}_{\mathrm{s}})$, where $\chi$ acts on the symplectic inner product on $R^{2n}$ defined as follows: for $u = (u_1,u_2,\ldots,u_t)$ and $v = (v_1,v_2,\ldots,v_t) \in R_1^{2n} \times R_2^{2n} \times \cdots \times R_t^{2n} \cong R^{2n}$, 
 $$
{\langle u \mid v \rangle}_{\mathrm{s}} = \left({\langle u_1 \mid v_1 \rangle}_{\mathrm{s}(1)},{\langle u_2 \mid v_2 \rangle}_{\mathrm{s}(2)},\ldots,{\langle u_t \mid v_t \rangle}_{\mathrm{s}(t)}\right).
 $$
 Here, ${\langle u_j \mid v_j \rangle}_{\mathrm{s}(j)}$ is the usual symplectic inner product on $R_j^{2n}$ defined as in Definition~\ref{def:sip}. Once we have a standard-form generating set for an additive code $C$ over $R$, the entire machinery developed in this section for constructing EAQECCs over local Frobenius rings can be extended to obtain from $C$ an EAQECC over the ring $R$. We omit the formal details of such a construction, merely noting that we now have a valid method for constructing an EAQECC from any linear code over any finite commutative Frobenius ring. 
 
 We can in fact squeeze out a bit more from the decomposition of $R$ into local rings. If the characteristics of the local rings $R_i$ above are pairwise coprime, then any additive (not necessarily linear) code of length $2n$ over $R$ is of the form $C_1\times C_2 \times \cdots \times C_t,$ where each $C_i$ is an additive code of length $2n$ over $R_i$. Thus, the argument given above will again work to give us a construction of EAQECCs over $R$ from additive codes in $R^{2n}$. However, one point that we need to note here is that we do not, in general, have a notion of rank for an additive code $C$ over a finite commutative Frobenius ring $R$. As a stand-in for rank, we use the minimum number of generators of $C$ as an additive subgroup of $R^{2n}$. Then, the number of pairs of maximally entangled qudits needed to construct an EAQECC from an additive code $C$ over $R$ can be described in terms of the minimum number of generators of $C/(C \cap C^{\perp_\chi})$  as an additive code.

\bigskip

Our aim next is to determine the minimum number, $c_{\min}$, of pairs of maximally entangled qudits needed to construct an EAQECC from an additive code $\mathcal{C}$ over the Galois ring, $\text{GR}(p^b,m)$, of characteristic $p^b$ and cardinality $p^{mb}$. To bound $c_{\min}$ from below, we argue as follows: From Theorem~\ref{thm:stdform}, $\mathcal{C}$ has a generating set with $c = \frac12 \rank(\mathcal{C}/(\mathcal{C} \cap \mathcal{C}^{\perp_\chi}))$ hyperbolic pairs. If $\mathcal{C}'$ is a minimal $\chi$-self-orthogonal extension of $\mathcal{C}$, with entanglement degree $c_{\min}$, then, by Proposition~\ref{prop:ext}, there is a symplectic subset of $\mathcal{R}^{2c_{\min}}$ of cardinality $2c$. Hence, $2c$ can be bounded above by the size of the largest symplectic subset constructible on $\mathcal{R}^{2c_{\min}}$, which we obtain in terms of $c_{\min}$. This yields a lower bound on $c_{\min}$, as desired. A matching upper bound is obtained by constructing a symplectic subset as stipulated in Proposition~\ref{prop:ext}. This program is carried out for the case of integer rings $\mathbb{Z}_{p^a}$ in the next section, while the treatment for general Galois rings is presented in Section~\ref{sec5}.

\section{EAQECCs from Additive Codes over the Ring $\mathbb{Z}_{p^a}$} \label{sec4}

The map $\chi: \mathbb{Z}_{p^a} \to \mathbb{C}^*$ defined by $\chi(z) = \zeta^z$, with $\zeta=\exp\bigl(\frac{2 \pi i}{p^a}\bigr)$, is a generating character of the ring $\mathbb{Z}_{p^a}$. In particular, $\chi(z) = 1$ iff $z = 0$, and $\chi(z) \in \langle \zeta^{p^t} \rangle $ iff $z \equiv 0 \pmod{p^t}$. Hence, for an additive (linear) code $\mathcal{C} \subseteq \mathbb{Z}_{p^a}^{2n}$, we have $\mathcal{C}^{\perp_\chi}=\mathcal{C}^{\perp_{\mathrm{s}}}$ and $\mathcal{C}^{\perp_{\chi,t}}=\mathcal{C}^{\perp_{\mathrm{s},t}}$, where for $t = 0,1,\ldots,b$, we define
\begin{equation}
\label{def:Cperp_st} 
\mathcal{C}^{\perp_{\mathrm{s},t}} \ = \  \{(v',w') \in \mathbb{Z}_{p^a}^{2n} ~:~ \langle(v',w') \mid (v,w) \rangle_{\mathrm{s}} \equiv 0 \!\!\!\! \pmod{p^{a-t}} \text{ for all } (v,w) \in \mathcal{C}\}.
\end{equation}
Also, the definition of a symplectic subset (Definition~\ref{def:sympl}) of $\mathbb{Z}_{p^a}^{2n}$ reduces to the following:
\begin{defin} \label{def:sympl2} A subset $\{a_{11},a_{12},a_{21},a_{22},\ldots,a_{e1},a_{e2}\}$
 of $\mathbb{Z}_{p^a}^{2n}$ is said to be a symplectic subset if
  $\langle a_{i1} \mid a_{j1} \rangle_{\mathrm{s}} = \langle a_{i2} \mid a_{j2} \rangle_{\mathrm{s}} = \langle a_{i1} \mid a_{k 2} \rangle_{\mathrm{s}} = 0$ and  $\langle a_{i1} \mid a_{i2} \rangle_{\mathrm{s}} \neq 0$ for all $i ,j, k \in \{1,2,\ldots,e\}$ with $i \neq k.$
 \end{defin}

We will first derive an upper bound on the cardinality of a symplectic subset of $\mathbb{Z}_{p^a}^{2n}$, and then, using this bound, we will derive an explicit form of the minimum number of pairs of maximally entangled qudits needed for an EAQECC constructed from an additive (linear) code over $\mathbb{Z}_{p^a}.$   Our upper bound is in fact proved for the more general notion of a quasi-symplectic subset of ${\mathbb{Z}^{2n}_{p^a}}$, defined next. Here, for $z \in \mathbb{Z}_{p^a}$, we use the notation $\overline{z}$ to denote the residue $z \!\! \mod p$. 
\begin{defin} A subset $\{a_{11},a_{12},a_{21},a_{22},\ldots,a_{e1},a_{e2}\}$
 of ${\mathbb{Z}^{2n}_{p^a}}$ is said to be a quasi-symplectic subset if
 \begin{enumerate}
 \item[(a)] $\langle a_{i1}, a_{j1} \rangle_{\mathrm{s}}=\langle a_{i1}, a_{k 2} \rangle_{\mathrm{s}}=0$ for all $i ,j, k  \in \{1,2,\ldots,e\}$ with $i \neq k.$
 \item[(b)] there exists a subset $\mathcal{J}$ of $\{1,2,\ldots,e \}$ such that $\langle a_{i1}, a_{i2} \rangle_{\mathrm{s}}\neq0$ for all $i \not\in \mathcal{J}$, and $\{\overline{a}_{j1} \}_{j \in \mathcal{J}}$ is a linearly independent set over $\mathbb{Z}_p.$
 \end{enumerate}
\end{defin}

Note that a symplectic subset satisfies the definition of a quasi-symplectic subset by setting $\mathcal{J} = \emptyset$. In the following theorem, we provide an upper bound on the size of a quasi-symplectic subset of ${\mathbb{Z}^{2n}_{p^a}}.$
\begin{thm} \label{tquasi} If $\{b_{11},b_{12},b_{21},b_{22},\ldots,b_{e1},b_{e2}\}$ is a quasi-symplectic subset of ${\mathbb{Z}^{2n}_{p^a}},$ then $e \leq n.$ 
\end{thm}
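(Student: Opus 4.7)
The plan is to prove $e \leq n$ by induction on $n$, showing that at each step we can split off a rank-$2$ symplectic summand of $\mathbb{Z}_{p^a}^{2n}$ and thereby reduce to a quasi-symplectic subset of size $2(e-1)$ in $\mathbb{Z}_{p^a}^{2(n-1)}$. The base case $n = 0$ forces $e = 0$ trivially.

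For the inductive step, the easiest case is when some $\alpha_i := \langle b_{i1} | b_{i2}\rangle_{\mathrm{s}}$ is a unit in $\mathbb{Z}_{p^a}$: then $H := \langle b_{i1}, b_{i2}\rangle$ is a free rank-$2$ submodule whose Gram matrix has unit determinant, so $\mathbb{Z}_{p^a}^{2n} = H \oplus H^{\perp_{\mathrm{s}}}$ with $H^{\perp_{\mathrm{s}}}$ free of rank $2(n-1)$. The invertibility of $\alpha_i$ lets us subtract suitable $\mathbb{Z}_{p^a}$-multiples of $b_{i1}, b_{i2}$ from each remaining $b_{k\ell}$ (with $k \neq i$) so that the modified vectors land in $H^{\perp_{\mathrm{s}}}$, while preserving the quasi-symplectic conditions (a)--(b).

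When no $\alpha_i$ is a unit but $\mathcal{J} \neq \emptyset$, I pick $j \in \mathcal{J}$; since $\bar{b}_{j1}$ is nonzero in $\mathbb{F}_p^{2n}$ and the isotropic $\mathbb{F}_p$-span of $\{\bar{b}_{i1}\}_{i \neq j}$ sits inside its own symplectic perp, the non-degeneracy of the induced symplectic form on the quotient $\bigl(\mathrm{span}\{\bar b_{i1}\}_{i \neq j}\bigr)^{\perp_{\mathrm{s}}} / \mathrm{span}\{\bar{b}_{i1}\}_{i \neq j}$ lets me lift an appropriate vector to $v \in \mathbb{Z}_{p^a}^{2n}$ satisfying $\langle b_{j1} | v\rangle_{\mathrm{s}}$ is a unit and $\langle b_{i1} | v\rangle_{\mathrm{s}} = 0$ for every $i \neq j$. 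Then $(b_{j1}, v)$ plays the role of the hyperbolic pair and we reduce as in the unit-$\alpha_i$ case.

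The main obstacle is the remaining case, where $\mathcal{J}$ (chosen maximal with the linear-independence property) is empty and no $\alpha_i$ is a unit. Maximality forces $\bar{b}_{i1} = 0$ for every $i$, i.e., $b_{i1} \in p\mathbb{Z}_{p^a}^{2n}$ throughout. Writing $b_{i1} = p\tilde{b}_{i1}$ we have $\langle \tilde{b}_{i1} | b_{i2}\rangle_{\mathrm{s}} = \alpha_i / p$, which is nonzero in $\mathbb{Z}_{p^{a-1}}$ since $\alpha_i \neq 0$ in $\mathbb{Z}_{p^a}$, so I would set up an outer induction on $a$, the base case $a = 1$ being the field-case argument over $\mathbb{F}_p$: there the $\bar{b}_{i1}$ are pairwise symplectically orthogonal, and one shows they are linearly independent by pairing with each $\bar{b}_{k2}$ to isolate coefficients for $i \notin \mathcal{J}$ and combining with the given linear independence for $i \in \mathcal{J}$, hence they form an isotropic linearly independent set in $\mathbb{F}_p^{2n}$ and $e \leq n$. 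The delicate technical point is that the rescaled residual pairings $\langle \tilde{b}_{i1} | \tilde{b}_{j1}\rangle_{\mathrm{s}}$ only lie in $p^{a-2}\mathbb{Z}_{p^a}$ rather than vanishing modulo $p^{a-1}$, so additional clean-up adjustments to the lifts $\tilde{b}_{i1}$ (using the symplectic complement of the $b_{k2}$'s) may be required before the quasi-symplectic conditions hold in $\mathbb{Z}_{p^{a-1}}^{2n}$ and the inner induction hypothesis can be invoked.
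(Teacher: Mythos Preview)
Your induction-on-$n$ strategy (peeling off a hyperbolic plane at each step) is structurally different from the paper's, which inducts on $a$ in steps of two.  Cases~1 and~2 are essentially sound; the small wrinkle in Case~2 --- that $\bar b_{j1}$ may already lie in $\mathrm{span}\{\bar b_{i1}\}_{i\neq j}$ even with $\mathcal J$ maximal, since for $i\notin\mathcal J$ the vector $\bar b_{i1}$ can carry a nonzero $\bar b_{j1}$-component --- is repaired by dropping the demand $\langle b_{i1}\mid v\rangle_{\mathrm s}=0$ and simply projecting every remaining $b_{k\ell}$ onto $H^{\perp_{\mathrm s}}$.

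The genuine gap is Case~3, and the difficulty you flag there is not cosmetic.  Writing $b_{i1}=p\tilde b_{i1}$ turns $\langle b_{i1}\mid b_{j1}\rangle_{\mathrm s}=0$ in $\mathbb Z_{p^a}$ into $\langle\tilde b_{i1}\mid\tilde b_{j1}\rangle_{\mathrm s}\equiv 0\pmod{p^{a-2}}$ only, one power of $p$ short of what the quasi-symplectic condition in $\mathbb Z_{p^{a-1}}^{2n}$ needs.  Your proposed clean-up $\tilde b_{i1}\mapsto\tilde b_{i1}+p^{a-2}\delta_i$ would require solving, over $\mathbb F_p$, the system $\gamma_{ij}+\langle\bar\delta_i\mid\bar{\tilde b}_{j1}\rangle_{\mathrm s}-\langle\bar\delta_j\mid\bar{\tilde b}_{i1}\rangle_{\mathrm s}=0$ for all $i\neq j$ with $\bar\delta_i$ constrained to the symplectic complement of $\{\bar b_{k2}\}$; nothing in the hypotheses forces this to be solvable (indeed nothing prevents the $\bar{\tilde b}_{i1}$ from being highly degenerate), so the descent from $a$ to $a-1$ does not close.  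The paper's device is to arrange, after enlarging $\mathcal J$ and possibly swapping $b_{j1}\leftrightarrow b_{j2}$, that for every $\ell\in\mathcal J^{\mathsf c}$ \emph{both} $\bar b_{\ell1}$ and $\bar b_{\ell2}$ lie in $\mathrm{span}\{\bar b_{j1}\}_{j\in\mathcal J}$, write $b_{\ell1}=\sum_{j\in\mathcal J}u_jb_{j1}+pc_{\ell1}$ and $b_{\ell2}=\sum_{j\in\mathcal J}v_jb_{j1}+pc_{\ell2}$, and then reduce modulo $p^{a-2}$: now the relevant products are $p^2\langle c_{i1}\mid c_{j1}\rangle_{\mathrm s}$ etc., and the two powers of $p$ exactly match the drop in the modulus.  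That step-by-two descent is the idea your sketch is missing.
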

\begin{proof} We will apply induction on $a \geq 1$ to prove this result. For $a=1,$ suppose that $\{b_{11},b_{12},b_{21},b_{22},\ldots,b_{e1},b_{e2}\}$ is a quasi-symplectic subset of ${\mathbb{Z}^{2n}_{p}}$. By definition of quasi-symplectic subset, we have $\langle b_{i1}, b_{j1} \rangle_{\mathrm{s}}=0$ and $\langle b_{i1}, b_{k 2} \rangle_{\mathrm{s}}=0$ for all $i ,j, k  \in \{1,2,\ldots,e\}$ with $i \neq k$, and there exists a subset $\mathcal{J}$ of $\{1,2,\ldots,e \}$ such that $\langle b_{t1}, b_{t2} \rangle_{\mathrm{s}}\neq0$ for all $ t \not\in \mathcal{J}$ and $\{b_{v1} \}_{v \in \mathcal{J}}$ is a linearly independent set over $\mathbb{Z}_p$. Here, we assert that $\{b_{11},b_{21},\ldots,b_{e1}\}$ is a linearly independent set over $\mathbb{Z}_p.$ To see this, suppose that $\sum\limits_{\ell=1}^{e}r_\ell b_{\ell 1} =0$ with $r_\ell \in \mathbb{Z}_p.$ From this, using the fact that $\langle b_{k1}, b_{j2}\rangle_{\mathrm{s}} =0$ and $\langle b_{j1}, b_{j2}\rangle_{\mathrm{s}} \neq 0$ for $k \in \{1,2,\ldots,e\}$, $j \in \mathcal{J}^\mathsf{c}$ with $k \neq j$, we get $r_t\langle b_{t1}, b_{t2}\rangle_{\mathrm{s}} =0$ for $t \in \mathcal{J}^\mathsf{c}.$ This implies that $r_t=0$ for each $t \in \mathcal{J}^\mathsf{c}.$ So we have $\sum\limits_{ \ell \in \mathcal{J}}^{}  r_\ell b_{\ell 1} =0.$ As $\{b_{\ell 1}\}_{\ell \in \mathcal{J}}$ is a linearly independent set over $\mathbb{Z}_p,$ we get $r_\ell =0$ for each $\ell \in \mathcal{J}.$ This proves the assertion. Now consider a linear code $\mathcal{C}$ over $\mathbb{Z}_p$ generated by  $\{b_{11},b_{21},\ldots,b_{e1}\}.$ Clearly, $\mathcal{C} \subseteq \mathcal{C}^{\perp_{\mathrm{s}}}.$ From this, using the fact that $\mathcal{C}^{\perp_\chi}=\mathcal{C}^{\perp_{\mathrm{s}}}$ and Lemma \ref{card}, we get $|{\mathbb{Z}_p}^{2n}|=|\mathcal{C}||\mathcal{C}^{\perp_{\mathrm{s}}}|\geq |{\mathbb{Z}_p}|^{2e}.$ This implies that $e \leq n,$ which completes the proof for $a=1.$ 
 
Now we assume that $a\geq 2.$ As $\{b_{11},b_{12},b_{21},b_{22},\ldots,b_{e1},b_{e2}\}$
is a quasi-symplectic subset of ${\mathbb{Z}^{2n}_{p^a}},$  we have $\langle b_{i1}, b_{j1} \rangle_{\mathrm{s}}=0$ and $\langle b_{i1}, b_{k 2} \rangle_{\mathrm{s}}=0$ for all $i ,j, k  \in \{1,2,\ldots,e\}$ with $i \neq k$ and there exists a subset $\mathcal{J}$ of $\{1,2,\ldots,e \}$ such that $\langle b_{t1}, b_{t2} \rangle_{\mathrm{s}}\neq0$ for all $ t \not\in \mathcal{J}$ and $\{\overline{b}_{v1} \}_{v \in \mathcal{J}}$ is a linearly independent set over $\mathbb{Z}_p.$ 
 Note that by adding indices from $\mathcal{J}^\mathsf{c}$ to $\mathcal{J}$ if necessary and interchanging $b_{j1}$ and $b_{j2}$ if needed, we can assume that for each $\ell \in \mathcal{J}^\mathsf{c},$
 both $\overline{b}_{j1}$ and $\overline{b}_{j2}$ are linearly dependent on the set $\{\overline{b}_{\ell 1}\}_{\ell \in \mathcal{J}}.$ That is, for each $\ell \in \mathcal{J}^\mathsf{c},$ we have

$$\overline{b}_{\ell 1}=\sum\limits_{ j \in \mathcal{J}}^{}  u_j \overline{b}_{j 1} \text{~and~} \overline{b}_{\ell 2}=\sum\limits_{ j \in \mathcal{J}}^{}  v_j \overline{b}_{j 1} \text{~for ~} u_j, v_j \in \mathbb{Z}_p.$$

This implies that for each $\ell \in \mathcal{J}^\mathsf{c},$ we have

\begin{equation}\label{e1}
b_{\ell 1}=\sum\limits_{ j \in \mathcal{J}}^{}  u_j b_{j 1}+pc_{\ell 1} \text{~and~} {b}_{\ell 2}=\sum\limits_{ j \in \mathcal{J}}^{}  v_j b_{j 1}+pc_{\ell 2} \text{~with~} u_j, v_j \in \mathbb{Z}_p \text{~and~} c_{\ell 1}, c_{\ell 2} \in \mathbb{Z}_{p^{a-1}}^{2n} \end{equation}

Recall that for $t, u \in \mathcal{J}$ and $h,i,j,k,\ell \in \mathcal{J}^\mathsf{c}$ with $i \neq k,$ 
we have $\langle b_{t1}, b_{h1}\rangle_{\mathrm{s}} =0,$  $\langle b_{t1}, b_{h2}\rangle_{\mathrm{s}} =0,$ $\langle b_{h1}, b_{u2}\rangle_{\mathrm{s}}=0,$ $\langle b_{i1}, b_{j1}\rangle_{\mathrm{s}}=0,$ $\langle b_{i1}, b_{k2}\rangle_{\mathrm{s}} =0$ and $ \langle b_{\ell1}, b_{\ell2}\rangle_{\mathrm{s}} \neq 0$ in $\mathbb{Z}_{p^{a}}.$

 From this and using \eqref{e1}, we observe that for $t, u \in \mathcal{J}$ and $h,i,j,k,\ell \in \mathcal{J}^\mathsf{c}$ with $i \neq k,$ we have
 \begin{eqnarray}\label{e2}
 &&    \langle b_{t1}, b_{h1}\rangle_{\mathrm{s}} =p\langle b_{t1},  c_{h1}\rangle_{\mathrm{s}} =0,    \langle b_{t1}, b_{h2}\rangle_{\mathrm{s}} =p\langle b_{t1},  c_{h2}\rangle_{\mathrm{s}} =0, \langle b_{h1}, b_{u2}\rangle_{\mathrm{s}} =p\langle c_{h1},  b_{u2}\rangle_{\mathrm{s}} = 0, 
 \nonumber\\ &&
\langle b_{i1}, b_{j1}\rangle_{\mathrm{s}} =p^2\langle c_{i1},  c_{j1}\rangle_{\mathrm{s}} = 0,   \langle b_{i1}, b_{k2}\rangle_{\mathrm{s}} =p^2\langle c_{i1},  c_{k2}\rangle_{\mathrm{s}} = 0
  \text{~and ~} 
 \langle b_{\ell1}, b_{\ell2}\rangle_{\mathrm{s}} =p^2\langle c_{\ell1},  c_{\ell2}\rangle_{\mathrm{s}} \neq 0  \text{~in~} \mathbb{Z}_{p^{a}}. ~ \end{eqnarray}

For $a=2,$ by using  \eqref{e2}, we see that $\mathcal{J}^\mathsf{c}=\phi.$ This implies that $\{\overline{a}_{11},\overline{a}_{21},\ldots,\overline{a}_{e1}\}$ is linearly independent over $\mathbb{Z}_{p}.$  Thus  a linear code $\mathcal{C}$ over  $\mathbb{Z}_p$ generated by  $\{b_{11},b_{21},\ldots,b_{e1}\}$ satisfies $\mathcal{C} \subseteq \mathcal{C}^{\perp_{\mathrm{s}}}.$ From this and using the fact that $\mathcal{C}^{\perp_\chi}=\mathcal{C}^{\perp_{\mathrm{s}}}$ and by Lemma \ref{card},   we get $|{\mathbb{Z}_p}^{2n}|=|\mathcal{C}||\mathcal{C}^{\perp_{\mathrm{s}}}|\geq |{\mathbb{Z}_p}|^{2e}.$ This implies that $e \leq n.$  Hence the result holds for $a=2.$   Now we assume that $k\geq 3$  be a fix integer and result holds for integer $a=k-2.$
  We will prove the result for $a=k.$ For that, by using \eqref{e2}, we get $\{ \{\tilde{b}_{j1},\tilde{b}_{j2}\}_{j \in \mathcal{J}},$   $ \{\tilde{c}_{\ell1},\tilde{c}_{\ell2}\}_{\ell \in \mathcal{J}^\mathsf{c}}\}$ is quasi symplectic subset of $\mathbb{Z}_{p^{k-2}}^n,$ where $\tilde{c}\equiv c  (\text{mod~}p^{k-2}).$ So by induction, we must have $e \leq n,$ which implies that the result holds for $a=k.$ This completes the proof of the theorem. 
\end{proof}

Since a symplectic subset of $\mathbb{Z}_{p^a}^{2n}$ is also a quasi-symplectic subset, we have the following corollary.

\begin{cor}\label{cor2} If $\{a_{11},a_{12},a_{21},a_{22},\ldots,a_{e1},a_{e2}\}$ is  a symplectic subset of ${\mathbb{Z}^{2n}_{p^a}},$ then $e \leq n.$ \end{cor}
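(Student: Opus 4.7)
The plan is to derive this corollary as an immediate consequence of Theorem~\ref{tquasi}. The key observation is that the definition of a quasi-symplectic subset, with the choice $\mathcal{J} = \emptyset$, reduces exactly to the definition of a symplectic subset given in Definition~\ref{def:sympl2}. Indeed, condition (a) in the quasi-symplectic definition asserts $\langle a_{i1}, a_{j1}\rangle_{\text{s}} = \langle a_{i1}, a_{k2}\rangle_{\text{s}} = 0$ for $i \neq k$, which matches part of Definition~\ref{def:sympl2}; and when $\mathcal{J} = \emptyset$, condition (b) demands $\langle a_{i1}, a_{i2}\rangle_{\text{s}} \neq 0$ for \emph{all} $i \in \{1,2,\ldots,e\}$ (the linear independence clause being vacuously satisfied), which supplies the remaining symplectic requirements. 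One also has to check the orthogonality $\langle a_{i2}, a_{j2}\rangle_{\text{s}} = 0$ for $i \neq j$, but this is already built into the hypothesis of the corollary.

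Given this embedding, I would simply invoke Theorem~\ref{tquasi} applied to $\{a_{11},a_{12},a_{21},a_{22},\ldots,a_{e1},a_{e2}\}$, viewed as a quasi-symplectic subset with $\mathcal{J} = \emptyset$, to conclude $e \leq n$. There is essentially no obstacle to overcome, since all of the substantive work — the induction on $a$, together with the reduction modulo $p$ argument invoking \cite[Remark 3.8]{luerssen} — has already been carried out in the proof of Theorem~\ref{tquasi}. The only subtle point is to notice that the condition $\langle a_{i2}, a_{j2}\rangle_{\text{s}} = 0$ from Definition~\ref{def:sympl2}, while not explicitly listed in condition (a) of the quasi-symplectic definition, is not needed for the bound $e \leq n$; the proof of Theorem~\ref{tquasi} never uses it, so the corollary follows without any additional argument.
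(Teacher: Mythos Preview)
Your proposal is correct and matches the paper's approach exactly: the paper also observes that a symplectic subset is automatically a quasi-symplectic subset (with $\mathcal{J}=\emptyset$) and then invokes Theorem~\ref{tquasi}. Your additional remark that the condition $\langle a_{i2}, a_{j2}\rangle_{\text{s}}=0$ is unused in that theorem is accurate but not needed for the deduction.
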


From this, we obtain an explicit formula for the minimum entanglement degree of any $\chi$-self-orthogonal extension of a submodule $\mathcal{C} \subseteq \mathbb{Z}_{p^a}^{2n}.$

\begin{thm}\label{4thm} Any minimal $\chi$-self-orthogonal extension of a submodule $\mathcal{C} \subseteq \mathbb{Z}_{p^a}^{2n}$ has entanglement degree equal to $\frac12 \rank(\mathcal{C}/(\mathcal{C} \cap \mathcal{C}^{\perp_{\mathrm{s}}})).$
\end{thm}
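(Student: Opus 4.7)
The statement combines an upper bound that has already been established in Section~\ref{sec3} with a matching lower bound that remains to be proved. The plan is to set up both bounds explicitly and then invoke the structural machinery already in place. Let $\rho := \frac12\,\rank(\mathcal{C}/(\mathcal{C}\cap\mathcal{C}^{\perp_{\mathrm{s}}}))$, and recall that over $\mathbb{Z}_{p^a}$ the character-symplectic dual coincides with the plain symplectic dual, i.e., $\mathcal{C}^{\perp_\chi}=\mathcal{C}^{\perp_{\mathrm{s}}}$.

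For the upper bound, I would simply cite Theorem~\ref{tEAC}: applied to $\mathcal{R}=\mathbb{Z}_{p^a}$, it produces a $\chi$-self-orthogonal extension $\mathcal{C}'\subseteq\mathbb{Z}_{p^a}^{2(n+\rho)}$ of $\mathcal{C}$ with entanglement degree exactly $\rho$. This immediately gives $c_{\min}\le\rho$, so no further work is needed here.

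For the lower bound $c_{\min}\ge\rho$, the plan is as follows. By Theorem~\ref{sym}, $\mathcal{C}$ admits a $\mathbb{Z}_{p^b}$-generating set $\mathcal{G}$ that consists of $\rho$ hyperbolic pairs $\{u_{i1},u_{i2}\}_{i=1}^{\rho}$ together with some number of isotropic generators. Now suppose $\mathcal{C}''\subseteq\mathbb{Z}_{p^a}^{2(n+c'')}$ is any $\chi$-self-orthogonal extension of $\mathcal{C}$ with entanglement degree $c''$. The ``only if'' direction of Proposition~\ref{prop:ext}, applied to this particular generating set $\mathcal{G}$, guarantees the existence of a symplectic subset $\{a_{11},a_{12},\ldots,a_{\rho 1},a_{\rho 2}\}\subset\mathbb{Z}_{p^a}^{2c''}$ of cardinality $2\rho$ satisfying $\chi(\langle u_{i1}\mid u_{i2}\rangle_{\mathrm{s}})=\chi(\langle a_{i1}\mid a_{i2}\rangle_{\mathrm{s}})$ for every $i$. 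Applying Corollary~\ref{cor2} (the $e\le n$ bound, with $n$ replaced by $c''$) to this symplectic subset of $\mathbb{Z}_{p^a}^{2c''}$ yields $\rho\le c''$. Since $\mathcal{C}''$ was arbitrary, $c_{\min}\ge\rho$.

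Combining the two bounds gives $c_{\min}=\rho$, which is the desired equality. I do not anticipate a genuine obstacle here: all the heavy lifting (the existence of a hyperbolic-pair generating set in Theorem~\ref{sym}, the cardinality bound on symplectic subsets in Corollary~\ref{cor2}, and the two-step construction of an extension via Proposition~\ref{prop:ext} and Theorem~\ref{tEAC}) has already been done in the preceding sections. The only care needed is to note that Proposition~\ref{prop:ext} is formulated as an if-and-only-if, so the ``only if'' direction really does convert an arbitrary extension into a symplectic subset of the correct size, allowing Corollary~\ref{cor2} to be brought to bear.
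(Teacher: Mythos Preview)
Your proposal is correct and follows essentially the same approach as the paper's own proof: the upper bound comes from the explicit extension of Theorem~\ref{tEAC} (the paper cites Theorem~\ref{dt1}, which amounts to the same thing), and the lower bound is obtained by combining Theorem~\ref{sym}, the ``only if'' direction of Proposition~\ref{prop:ext}, and Corollary~\ref{cor2}, exactly as you outline.
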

\begin{proof} 
Let $c_{\min}$ be the entanglement degree of a minimal $\chi$-self-orthogonal extension, $\mathcal{C}'$, of $\mathcal{C}$. By Theorem~\ref{thm:2step}, we have $c_{\min} \le  \frac12 \rank(\mathcal{C}/(\mathcal{C} \cap \mathcal{C}^{\perp_{\mathrm{s}}})) $. On the other hand, from Theorem~\ref{thm:stdform}, $\mathcal{C}$ has a generating set with $c = \frac12 \rank(\mathcal{C}/(\mathcal{C} \cap \mathcal{C}^{\perp_{\mathrm{s}}}))$ hyperbolic pairs. Then, by Proposition~\ref{prop:ext}, there is a symplectic subset of $\mathbb{Z}_{p^a}^{2c_{\min}}$ of cardinality $2c$. Hence, by Corollary~\ref{cor2}, we have $c \le c_{\min}$, as desired.
\end{proof}

Observe that Theorem~\ref{dt2}, which we restate below with the numbers $\rho_t$ exactly specified,  follows by combining the results of Theorems~\ref{thm:2step}, \ref{thm:eaqecc} and \ref{4thm}, using the fact that $\mathcal{C}^{\perp_\chi}=\mathcal{C}^{\perp_{\mathrm{s}}}$ and  $\mathcal{C}^{\perp_{\chi,t}}=\mathcal{C}^{\perp_{\mathrm{s},t}}$, as defined in \eqref{def:Cperp_st}.

\begin{thm}[Restatement of Theorem~\ref{dt2}] Let  $\mathcal{C} \subseteq \mathbb{Z}_{p^a}^{2n}$ be a submodule. From $\mathcal{C}$, we can construct an $((n,K, D; c))$ EAQECC over $\Z_{p^a}$, where the minimum number, $c$, of pairs of maximally entangled qudits needed for the construction is equal to $\frac12 \rank(\mathcal{C}/(\mathcal{C} \cap \mathcal{C}^{\perp_{\mathrm{s}}}))$,  the minimum distance is 
\begin{equation*}
D~=~\left\{\begin{array}{ll}
d_{\mathrm{s}}(\mathcal{C}^{\perp_{\mathrm{s}}}) & \text{~~if~~} \mathcal{C}^{\perp_{\mathrm{s}}} \subseteq \mathcal{C} \\  
d_{\mathrm{s}}(\mathcal{C}^{\perp_{\mathrm{s}}}\setminus \mathcal{C} )  & \text{~~otherwise} \, , \end{array}\right. 
\end{equation*}
and the dimension $K$ is bounded as $p^{a(n+c)}/ (|\mathcal{C}| \, {p^{\sum_{t=1}^{a-1} (a-t)\rho_t}}) \leq  K \leq p^{a(n+c)}/|\mathcal{C}|$, with $\rho_t:=\rank(\mathcal{C}/(\mathcal{C} \cap \mathcal{C}^{\perp_{\mathrm{s},t-1}}))-\rank(\mathcal{C}/(\mathcal{C} \cap \mathcal{C}^{\perp_{\mathrm{s},t}}))$.  If either \emph{(a)}~$\mathcal{C}$ is free or \emph{(b)}~$\mathcal{C}/(\mathcal{C} \cap \mathcal{C}^{\perp_{\mathrm{s}}})$ is a free module over $\mathbb{Z}_{p^b}$, then $K = p^{a(n+c)}/|\mathcal{C}|$. In the case of \emph{(b)}, we additionally have $c=\frac12 \bigl[\rank(\mathcal{C}) - \rank(\mathcal{C} \cap \mathcal{C}^{\perp_{\mathrm{s}}})\bigr]$.
 \end{thm}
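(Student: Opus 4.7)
The plan is to verify that every claim in the restatement follows from assembling Theorems~\ref{tEAC}, \ref{thm:eaqecc}, and \ref{4thm}, after a one-line identification of the $\chi$-based dualities with their symplectic counterparts. First, I would observe that for $\mathcal{R} = \Z_{p^a}$ equipped with the generating character $\chi(z) = \zeta^z$ where $\zeta = \exp(2\pi i/p^a)$, we have $\chi(x) = 1$ iff $x = 0$ in $\Z_{p^a}$, and more generally $\chi(x) \in H_t = \langle \zeta^{p^{a-t}}\rangle$ iff $x \equiv 0 \pmod{p^{a-t}}$. This is immediate from the explicit form of $\chi$. Consequently, $\mathcal{C}^{\perp_\chi} = \mathcal{C}^{\perp_{\mathrm{s}}}$, and for each $t$, $\mathcal{C}^{\perp_{\chi,t}} = \mathcal{C}^{\perp_{\mathrm{s},t}}$, as defined in \eqref{def:Cperp_st}. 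This lets us replace every $\chi$-based quantity in the upstream theorems with its symplectic analogue.

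Second, I would invoke Theorem~\ref{tEAC} to produce a $\chi$-self-orthogonal extension $\mathcal{C}' \subseteq \Z_{p^a}^{2(n+c)}$ of $\mathcal{C}$ with entanglement degree $c = \tfrac{1}{2}\rank(\mathcal{C}/(\mathcal{C} \cap \mathcal{C}^{\perp_{\mathrm{s}}}))$ and cardinality bounded by $|\mathcal{C}| \le |\mathcal{C}'| \le |\mathcal{C}| \cdot p^{\sum_{t=1}^{b-1}(b-t)\rho_t}$, where $\rho_t$ has exactly the value stated. Feeding this $\mathcal{C}'$ into Theorem~\ref{thm:eaqecc} yields an $((n,\, p^{a(n+c)}/|\mathcal{C}'|,\, D;\, c))$ EAQECC over $\Z_{p^a}$ whose minimum distance $D$ is $d_{\mathrm{s}}(\mathcal{C}^{\perp_\chi})$ or $d_{\mathrm{s}}(\mathcal{C}^{\perp_\chi} \setminus \mathcal{C})$ according to whether $\mathcal{C}^{\perp_\chi} \subseteq \mathcal{C}$; the dual identification above immediately rewrites this in the desired symplectic form. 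Inverting the cardinality bounds on $|\mathcal{C}'|$ gives the claimed two-sided bound on $K$. For the free case, the final assertion of Theorem~\ref{tEAC} directly yields $|\mathcal{C}'| = |\mathcal{C}|$, hence $K = p^{a(n+c)}/|\mathcal{C}|$.

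Third, I would observe that the minimality of $c$, i.e.\ that no $\chi$-self-orthogonal extension of $\mathcal{C}$ can have a smaller entanglement degree, is exactly what Theorem~\ref{4thm} asserts. Since that theorem is already established (via the quasi-symplectic bound of Theorem~\ref{tquasi} and Corollary~\ref{cor2}), nothing new is needed here. There is no genuine obstacle in the argument — it is a straightforward corollary of three previously proved results. The only bookkeeping point worth flagging is that for $\mathcal{R} = \Z_{p^a}$ we have $b = a$, so the exponent $\sum_{t=1}^{b-1}(b-t)\rho_t$ appearing in the lower bound on $K$ is the legitimate specialization of the general Frobenius-ring bound from Theorem~\ref{tEAC}; and that the $\rho_t$ written in terms of $\mathcal{C}^{\perp_{\chi,t}}$ in Theorem~\ref{tEAC} may, thanks to the identifications above, be rewritten in terms of $\mathcal{C}^{\perp_{\mathrm{s},t}}$ without any further work.
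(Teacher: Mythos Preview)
Your proposal is correct and follows essentially the same route as the paper: the paper states explicitly that the restatement of Theorem~\ref{dt2} ``follows by combining the results of Theorems~\ref{tEAC}, \ref{thm:eaqecc} and \ref{4thm}, using the fact that $\mathcal{C}^{\perp_\chi}=\mathcal{C}^{\perp_{\mathrm{s}}}$ and $\mathcal{C}^{\perp_{\chi,t}}=\mathcal{C}^{\perp_{\mathrm{s},t}}$.'' Your only additional content is the harmless bookkeeping remark that $b=a$ for $\mathcal{R}=\mathbb{Z}_{p^a}$, which is implicit in the paper.
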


\section{EAQECCs from Codes over Galois Rings}\label{sec5}
In this section, we will first derive an upper bound on the cardinality of a symplectic subset of $\text{GR}(p^b,m)^{2n}$ and then using this bound, we will derive an explicit formula for the minimum number of pairs of maximally entangled qudits needed for an EAQECC constructed from an additive code over $\text{GR}(p^b,m)$.  To do this, we start by reviewing some basic properties of Galois rings.

A finite commutative ring with unity is called a Galois ring if all its non-units (or equivalently, all its zero divisors  including $0)$ form an ideal generated by some prime number. Any finite field of characteristic $p$ is a Galois ring as all the non-units form a zero ideal which generated by $p$. Another example of a Galois ring is the ring $\mathbb{Z}_{p^a}$ of integers modulo a prime power $p^a$, in which all the non-units form an ideal generated by $p$.

Let $R$ be a Galois ring, in which the set of all zero-divisors of including $0$ form an ideal generated by a prime number $p$. Then the ideal generated by $p$ is the only maximal ideal of ${R}$ and the characteristic of ${R}$ is $p^b$ for some positive integer $b$. The residue field of ${R}$ is given by ${R} / \langle p \rangle \simeq  \mathbb{F}_{p^m}$ for some positive integer $m$ and  $|{R}|=p^{mb}.$ Moreover, the ring $R$ is isomorphic to the quotient ring $\mathbb{Z}_{p^b}[x]/ \langle h(x) \rangle,$ where  $h(x)\in \mathbb{Z}_{p^b}[x]$ is a monic polynomial of degree $m$ such that its reduction $(h(x) \text{ mod } p) \in \mathbb{F}_{p}[x]$ is irreducible and primitive over $\mathbb{F}_{p}.$ Thus any two Galois rings of  the same characteristic and the same cardinality are isomorphic (see \cite[Chp. 14]{wan}). 

Through out this section, let $\text{GR}(p^b,m)$ denote the Galois ring of characteristic $p^b$ and cardinality $p^{mb}.$  Further, let $h(x)\in \mathbb{Z}_{p^b}[x]$ be a monic polynomial of degree $m$ such that its reduction $(h(x) \text{ mod } p) \in \mathbb{F}_{p}[x]$ is irreducible and primitive over $\mathbb{F}_{p}$ and $h(x)$ divides $x^{p^m-1}-1$ in $\mathbb{Z}_{p^b}[x].$  Then we have  $\text{GR}(p^b,m) \cong \mathbb{Z}_{p^b}[x]/ \langle h(x) \rangle.$ If $\theta=x+\langle h(x) \rangle,$ then clearly $h(\theta)=0$ and every element of $\text{GR}(p^b,m) \cong \mathbb{Z}_{p^b}[x]/ \langle h(x) \rangle$ has a unique representation as $$r=r_0+r_1\theta+\cdots+r_{m-1}\theta^{m-1} \text{~with~} r_0,r_1,\ldots,r_{m-1} \in \mathbb{Z}_{p^b}.$$
This implies that $\text{GR}(p^b,m)$ is a free module of rank $m$ over  $\mathbb{Z}_{p^b}$ with $\{1,\theta,\theta^2,\ldots,\theta^{m-1}\}$ as a basis.
 Moreover,   there exists an element $\beta$  in $\text{GR}(p^b,m)$ having the multiplicative order as  $p^{m}-1,$ which is a root of $h(x).$ The set $\mathcal{T}= \{ 0,1, \beta,\ldots,\beta^{p^{m}-2} \}$ is called the Teichm\"{u}ller set of $\text{GR}(p^b,m)$ and each element $z \in \text{GR}(p^b,m)$ can be uniquely expressed as $z=z_0+ z_1 p+ z_2 p^2+\cdots + z_{b-1} p^{b-1},$ where $z_0,z_1,\ldots,z_{b-1} \in \mathcal{T}$ (see \cite[Chp. 14]{wan}).

The generalized Frobenius map $f$ on the Galois ring $\text{GR}(p^b,m)$ is defined by $$f(z)=z^p_0+ z^p_1 p+ z^p_2 p^2+\cdots + z^p_{b-1} p^{b-1}$$ for each $z=z_0+ z_1 p+ z_2 p^2+\cdots + z_{b-1} p^{b-1} \in \text{GR}(p^b,m),$ where $z_0,z_1,\ldots,z_{b-1} \in \mathcal{T}.$  

For $b = 1,$ the map $f$ reduces to the usual Frobenius automorphism on $\mathbb{F}_{p^m}$ defined by $f(z)=z^p$ for all $z \in \mathbb{F}_{p^m}.$ 
The generalized trace map ${\Tr}$ from $\text{GR}(p^b,m)$ to $\mathbb{Z}_{p^b}$ is defined by 
$${\Tr}(z)=z+f(z)+f^2(z)+\cdots+f^{m-1}(z) \text{ for each } z \in \text{GR}(p^b,m).$$
For $b=1$, we have $\text{GR}(p,m)\simeq  \mathbb{F}_{p^m}$; thus the generalized trace map ${\Tr}$  reduces to the usual trace map  ${\tr}: \mathbb{F}_{p^m}  \to \mathbb{F}_{p}$  defined by ${\tr}(z) =z+z^p+z^{p^2}+\cdots+z^{p^{m-1}}$ for all $z \in \mathbb{F}_{p^m}.$ 
\begin{prop}\cite{sison}  The generalized trace map ${\Tr}:\text{GR}(p^b,m) \to  \mathbb{Z}_{p^b}$ has the following properties.
\begin{itemize}
\item ${\Tr}(u+v)={\Tr}(u)+{\Tr}(v)$ for each $u,v \in \text{GR}(p^b,m).$
\item  ${\Tr}(e u)=e{\Tr}(u)$ for each $u \in \text{GR}(p^b,m)$ and $e \in \mathbb{Z}_{p^b}.$
\item  ${\Tr}$ is surjective and $\text{GR}(p^b,m)/ \ker( {\Tr} )=\mathbb{Z}_{p^b}.$
\end{itemize}
\end{prop}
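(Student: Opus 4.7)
The plan is to reduce all three properties to a single structural fact: the generalized Frobenius map $f$ is a ring automorphism of $\text{GR}(p^b,m)$ whose fixed subring is $\mathbb{Z}_{p^b}$, so that $\text{GR}(p^b,m)/\mathbb{Z}_{p^b}$ is a Galois extension with cyclic Galois group $\langle f \rangle$ of order $m$, and $\Tr = \sum_{i=0}^{m-1} f^i$ is the corresponding trace. Once this is in hand, additivity is immediate since each $f^i$ is additive, and $\mathbb{Z}_{p^b}$-linearity follows by writing $f^i(eu) = f^i(e) f^i(u) = e \cdot f^i(u)$ using multiplicativity of $f^i$ together with the fact that $f^i$ fixes $\mathbb{Z}_{p^b}$, and then summing over $i = 0, 1, \ldots, m-1$.

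The main obstacle is verifying that the explicit formula defining $f$ on Teichmüller-adic expansions actually yields a ring homomorphism, since addition of Teichmüller representatives involves nontrivial carries. I would handle this by exhibiting an alternative model: let $\tilde{f}$ be the unique $\mathbb{Z}_{p^b}$-algebra endomorphism of $\text{GR}(p^b,m) \cong \mathbb{Z}_{p^b}[x]/\langle h(x) \rangle$ sending $\theta \mapsto \theta^p$. Its well-definedness reduces to checking $h(\theta^p) = 0$, which follows from $h(x) \mid x^{p^m}-1$ in $\mathbb{Z}_{p^b}[x]$ (so $\theta^{p^m} = \theta$) and the fact that $\theta^p$ is itself a root of $h$. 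One then verifies that $\tilde{f}$ preserves the Teichmüller set $\mathcal{T}$, using the characterization $\mathcal{T} = \{t \in \text{GR}(p^b,m) : t^{p^m} = t\}$, and that $\tilde{f}(t) = t^p$ for each $t \in \mathcal{T}$. Since every element has a unique Teichmüller expansion $z = \sum_i z_i p^i$, the $\mathbb{Z}_{p^b}$-linearity of $\tilde{f}$ yields $\tilde{f}(z) = \sum_i \tilde{f}(z_i) \, p^i = \sum_i z_i^p \, p^i = f(z)$, so $f = \tilde{f}$ is a ring automorphism; the identification of its fixed subring with $\mathbb{Z}_{p^b}$ is then routine, following from the fact that $f$ acts as the standard Frobenius on the residue field $\mathbb{F}_{p^m}$, whose fixed field is $\mathbb{F}_p$.

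Finally, for surjectivity and the quotient identification, I would reduce modulo $p$: the map $\Tr$ descends to the classical trace $\tr: \mathbb{F}_{p^m} \to \mathbb{F}_p$, which is a nonzero $\mathbb{F}_p$-linear functional on an $m$-dimensional vector space and is therefore surjective. Choosing $\bar{z} \in \mathbb{F}_{p^m}$ with $\tr(\bar{z}) = 1$ and lifting it to any $z \in \text{GR}(p^b,m)$ gives $\Tr(z) \equiv 1 \pmod{p}$, so $\Tr(z)$ is a unit in $\mathbb{Z}_{p^b}$. Applying the $\mathbb{Z}_{p^b}$-linearity already established, $\Tr\bigl(\Tr(z)^{-1} z\bigr) = \Tr(z)^{-1} \Tr(z) = 1$, which gives surjectivity onto all of $\mathbb{Z}_{p^b}$. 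The quotient isomorphism $\text{GR}(p^b,m)/\ker(\Tr) \cong \mathbb{Z}_{p^b}$ then follows directly from the first isomorphism theorem applied to the surjective $\mathbb{Z}_{p^b}$-module homomorphism $\Tr$.
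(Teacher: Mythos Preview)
The paper does not prove this proposition; it is quoted from \cite{sison} without argument, so there is nothing to compare your proposal against. Your approach is the standard one and is correct in outline: identify the generalized Frobenius $f$ with the $\mathbb{Z}_{p^b}$-algebra automorphism $\tilde f : \theta \mapsto \theta^p$, deduce additivity and $\mathbb{Z}_{p^b}$-linearity of $\Tr = \sum_{i} f^i$ from those of each $f^i$, and prove surjectivity by lifting a preimage of $1$ under the residue-field trace and then using the already-established $\mathbb{Z}_{p^b}$-linearity.

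One point needs tightening. Your justification of $h(\theta^p)=0$ reads ``\ldots which follows from $h(x) \mid x^{p^m}-1$ \ldots\ and the fact that $\theta^p$ is itself a root of $h$,'' but the latter \emph{is} the statement $h(\theta^p)=0$, so as written this is circular. (Also, the divisibility stated in the paper is $h(x)\mid x^{p^m-1}-1$, not $x^{p^m}-1$; your conclusion $\theta^{p^m}=\theta$ is nonetheless correct.) The clean fix is to note that $x^{p^m-1}-1$ is separable modulo $p$, so by Hensel's lemma it splits into distinct linear factors over $\text{GR}(p^b,m)$ with root set exactly $\mathcal{T}\setminus\{0\}$; since $h$ divides it, $h$ is a product of $m$ of these linear factors, and reducing modulo $p$ forces those factors to be $x-\theta^{p^i}$ for $0\le i\le m-1$ (the roots of $\bar h$ are $\bar\theta^{\,p^i}$ and the Teichm\"uller section $\mathbb{F}_{p^m}\to\mathcal{T}$ is bijective). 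This yields $h(\theta^p)=0$ honestly, after which the rest of your argument goes through.
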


\begin{prop}\label{prop:charGR}\cite{shuqin} The map $\chi: \text{GR}(p^b,m) \to \mathbb{C}^*$ defined by
$\chi(r)=\zeta^{{\Tr}(r)}$, with $\zeta = \exp\bigl(\frac{2\pi i}{p^b}\bigr)$, is a generating character of  $\text{GR}(p^b,m).$
\end{prop}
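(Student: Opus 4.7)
The plan is to verify the defining property of a generating character, namely that every element of $\text{Hom}(\text{GR}(p^b,m),\mathbb{C}^*)$ arises as $r \mapsto \chi(ar)$ for some $a$. I would organize the proof around three steps: (i) check that $\chi$ is an additive character; (ii) observe that the map $\Phi: \text{GR}(p^b,m) \to \text{Hom}(\text{GR}(p^b,m),\mathbb{C}^*)$ sending $a$ to the character $\chi_a(r) := \chi(ar)$ is a group homomorphism; and (iii) show that $\Phi$ is injective. Since $|\text{Hom}(\text{GR}(p^b,m),\mathbb{C}^*)| = |\text{GR}(p^b,m)|$ for any finite abelian group, injectivity of $\Phi$ immediately yields surjectivity, so $\text{Hom}(\text{GR}(p^b,m),\mathbb{C}^*) = \{\chi_a : a \in \text{GR}(p^b,m)\} = \text{GR}(p^b,m)\cdot \chi$, as required.

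Steps (i) and (ii) are bookkeeping: additivity of $\chi$ follows because $\Tr(r+s) = \Tr(r)+\Tr(s)$ in $\mathbb{Z}_{p^b}$ and $\zeta$ has order $p^b$, while $\Phi(a+b)(r) = \zeta^{\Tr((a+b)r)} = \zeta^{\Tr(ar)+\Tr(br)} = \Phi(a)(r)\cdot \Phi(b)(r)$ by the $\mathbb{Z}_{p^b}$-linearity of $\Tr$. The substantive step is (iii). Suppose $\chi_a$ is the trivial character, so $\Tr(ar) \equiv 0 \pmod{p^b}$ for every $r \in \text{GR}(p^b,m)$. I want to conclude $a = 0$.

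For that, I would use the local structure of $\text{GR}(p^b,m)$: every nonzero element can be written uniquely as $a = p^j u$ with $u$ a unit and $0 \le j \le b-1$. If $a \ne 0$, then as $r$ ranges over all of $\text{GR}(p^b,m)$ the product $ar$ ranges over the ideal $p^j \, \text{GR}(p^b,m)$; in particular, $\{\Tr(ar) : r \in \text{GR}(p^b,m)\} = p^j \Tr(\text{GR}(p^b,m)) = p^j \mathbb{Z}_{p^b}$, where the last equality uses the surjectivity of $\Tr$ onto $\mathbb{Z}_{p^b}$ stated in the preceding proposition. Since $j \le b-1$, the set $p^j \mathbb{Z}_{p^b}$ contains the nonzero element $p^j$; concretely, choosing $t$ with $\Tr(t) = 1$ and $r = u^{-1}t$ yields $\Tr(ar) = p^j \not\equiv 0 \pmod{p^b}$, contradicting our assumption. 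Therefore $a = 0$, and $\Phi$ is injective.

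The main obstacle, and really the only nontrivial point, is the non-degeneracy argument in step (iii). Once one invokes the surjectivity of $\Tr$ onto $\mathbb{Z}_{p^b}$ together with the local structure of the Galois ring (every ideal is of the form $p^j \text{GR}(p^b,m)$, and units map the ring bijectively onto itself), non-degeneracy falls out cleanly. The only subtle ingredient being hidden is the surjectivity of $\Tr$, which follows by a standard lift-from-$\mathbb{F}_{p^m}$ argument using the Teichm\"uller expansion but is available to us as a quoted fact.
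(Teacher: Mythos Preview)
Your proof is correct. Steps (i) and (ii) are indeed routine, and your non-degeneracy argument in step (iii) is clean: writing a nonzero $a$ as $p^j u$ with $u$ a unit and $0\le j\le b-1$, then using $\mathbb{Z}_{p^b}$-linearity and surjectivity of $\Tr$ (both supplied by the preceding proposition) to produce $r$ with $\Tr(ar)=p^j\ne 0$ in $\mathbb{Z}_{p^b}$, is exactly the right move. The counting step $|\text{Hom}(\text{GR}(p^b,m),\mathbb{C}^*)|=|\text{GR}(p^b,m)|$ is standard Pontryagin duality for finite abelian groups, so injectivity of $\Phi$ indeed gives surjectivity.

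As for comparison with the paper: there is nothing to compare against. The paper does not prove this proposition at all; it is stated with a citation to \cite{shuqin} and no proof is supplied. So your argument is not an alternative to the paper's approach but rather a self-contained verification of a fact the paper simply imports from the literature. In that sense your write-up adds value, since it shows the result follows directly from the properties of $\Tr$ already recorded in the paper.
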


\begin{defin} 
\label{def:trace_duals}
For an additive code $\mathcal{C}$ of $\text{GR}(p^b,m)^{2n},$  we define
\begin{itemize}
\item the trace-symplectic dual of $\mathcal{C}$ as
$$\mathcal{C}^{\perp_{\Tr}} = \{v \in \text{GR}(p^b,m)^{2n} : \Tr( \langle v | c\rangle_{\mathrm{s}}) = 0 \text{~for all~} c \in \mathcal{C}\},$$
\item and for $t = 0,1,\ldots,b$, the trace-symplectic $t$-dual of $\mathcal{C}$ as 
$$\mathcal{C}^{\perp_{\Tr,t}} = \{v \in \text{GR}(p^b,m)^{2n} ~:~ \Tr( \langle v | c\rangle_{\mathrm{s}}) \equiv 0 \!\!\! \pmod{p^{b-t}} \text{~for all~} c \in \mathcal{C}\}.$$
\end{itemize}
\end{defin}
For $b=1,$ we have $\text{GR}(p,m)\simeq  \mathbb{F}_{p^m},$ thus $\mathcal{C}^{\perp_{\Tr}}$ is equal to 
$\mathcal{C}^{\perp_{\tr}} =\{v \in \mathbb{F}_{p^m}^{2n} ~:~ \tr(\langle v | c\rangle_{\mathrm{s}}) = 0 \text{ for all } c \in \mathcal{C}\}$.

\begin{prop}\label{prop:dualGR}  For an additive code $\mathcal{C}$ of $\text{GR}(p^b,m)^{2n},$ we have $\mathcal{C}^{\perp_\chi}=\mathcal{C}^{\perp_{\Tr}}$, and $\mathcal{C}^{\perp_{\chi,t}}=\mathcal{C}^{\perp_{\Tr,t}}$ for $t = 0,1,\ldots,b$.
\end{prop}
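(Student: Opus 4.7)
The plan is to reduce both equalities to straightforward unpacking of definitions using the explicit form of the generating character $\chi$ from Proposition~\ref{prop:charGR}, namely $\chi(r) = \zeta^{\Tr(r)}$ with $\zeta = \exp(2\pi i / p^b)$. Since $\zeta$ has order exactly $p^b$ and $\Tr(r)$ is naturally viewed as an element of $\mathbb{Z}_{p^b}$, the powers of $\zeta$ are in bijective correspondence with residues modulo $p^b$, and the entire argument boils down to translating ``$\chi$-values land in a prescribed subgroup'' into ``$\Tr$-values satisfy a prescribed divisibility condition.''

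For the first equality $\mathcal{C}^{\perp_\chi} = \mathcal{C}^{\perp_{\Tr}}$, I would observe that $\chi(r) = 1$ if and only if $\Tr(r) = 0$ in $\mathbb{Z}_{p^b}$. The defining condition for $\mathcal{C}^{\perp_\chi}$ uses $\chi(\langle c \mid v \rangle_{\mathrm{s}})$, whereas the definition of $\mathcal{C}^{\perp_{\Tr}}$ uses $\Tr(\langle v \mid c \rangle_{\mathrm{s}})$. By antisymmetry of the symplectic form, $\langle v \mid c \rangle_{\mathrm{s}} = -\langle c \mid v \rangle_{\mathrm{s}}$, and by $\mathbb{Z}_{p^b}$-linearity of $\Tr$, the two conditions differ only by a sign; hence they are equivalent for all $c \in \mathcal{C}$, and the two duals coincide.

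For the second equality $\mathcal{C}^{\perp_{\chi,t}} = \mathcal{C}^{\perp_{\Tr,t}}$, I would describe the subgroup $H_t = \langle \zeta^{p^{b-t}} \rangle$ explicitly as the set of powers $\zeta^k$ with $k \equiv 0 \pmod{p^{b-t}}$ in $\mathbb{Z}_{p^b}$. Applying this with $k = \Tr(\langle v \mid c \rangle_{\mathrm{s}})$ shows that $\chi(\langle v \mid c \rangle_{\mathrm{s}}) \in H_t$ for every $c \in \mathcal{C}$ is equivalent to $\Tr(\langle v \mid c \rangle_{\mathrm{s}}) \equiv 0 \pmod{p^{b-t}}$ for every $c \in \mathcal{C}$, which is precisely the defining condition for $\mathcal{C}^{\perp_{\Tr,t}}$ in Definition~\ref{def:trace_duals}. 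Specializing to $t = 0$ recovers the first equality as a sanity check, since $H_0 = \{1\}$ and $\equiv 0 \pmod{p^b}$ just means equal to $0$ in $\mathbb{Z}_{p^b}$.

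There is essentially no substantive obstacle: once the generating character is written in the form $\chi(r) = \zeta^{\Tr(r)}$, both equalities follow immediately from the order-$p^b$ cyclic structure of $\langle \zeta \rangle$ and the identification of its subgroups $H_t$ with residue conditions modulo $p^{b-t}$. The only minor point requiring care is the asymmetry in the direction of the symplectic product between the two families of definitions, which is resolved by the antisymmetry observation above.
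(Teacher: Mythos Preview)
Your proposal is correct and follows essentially the same approach as the paper: both arguments rest on the observation from Proposition~\ref{prop:charGR} that $\chi(r)=\zeta^{\Tr(r)}$, whence $\chi(r)=1$ iff $\Tr(r)=0$ and $\chi(r)\in H_t=\langle\zeta^{p^{b-t}}\rangle$ iff $\Tr(r)\equiv 0\pmod{p^{b-t}}$. Your treatment is slightly more careful in explicitly handling the sign discrepancy between $\langle c\mid v\rangle_{\mathrm{s}}$ and $\langle v\mid c\rangle_{\mathrm{s}}$ in the two definitions, a point the paper passes over silently.
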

\begin{proof}Using  Proposition \ref{prop:charGR}, we observe that $\chi(r)=1$ if and only if ${\Tr}(r)=0,$ and $\chi(r) \in \langle \zeta^{p^{b-t}} \rangle $ if and only if ${\Tr}(r) \equiv 0 \pmod{p^{b-t}}$.  This implies that $\mathcal{C}^{\perp_\chi}=\mathcal{C}^{\perp_{\Tr}}$, and $\mathcal{C}^{\perp_{\chi,t}}=\mathcal{C}^{\perp_{\Tr,t}}.$
\end{proof} 
The definition of a symplectic subset (Definition~\ref{def:sympl}) of $\text{GR}(p^b,m)^{2n}$  reduces to the following:
\begin{defin} \label{def:sympl2} A subset $\{a_{11},a_{12},a_{21},a_{22},\ldots,a_{e1},a_{e2}\}$
 of $\text{GR}(p^b,m)^{2n},$ is said to be a symplectic subset if
  $\Tr( \langle a_{i1} \mid a_{j1} \rangle_{\mathrm{s}}) =\Tr(  \langle a_{i2} \mid a_{j2} \rangle_{\mathrm{s}} )= \Tr( \langle a_{i1} \mid a_{k 2} \rangle_{\mathrm{s}} )= 0$ and  $\Tr( \langle a_{i1} \mid a_{i2} \rangle_{\mathrm{s}} )\neq 0$ for all $i ,j, k \in \{1,2,\ldots,e\}$ with $i \neq k.$
 \end{defin}

\begin{defin}\cite{sison} Two bases $\{\beta_1,\beta_2,\ldots,\beta_m\}$ and  $\{\gamma_1,\gamma_2,\ldots,\gamma_m\}$ of $\text{GR}(p^b,m)$ as a free module over $\mathbb{Z}_{p^b}$ are said to be dual if ${\Tr}(\beta_i\gamma_j)=\delta_{ij},$ where $\delta_{ij}$ denotes the Kronecker delta.\end{defin}
\begin{thm}\cite{sison} Every basis $\{\beta_1,\beta_2,\ldots,\beta_m\}$  of $\text{GR}(p^b,m)$ as a free module over $\mathbb{Z}_{p^b}$ has a unique dual basis.
\end{thm}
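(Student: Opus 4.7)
My plan is to deduce both existence and uniqueness of the dual basis from the non-degeneracy of the trace pairing, which in turn follows from the Frobenius character property already set up in Proposition~\ref{prop:charGR}. The proof will construct the dual basis as the image of the standard dual of $\{\beta_1,\ldots,\beta_m\}$ in $\text{Hom}_{\mathbb{Z}_{p^b}}(\text{GR}(p^b,m),\mathbb{Z}_{p^b})$ under a canonical isomorphism induced by $\Tr$.

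The first step is to establish that the symmetric $\mathbb{Z}_{p^b}$-bilinear form $B(x,y) := \Tr(xy)$ on $\text{GR}(p^b,m)$ is non-degenerate, i.e.\ $B(x,y) = 0$ for every $x$ forces $y = 0$. Here I would invoke Proposition~\ref{prop:charGR}: since $\chi(r) = \zeta^{\Tr(r)}$ is a generating character, the assignment $y \mapsto \chi_y$, where $\chi_y(x) := \chi(xy)$, is a group isomorphism from $(\text{GR}(p^b,m),+)$ onto its character group. If $\Tr(xy) = 0$ for all $x$, then $\chi_y$ is the trivial character, so $y = 0$. This is the one essential ring-theoretic input, and will be the main (but short) obstacle. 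Next, consider the $\mathbb{Z}_{p^b}$-linear map $\Phi: \text{GR}(p^b,m) \to \text{Hom}_{\mathbb{Z}_{p^b}}(\text{GR}(p^b,m),\mathbb{Z}_{p^b})$ defined by $\Phi(y)(x) = \Tr(xy)$. Non-degeneracy says $\Phi$ is injective; since both source and target are free $\mathbb{Z}_{p^b}$-modules of rank $m$ and hence have the same (finite) cardinality $p^{mb}$, injectivity promotes $\Phi$ to an isomorphism.

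With $\Phi$ an isomorphism, let $\{\beta_1^*,\ldots,\beta_m^*\}$ denote the coordinate functionals in $\text{Hom}_{\mathbb{Z}_{p^b}}(\text{GR}(p^b,m),\mathbb{Z}_{p^b})$ dual to $\{\beta_1,\ldots,\beta_m\}$, i.e.\ $\beta_i^*(\beta_j) = \delta_{ij}$. Set $\gamma_i := \Phi^{-1}(\beta_i^*)$. Then $\Tr(\beta_j \gamma_i) = \Phi(\gamma_i)(\beta_j) = \beta_i^*(\beta_j) = \delta_{ij}$, so the duality relation holds, and $\{\gamma_i\}$ is a $\mathbb{Z}_{p^b}$-basis of $\text{GR}(p^b,m)$ because it is the $\Phi^{-1}$-image of a basis of the codomain. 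For uniqueness, if $\{\gamma_i'\}$ is any other dual basis, then by definition $\Phi(\gamma_i')(\beta_j) = \Tr(\beta_j \gamma_i') = \delta_{ij} = \beta_i^*(\beta_j)$ for all $j$, hence $\Phi(\gamma_i') = \beta_i^*= \Phi(\gamma_i)$, and the injectivity of $\Phi$ yields $\gamma_i' = \gamma_i$. All steps beyond the non-degeneracy of $B$ are formal module-theoretic manipulations, so the only real work lies in Step~1.
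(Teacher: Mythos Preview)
Your argument is correct: non-degeneracy of the trace pairing follows cleanly from Proposition~\ref{prop:charGR}, the resulting map $\Phi$ is an isomorphism by a cardinality count, and existence and uniqueness of the dual basis are then formal. Note, however, that the paper does not supply its own proof of this theorem; it is quoted from \cite{sison} without argument, so there is no in-paper proof to compare your approach against.
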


 Next we provide an upper bound on the size of a symplectic subset of $\text{GR}(p^b,m)^{2n}.$
\begin{prop}\label{prop:symGR}  If $\{a_{11},a_{12},a_{21},a_{22},\ldots,a_{e1},a_{e2}\} \subseteq \text{GR}(p^b,m)^{2n}$ is a symplectic subset of $\text{GR}(p^b,m)^{2n},$ then $e \leq nm.$ Moreover, for any given $z_1,z_2,\ldots,z_e \in \mathbb{Z}_{p^b}$ with $1 \leq e \leq nm,$ there exists a symplectic subset $\{a_{11},a_{12},a_{21},a_{22},$ $\ldots,a_{e1},a_{e2}\} \subseteq \text{GR}(p^b,m)^{2n}$ such that $ \chi(\langle a_{j1} |  a_{j2}\rangle_{\mathrm{s}})= e^{i \frac{2\pi }{p^b} z_j  }$ for $1\leq j \leq e.$
\end{prop}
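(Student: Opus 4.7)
The plan is to reduce the proposition to its $\mathbb{Z}_{p^b}$-counterpart, Corollary~\ref{cor2}, by exhibiting a $\mathbb{Z}_{p^b}$-module isomorphism $\phi:\text{GR}(p^b,m)^{2n}\to\mathbb{Z}_{p^b}^{2nm}$ that carries the trace-symplectic form on the source to the standard symplectic form on the target. Once such a $\phi$ is in hand, both the upper bound and the existence assertion follow from nearly trivial constructions in $\mathbb{Z}_{p^b}^{2nm}$.

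To build $\phi$, I would fix a $\mathbb{Z}_{p^b}$-basis $\{\beta_1,\ldots,\beta_m\}$ of $\text{GR}(p^b,m)$ and take its trace-dual $\{\gamma_1,\ldots,\gamma_m\}$, whose existence and uniqueness are guaranteed by the theorem stated just before this proposition; thus $\Tr(\beta_k\gamma_\ell)=\delta_{k\ell}$. For $(v,w)\in\text{GR}(p^b,m)^{2n}$, I would expand each $v_i=\sum_k v_i^{(k)}\gamma_k$ in the $\gamma$-basis and each $w_i=\sum_k w_i^{(k)}\beta_k$ in the $\beta$-basis, and set $\phi(v,w):=\bigl((v_i^{(k)})_{i,k},\,(w_i^{(k)})_{i,k}\bigr)\in\mathbb{Z}_{p^b}^{2nm}$. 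This is patently a $\mathbb{Z}_{p^b}$-module isomorphism, and a direct expansion of $w\cdot v'-w'\cdot v$ together with $\Tr(\beta_k\gamma_\ell)=\delta_{k\ell}$ yields the key identity
\[
\Tr\bigl(\langle(v,w)\mid(v',w')\rangle_{\text{s}}\bigr)\;=\;\langle\phi(v,w)\mid\phi(v',w')\rangle_{\text{s}},
\]
where the right side is the standard symplectic form on $\mathbb{Z}_{p^b}^{2nm}$. I expect this identity to be the main technical step; the asymmetric use of the $\beta$- and $\gamma$-bases on the two halves is precisely what forces the cross-terms to collapse to Kronecker deltas.

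The bound $e\le nm$ is then immediate: by Proposition~\ref{prop:charGR}, $\chi(x)=1$ iff $\Tr(x)=0$, so $\phi$ bijectively maps a symplectic subset of $\text{GR}(p^b,m)^{2n}$ (in the sense of Definition~\ref{def:sympl}) onto a symplectic subset of $\mathbb{Z}_{p^b}^{2nm}$ (in the sense of Definition~\ref{def:sympl2}), and Corollary~\ref{cor2} forces $e\le nm$. For the existence assertion, I would first build the required set in $\mathbb{Z}_{p^b}^{2nm}$ and then pull it back via $\phi^{-1}$. Given nonzero $z_1,\ldots,z_e\in\mathbb{Z}_{p^b}$ with $1\le e\le nm$, let $E_j\in\mathbb{Z}_{p^b}^{nm}$ denote the $j$-th standard basis vector and put $A_{j1}:=(-z_jE_j,0)$ and $A_{j2}:=(0,E_j)$. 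A short check shows $\{A_{11},A_{12},\ldots,A_{e1},A_{e2}\}$ is a symplectic subset of $\mathbb{Z}_{p^b}^{2nm}$ with $\langle A_{j1}\mid A_{j2}\rangle_{\text{s}}=z_j$; setting $a_{j\ell}:=\phi^{-1}(A_{j\ell})$ and invoking the displayed identity gives $\chi(\langle a_{j1}\mid a_{j2}\rangle_{\text{s}})=\zeta^{z_j}=e^{2\pi i z_j/p^b}$, as required.
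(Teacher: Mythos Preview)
Your proposal is correct and follows essentially the same strategy as the paper: transport the problem to $\mathbb{Z}_{p^b}^{2nm}$ via a $\mathbb{Z}_{p^b}$-linear isomorphism that converts the trace-symplectic form into the ordinary symplectic form, then invoke Corollary~\ref{cor2} for the bound and build the explicit symplectic subset from standard basis vectors (the paper writes this construction directly in $\text{GR}(p^b,m)^{2n}$, which amounts to applying your $\phi^{-1}$). Your asymmetric use of the dual bases $\{\gamma_k\}$ and $\{\beta_k\}$ on the two halves of $(v,w)$ is a clean way to make the key identity $\Tr(\langle\cdot\mid\cdot\rangle_{\text{s}})=\langle\phi(\cdot)\mid\phi(\cdot)\rangle_{\text{s}}$ transparent; the paper states the same identity with a single-basis expansion and leaves its verification to the reader.
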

\begin{proof} To prove this result, we first note that $\chi(r)=e^{i \frac{2\pi}{p^b}  {\Tr}(r)}$ for each $r \in \text{GR}(p^b,m),$ which implies that $\chi(r)=1$ if and only if ${\Tr}(r)=0.$ Let $\gamma_1,\gamma_2,\ldots,\gamma_m$ be a  basis of $\text{GR}(p^b,m)$ as a free module over $\mathbb{Z}_{p^b}$ and $\beta_1,\beta_2,\ldots,\beta_m$ be the dual basis of $\gamma_1,\gamma_2,\ldots,\gamma_m.$ Any element $r \in \text{GR}(p^b,m)$ can be uniquely written as $$r=x_1\gamma_1+x_2\gamma_2+\cdots+x_{m}\gamma_{m}=y_1\beta_1+y_2\beta_2+\cdots+y_{m}\beta_{m} \text{~with~} x_i, y_j \in \mathbb{Z}_{p^b}.$$

 Now define two maps $\phi_\gamma: \text{GR}(p^b,m) \to \mathbb{Z}_{p^b}^m$ and
$\phi_\beta: \text{GR}(p^b,m) \to \mathbb{Z}_{p^b}^m$ as follows:
$$r=x_1\gamma_1+x_2\gamma_2+\cdots+x_{m}\gamma_{m}   \stackrel{\phi_\gamma}{\longmapsto}  (x_1,x_2,\ldots,x_m)$$ 
and $$r=y_1\beta_1+y_2\beta_2+\cdots+y_{m}\beta_{m}\stackrel{\phi_\beta}{\longmapsto}  (y_1,y_2,\ldots,y_m).$$ 
Define a map $\phi: \text{GR}(p^b,m) ^{2n} \to \mathbb{Z}_{p^b}^{2nm}$ as follows:
$$(d_1,d_2,\ldots,d_n,e_1,e_2,\ldots,e_n) \stackrel{\phi}{\longmapsto} (\phi_\gamma(d_1),\phi_\gamma(d_2),\ldots,\phi_\gamma(d_n),\phi_\beta(e_1),\phi_\beta(e_2),\ldots,\phi_\beta(e_n)).$$
It can be checked that $${\Tr}\bigl({\langle (d_1,d_2,\ldots,d_n,e_1,e_2,\ldots,e_n) \mid (d'_1,d'_2,\ldots,d'_n,e'_1,e'_2,\ldots,e'_n) \rangle}_{\mathrm{s}}\bigr)~~~~~~~~~~~~~~~~~~~~~~~~~~~~~~~~~~~~~~~~~~~~~~~~~~~~$$ 
$$={\langle  \phi((d_1,d_2,\ldots,d_n,e_1,e_2,\ldots,e_n)) \mid \phi((d'_1,d'_2,\ldots,d'_n,e'_1,e'_2,\ldots,e'_n)) \rangle}_{\mathrm{s}}~~~~~~~~~~~~~~~~~~~~~ \, $$
$$={\langle (\phi_\gamma(d_1),\ldots,\phi_\gamma(d_n),\phi_\beta(e_1),\ldots,\phi_\beta(e_n)) \mid (\phi_\gamma(d'_1),\ldots,\phi_\gamma(d'_n),\phi_\beta(e'_1),\ldots,\phi_\beta(e'_n)) \rangle}_{\mathrm{s}} \, . $$

From this, we see that if $\{a_{11},a_{12},a_{21},a_{22},\ldots,a_{e1},a_{e2}\} \subseteq \text{GR}(p^b,m)^{2n}$ is a symplectic subset of $\text{GR}(p^b,m)^{2n},$ then $\{\phi(a_{11}),\phi(a_{12}),\phi(a_{21}),\phi(a_{22}),\ldots,\phi(a_{e1}),\phi(a_{e2})\} \subseteq \mathbb{Z}_{p^b}^{2mn}$ is a symplectic subset of $\mathbb{Z}_{p^b}^{2mn}$. Hence, by Corollary \ref{cor2}, we have $e \leq nm.$ 

To prove the next part,  let $t $ be a non-negative integer such that $tm < e \leq (t+1)m$. Then, for $j = km+ \ell$, with $0 \le k \le t-1$ and $1 \le \ell \le m$, set 
$$
a_{j1} =  (\underbrace{ 0,0,\ldots,0}_{n+k},\beta_{\ell},\underbrace{ 0,0,\ldots,0}_{n-k-1})\ \text{ and } \ a_{j2} =(\underbrace{ 0,0,\ldots,0}_{k},z_{km+\ell}\gamma_{\ell},\underbrace{ 0,0,\ldots,0}_{2n-k-1})  , 
$$
and for $j = tm + \ell$, with $1 \le \ell \le e-tm$, set
$$
a_{j1} =(\underbrace{ 0,0,\ldots,0}_{n+t},\beta_{\ell},\underbrace{ 0,0,\ldots,0}_{n-t-1})  \ \text{ and } \ a_{j2} = (\underbrace{ 0,0,\ldots,0}_{t},z_{tm+\ell}\gamma_{\ell},\underbrace{ 0,0,\ldots,0}_{2n-t-1})  .
$$ 
Then, $\{a_{11},a_{12},a_{21},a_{22},$ $\ldots,a_{e1},a_{e2}\}$ is a symplectic subset of $\text{GR}(p^b,m)^{2n}$ with the required property.
\end{proof}

We can now obtain an explicit formula for the minimum entanglement degree of any $\chi$-self-orthogonal extension of an additive code $\mathcal{C} \subseteq \text{GR}(p^b,m)^{2n}.$ 

\begin{thm}\label{EAQZ} Any minimal $\chi$-self-orthogonal extension of an additive code $\mathcal{C} \subseteq \text{GR}(p^b,m)^{2n}$ has entanglement degree equal to $\left\lceil { \frac{1}{2m} \rank(\mathcal{C}/(\mathcal{C} \cap \mathcal{C}^{\perp_{\Tr}}))  }\right\rceil.$ 
\end{thm}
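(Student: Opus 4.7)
The approach is to mirror the structure of Theorem~\ref{4thm}: establish matching upper and lower bounds on the minimum entanglement degree $c_{\min}$ of $\mathcal{C}$ via Proposition~\ref{prop:ext}, which converts $\chi$-self-orthogonal extensions into symplectic subsets, and Proposition~\ref{prop:symGR}, which controls the size of such subsets in the Galois-ring setting. Throughout, Proposition~\ref{prop:dualGR} allows me to identify $\mathcal{C}^{\perp_\chi}$ with $\mathcal{C}^{\perp_{\Tr}}$ freely, so the target quantity may equivalently be written with $\mathcal{C}^{\perp_\chi}$ in place of $\mathcal{C}^{\perp_{\Tr}}$.

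First I would invoke Theorem~\ref{sym} to fix a generating set of $\mathcal{C}$ (as a $\mathbb{Z}_{p^b}$-module) consisting only of isotropic generators and exactly $c := \frac{1}{2}\rank(\mathcal{C}/(\mathcal{C} \cap \mathcal{C}^{\perp_\chi}))$ hyperbolic pairs $(u_{i1}, u_{i2})$, $i = 1, \ldots, c$. For the lower bound, let $\mathcal{C}'$ be any $\chi$-self-orthogonal extension of $\mathcal{C}$ with entanglement degree $c'$. By the ``only if'' direction of Proposition~\ref{prop:ext}, $\text{GR}(p^b,m)^{2c'}$ must contain a symplectic subset of cardinality $2c$. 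The first part of Proposition~\ref{prop:symGR} then forces $c \le c' m$, and since $c'$ is an integer we obtain $c' \ge \lceil c/m \rceil = \lceil \frac{1}{2m}\rank(\mathcal{C}/(\mathcal{C} \cap \mathcal{C}^{\perp_{\Tr}})) \rceil$.

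For the matching upper bound, set $c^* := \lceil c/m \rceil$, and for each $i = 1,\ldots,c$ choose $z_i \in \mathbb{Z}_{p^b}$ with $\chi(\langle u_{i1} \mid u_{i2}\rangle_{\mathrm{s}}) = \zeta^{z_i}$, where $\zeta = \exp(2\pi i / p^b)$. Since $c \le c^* m$, the constructive (second) part of Proposition~\ref{prop:symGR} yields a symplectic subset $\{a_{11}, a_{12}, \ldots, a_{c1}, a_{c2}\} \subset \text{GR}(p^b,m)^{2c^*}$ with $\chi(\langle a_{i1} \mid a_{i2}\rangle_{\mathrm{s}}) = \zeta^{z_i} = \chi(\langle u_{i1} \mid u_{i2}\rangle_{\mathrm{s}})$ for every $i$. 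The ``if'' direction of Proposition~\ref{prop:ext} then produces a $\chi$-self-orthogonal extension of $\mathcal{C}$ of entanglement degree exactly $c^*$. Combining the two bounds yields $c_{\min} = c^* = \lceil \frac{1}{2m}\rank(\mathcal{C}/(\mathcal{C} \cap \mathcal{C}^{\perp_{\Tr}})) \rceil$, which is the claim.

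I do not anticipate a serious obstacle: both halves amount to bookkeeping around the two cited propositions, the essential new ingredient over the $\mathbb{Z}_{p^a}$ case of Theorem~\ref{4thm} being the $m$-fold gain in symplectic-subset capacity when working over $\text{GR}(p^b,m)$ rather than $\mathbb{Z}_{p^b}$, and the consequent ceiling in the formula. The only point needing mild care is aligning the character values of the hyperbolic pairs coming from Theorem~\ref{sym} with those of the symplectic subset furnished by Proposition~\ref{prop:symGR}, and this is handled by the explicit choice of the exponents $z_i$ above.
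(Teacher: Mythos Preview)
Your proposal is correct and follows essentially the same approach as the paper's own proof: both obtain the lower bound by combining Theorem~\ref{sym}, Proposition~\ref{prop:ext}, and the first half of Proposition~\ref{prop:symGR}, and the upper bound by invoking the constructive second half of Proposition~\ref{prop:symGR} with $z_j$ chosen so that $\zeta^{z_j}=\chi(\langle u_{j1}\mid u_{j2}\rangle_{\mathrm{s}})$ (equivalently, $z_j=\Tr(\langle u_{j1}\mid u_{j2}\rangle_{\mathrm{s}})$), and then applying Proposition~\ref{prop:ext} again.
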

\begin{proof} 
Let $c := \frac12 \rank(\mathcal{C}/(\mathcal{C} \cap \mathcal{C}^{\perp_{{\chi}}}))  $, which, by Proposition~\ref{prop:dualGR}, equals 
$\frac{1}{2}  \rank(\mathcal{C}/(\mathcal{C} \cap \mathcal{C}^{\perp_{\Tr}}))  $. Let $c_{\min}$ be the entanglement degree of a minimal $\chi$-self-orthogonal extension, $\mathcal{C}'$, of $\mathcal{C}$. We want to show that $c_{\min} = \lceil \frac1m c\rceil$.

From Theorem~\ref{thm:stdform}, $\mathcal{C}$ has a generating set with $c$ hyperbolic pairs. Then, by Proposition~\ref{prop:ext}, there is a symplectic subset of $\mathbb{Z}_{p^a}^{2c_{\min}}$ of cardinality $2c$, so that by Proposition~\ref{prop:symGR}, we have $c \le mc_{\min}$. Thus, $c_{\min} \ge \lceil \frac1m c\rceil$.

For the inequality in the opposite direction, set $\nu := \lceil \frac1m c\rceil$, so that $c \le m\nu$. Let $(v_j,w_j)$ and $(x_j,y_j)$, $j = 1,2,\ldots,c$, be the hyperbolic pairs in the generating set of $\mathcal{C}$, and let $z_j = {\Tr}(\langle (v_j,w_j) \mid (x_j,y_j)\rangle_{\mathrm{s}})$, $j=1,2,\ldots,c$. By Proposition~\ref{prop:symGR}, there exists a symplectic subset $\{a_{11},a_{12},a_{21},a_{22},$ $\ldots,a_{c1},a_{c2}\} \subseteq \text{GR}(p^b,m)^{2\nu}$ such that $ \chi(\langle a_{j1} |  a_{j2}\rangle_{\mathrm{s}}) = e^{i \frac{2\pi }{p^b} z_j  }$ for $1\leq j \leq c$. Since $e^{i \frac{2\pi }{p^b} z_j} = \chi(\langle (v_j,w_j) \mid (x_j,y_j)\rangle_{\mathrm{s}})$, we have a symplectic subset satisfying the condition of Proposition~\ref{prop:ext}. Hence, there exists a $\chi$-self-orthogonal extension $\mathcal{C}'$ with entanglement degree $\nu$. Therefore, $c_{\min} \le \nu = \lceil \frac1m c\rceil$, completing the proof of the theorem.
\end{proof}

Theorem~\ref{dt3} is an immediate consequence of Theorems~\ref{thm:2step}, \ref{thm:eaqecc},  and \ref{EAQZ}, again using the fact (Proposition \ref{prop:dualGR}) that $\mathcal{C}^{\perp_\chi} = \mathcal{C}^{\perp_{\Tr}}$ and $\mathcal{C}^{\perp_{\chi,t}}=\mathcal{C}^{\perp_{\Tr,t}}$. The theorem is restated below with the numbers $\rho_t$ made explicit.

\begin{thm}[Restatement of Theorem~\ref{dt3}] 
\label{thm:dt3_restatement} Let $\mathcal{C} \subseteq \text{GR}(p^b,m)^{2n}$ be an additive code over the Galois ring $\text{GR}(p^b,m)$.   From $\mathcal{C}$, we can construct an  $((n,K,D;c))$ EAQECC over $\text{GR}(p^b,m),$ where the minimum number, $c$, of pairs of maximally entangled qudits needed for the construction is equal to  $\left\lceil \frac1{2m}\rank(\mathcal{C}/(\mathcal{C} \cap \mathcal{C}^{\perp_{\Tr}}))  \right\rceil$, the minimum distance is 
\begin{equation*}
D~=~\left\{\begin{array}{ll}
d_{\mathrm{s}}(\mathcal{C}^{\perp_{\Tr}}) & \text{~~if~~} \mathcal{C}^{\perp_{\Tr}} \subseteq \mathcal{C} \\
d_{\mathrm{s}}(\mathcal{C}^{\perp_{\Tr}}\setminus \mathcal{C} )  & \text{~~otherwise}\,,
\end{array}\right.
\end{equation*}
and the dimension $K$ is bounded as $p^{bm(n+c)}/ (|\mathcal{C}| \, {p^{\sum_{t=1}^{b-1} (b-t)\rho_t}}) \leq  K \leq p^{bm(n+c)}/|\mathcal{C}|$, with $\rho_t := \rank(\mathcal{C}/(\mathcal{C} \cap \mathcal{C}^{\perp_{\Tr,t-1}}))-\rank(\mathcal{C}/(\mathcal{C} \cap \mathcal{C}^{\perp_{\Tr,t}}))$. If either \emph{(a)}~$\mathcal{C}$ is free or \emph{(b)}~$\mathcal{C}/(\mathcal{C} \cap \mathcal{C}^{\perp_{\Tr}})$ is a free module over $\mathbb{Z}_{p^b}$, then $K = p^{bm(n+c)}/|\mathcal{C}|$. In the case of \emph{(b)}, we additionally have $c=\left\lceil \frac1{2m}[\rank(\mathcal{C}) - \rank(\mathcal{C} \cap \mathcal{C}^{\perp_{\Tr}})]\right\rceil$.
\end{thm}

Corollary~\ref{cor:Fpm} follows from the observations that, for finite fields $\F_{p^m}$, the generalized trace map $\Tr$ reduces to the usual trace map $\tr: \F_{p^m} \to \F_p$, and $\mathcal{C},$ $\mathcal{C} \cap \mathcal{C}^{\perp_{\tr}}$ and $\mathcal{C}/(\mathcal{C} \cap \mathcal{C}^{\perp_{\tr}})$ are vector spaces over $\F_p$, so that, being free modules, their rank equals their dimension over $\F_p$. 

As an application of Theorem~\ref{thm:dt3_restatement}, we can have a CSS-like construction of EAQECCs starting from two additive codes $\mathcal{C}_1$ and $\mathcal{C}_2$ of length $n$ over $\text{GR}(p^b,m)$. This is done by applying Theorem~\ref{thm:dt3_restatement} to the code $\mathcal{C} = \mathcal{C}_1 \oplus \mathcal{C}_2$. We provide the details of this construction in Appendix~\ref{app:CSS}.

Finally, instead of starting from an additive code of length $2n$ over $\text{GR}(p^b,m)$, the construction in Theorem~\ref{thm:dt3_restatement} can be equivalently, and somewhat more conveniently, described as starting from an additive code of length $n$ over $\text{GR}(p^b,2m)$. This requires the extension to Galois rings of the machinery of the trace-alternating form that exists over finite fields \cite{calder}, \cite[Section~IV-B]{ketkar}. We describe this extension in Appendix~\ref{app:trace-alt-form}.

\section{EAQECCs from Lengthened Codes}\label{sec:lengthening}

The theory developed in the preceding sections shows that an EAQECC can be constructed from an additive code $\mathcal{C}$ over the Galois ring $\text{GR}(p^b,m)$, starting from a standard-form generating set $\mathcal{G}$ for $\mathcal{C}$. Any such generating set contains $e = \frac12 \rank\bigl(\mathcal{C} / (\mathcal{C} \cap \mathcal{C}^{\perp_{\Tr}})\bigr)$ hyperbolic pairs, and the (minimum) number of maximally entangled qudit pairs needed for the EAQECC construction is $c = \lceil\frac{e}{m}\rceil$. In this section, we will show that if we are allowed to lengthen the code $\mathcal{C}$ by inserting additional coordinates, then by carefully selecting what goes into the extra coordinates, we can obtain standard-form generating sets for the longer codes that differ from $\mathcal{G}$ in the number of isotropic generators and hyperbolic pairs. The EAQECCs obtained from the longer codes will then have parameters that are different from those of the EAQECC obtained from $\mathcal{C}$. Thus, the mechanism of lengthening an additive code gives us a means of influencing the parameters and error-handling capabilities of the resulting EAQECCs.

We illustrate this principle by describing two methods for lengthening an additive code. In one method, we show that by inserting extra coordinates, we can reduce the number of hyperbolic pairs in a standard-form generating set by converting some of the hyperbolic pairs into isotropic generators. This brings about a reduction in the number of pairs of maximally entangled qudits needed in the entanglement-assisted quantum code, but this is usually at the expense of a loss in minimum distance; the dimension of the EAQECC remains unaffected. In the other method of lengthening, we again apply the idea of using the extra coordinates to convert hyperbolic pairs into isotropic generators. But this idea is applied not to a standard-form generating set of the code $\mathcal{C}$, but instead to a standard-form generating set of the code $\mathcal{C}^{\perp_{\Tr}}$. From the perspective of the code $\mathcal{C}$ and its generating set $\mathcal{G}$ (in standard form), this results in a standard-form generating set of the lengthened code with a larger number of isotropic generators than in $\mathcal{G}$, while keeping the number of hyperbolic pairs the same. The EAQECC obtained from the lengthened code then requires the same number of maximally entangled qudit pairs as that obtained from $\mathcal{C}$, but the minimum distance may now increase. Any increase in minimum distance, however, is accompanied by a reduction in dimension. 

The techniques for lengthening an additive code described in this section are inspired by the propagation rules for EAQECCs derived from linear codes over finite fields, proposed by Luo et al.\ \cite[Theorems 16 and 18]{luo1}. As acknowledged by Luo et al., the idea behind these techniques, which we have expressed in terms of converting hyperbolic pairs into isotropic generators, can be traced back to the work of Lison\v{e}k and Singh \cite{liso}.

We introduce a definition that will be useful for our purposes. A code $\mathcal{C}' \subseteq \text{GR}(p^b,m)^{2(n+1)}$ is defined to be a \emph{$1$-step extension} of a code $\mathcal{C} \subseteq \text{GR}(p^b,m)^{2n}$ if $\mathcal{C}$ can be obtained by puncturing $\mathcal{C}'$ at the coordinates at $n+1$ and $2(n+1)$.

\subsection{Reducing the Number of Maximally Entangled Qudit Pairs}\label{subsec:less}

In this section, we describe how additional coordinates can be used to reduce the number of hyperbolic pairs by converting some of them to isotropic generators.  The following proposition gives a method of going from an additive code $\mathcal{C}$ to a $\chi$-self-orthogonal extension of it via a sequence of $1$-step extensions, with the number of hyperbolic pairs being reduced at each step.

\begin{prop} \label{prop:extless}
 Let $\mathcal{C} \subseteq \text{GR}(p^b,m)^{2n}$ be an additive code with a minimal generating set in standard form, in which there are exactly $e$ hyperbolic pairs and $d$ isotropic generators. Set $c := \lceil \frac{e}{m}\rceil$. Then, there exists a sequence of additive codes $\mathcal{C}^{(\ell)} \subseteq \text{GR}(p^b,m)^{2(n+\ell)}$, $\ell = 0,1,\ldots,c$, such that $\mathcal{C}^{(0)} = \mathcal{C}$, and for $\ell=0,1,\ldots,c-1$,
 \begin{itemize}
  \item[(i)] $\mathcal{C}^{(\ell)}$ has a minimal generating set in standard form with exactly $e - \ell m$ hyperbolic pairs and $d+2\ell m$ isotropic generators;
  \item[(ii)] $\mathcal{C}^{(\ell+1)}$ is a $1$-step extension of $\mathcal{C}^{(\ell)}$, with $d_s\bigl({\mathcal{C}^{(\ell+1)}}^{\perp_{\Tr}} \setminus \mathcal{C}^{(\ell+1)}\bigr) \le d_s\bigl({\mathcal{C}^{(\ell)}}^{\perp_{\Tr}} \setminus \mathcal{C}^{(\ell)}\bigr)$;
  \item[(iii)] $\mathcal{C}^{(c)}$ is a minimal $\chi$-self-orthogonal extension, obtainable via the Two-Step Construction, of $\mathcal{C}^{(\ell)}$.
  \end{itemize}
 \end{prop}

\begin{proof} Let  $$\mathcal{G}=\{(v_1,w_1),(v_2,w_2),\ldots, (v_{e+d},w_{e+d}),(x_1,y_1),(x_2,y_2),\ldots, (x_{e},y_{e})\}$$ be a minimal generating set of $\mathcal{C}$ as a $\mathbb{Z}_{p^b}$-module such that, for $i = 1,2\ldots,e$, the generators $(v_i,w_i)$ and $(x_i,y_i)$ form hyperbolic pairs, and the generators $(v_i,w_i)$, $i = e+1,\ldots,e+d$, are isotropic. 
For $1 \leq i \leq e,$ $\chi(w_i \cdot x_i-v_i \cdot y_i)$ is a $p^b$-th root of unity and $(v_i,w_i)$ and $(x_i,y_i)$ form hyperbolic pairs, so let $\chi(w_i \cdot x_i-v_i \cdot y_i)=\zeta^{z_{i}},$ where $\zeta = \exp(\frac{2\pi i}{p^b})$ and $0< z_{i} < p^b.$ Let $\gamma_1,\gamma_2,\ldots,\gamma_m$ be a  basis of $\text{GR}(p^b,m)$ as a free module over $\mathbb{Z}_{p^b}$, and let $\beta_1,\beta_2,\ldots,\beta_m$ be the dual basis of $\gamma_1,\gamma_2,\ldots,\gamma_m.$  

Set $\mathcal{C}^{(0)} = \mathcal{C}$. In the following, we use the notation $0^{\mu}$, for a non-negative integer $\mu$, to denote a string of $\mu$ $0$s. For $\ell = 1,2,\ldots,c-1$, let $\mathcal{C}^{(\ell)} \subseteq \text{GR}(p^b,m)^{2(n+\ell)}$ be the additive code generated by $\mathcal{G}^{(\ell)} = \{h^{(\ell)}_{11},h^{(\ell)}_{12},\ldots, h^{(\ell)}_{e1},h^{(\ell)}_{e2},g^{(\ell)}_1,\ldots,g^{(\ell)}_d\}$, where
\begin{equation}
h^{(\ell)}_{i1} = (v_i,0^{\ell},w_i,0^k,-\beta_j,0^{\ell-1-k}) \ \ \text{ and } \ \ h^{(\ell)}_{i2} = (x_i,0^{k}, z_i \gamma_j,0^{\ell-1-k}, y_i,0^{\ell})
\label{eq:hyp2iso}
\end{equation}
for $i = km+j$ with $0 \le k \le \ell-1$ and $1 \le j \le m$; $h^{(\ell)}_{i1} = (v_i,0^{\ell},w_i,0^{\ell})$ and $h^{(\ell)}_{i2} = (x_i,0^{\ell},y_i,0^{\ell})$ for $\ell m + 1 \le i \le e$; and $g^{(\ell)}_j = (v_{e+j},0^{\ell},w_{e+j},0^{\ell})$ for $1 \le j \le d$. Note that the generators in $\mathcal{G}^{(\ell)}$ are obtained from the generators in $\mathcal{G}$ by adding extra coordinates. Therefore, the fact that $\mathcal{G}$ is a minimal generating set of $\mathcal{C}$ implies that $\mathcal{G}^{(\ell)}$ is a minimal generating set of $\mathcal{C}^{(\ell)}$. It is straightforward to verify that the generators $h^{(\ell)}_{i1}$ and $h^{(\ell)}_{i2},$ $i=\ell m+1,\ldots,e$, form hyperbolic pairs, and that the generators $h^{(\ell)}_{i1}$ and $h^{(\ell)}_{i2},$ $i=1,\ldots,\ell m,$ and $g^{(\ell)}_j$, $j = 1,\ldots,d$, are isotropic. 
This verifies item (i) in the statement of the proposition. Note that, for $i=1,\ldots,\ell m$, the isotropic generators $h^{(\ell)}_{i1}$ and $h^{(\ell)}_{i2}$ are obtained from the hyperbolic pairs $(v_i,w_i)$ and $(x_i,y_i)$ in $\mathcal{G}$.

The last code, $\mathcal{C}^{(c)}$, in the sequence is taken to be the additive code of length $2(n+c)$ generated by $\mathcal{G}^{(c)} = \{h^{(c)}_{11},h^{(c)}_{12},\ldots, h^{(c)}_{e1},h^{(c)}_{e2},g^{(c)}_1,\ldots,g^{(c)}_d\}$, where
$$
h^{(c)}_{i1} = (v_i,0^{c},w_i,0^k,-\beta_j,0^{c-1-k}) \ \ \text{ and } \ \ h^{(c)}_{i2} = (x_i,0^{k}, z_i \gamma_j,0^{c-1-k}, y_i,0^{c})
$$
for $i = km+j$ with $0 \le k \le c-2$ and $1 \le j \le m$; 
$$
h^{(c)}_{i1} = (v_i,0^{c},w_i,0^{c-1},-\beta_j) \ \ \text{ and } \ \ h^{(c)}_{i2} = (x_i,0^{c-1}, z_i \gamma_j,y_i,0^{c})
$$
for $i = (c-1)m+j$ with $1 \le j \le e-(c-1)m$; and $g^{(c)}_j = (v_{e+j},0^{c},w_{e+j},0^{c})$ for $1 \le j \le d$.
It is an easy exercise to verify that the generators in $\mathcal{G}^{(c)}$ are all isotropic; hence, $\mathcal{C}^{(c)}$ is $\chi$-self-orthogonal. 

For $\ell = 0,1,\ldots,c-1$, $\mathcal{C}^{(\ell+1)}$ is a $1$-step extension of $\mathcal{C}^{(\ell)}$: a simple means of verifying this is to check that each generator $h_{i1}^{(\ell)}$ (resp.\ $h_{i2}^{(\ell)}$) is obtained by puncturing the generator $h_{i1}^{(\ell+1)}$ (resp.\ $h_{i2}^{(\ell+1)}$) at the coordinates $\ell+1$ and $2(\ell+1)$; an analogous statement clearly holds for the generators $g_j^{(\ell)}$ as well. To verify that $d_s\bigl({\mathcal{C}^{(\ell+1)}}^{\perp_{\Tr}} \setminus \mathcal{C}^{(\ell+1)}\bigr) \le d_s\bigl({\mathcal{C}^{(\ell)}}^{\perp_{\Tr}} \setminus \mathcal{C}^{(\ell)}\bigr)$, it suffices to observe that $\{(r,0,s,0) : (r,s) \in {\mathcal{C}^{(\ell)}}^{\perp_{\Tr}}  \setminus \mathcal{C}^{(\ell)}\} \subseteq {\mathcal{C}^{(\ell+1)}}^{\perp_{\Tr}} \setminus \mathcal{C}^{(\ell+1)}$. Thus, item (ii) in the statement of the proposition holds.

Finally, to verify item (iii), it is clear that $\mathcal{C}^{(c)} \subseteq \mathcal{R}^{2(n+c)}$ is a $\chi$-self-orthogonal extension of $\mathcal{C}^{(\ell)} \subseteq \text{GR}(p^b,m)^{2{n+\ell}}$, obtainable via the Two-Step Construction. The entanglement degree of $\mathcal{C}^{(c)}$ over $\mathcal{C}^{(\ell)}$ is $c-\ell$. To show the minimality of the extension, we appeal to Theorem~\ref{EAQZ} and Proposition~\ref{prop:count}(a), which give us that the minimum entanglement degree of any $\chi$-self-orthogonal extension of $\mathcal{C}^{(\ell)}$ is equal to $\left\lceil { \frac{1}{2m} \rank(\mathcal{C}^{(\ell)}/(\mathcal{C}^{(\ell)} \cap{\mathcal{C}^{(\ell)}}^{\perp_{\Tr}}))  }\right\rceil=\left\lceil  \frac{e-\ell m}{m}\right\rceil=c-\ell$. 
\end{proof}

From the sequence of additive codes in Proposition~\ref{prop:extless}, we obtain a sequence of EAQECCs that progressively trade off maximally entangled qudit pairs pre-shared between the transmitter and receiver, for qudits held solely by the transmitter. As the next theorem shows, these EAQECCs all have the same dimension, but potentially lose in minimum distance as the number of pre-shared qudits is reduced.

\begin{thm}\label{thm:less} 
Let $\mathcal{C} \subseteq \text{GR}(p^b,m)^{2n}$ be an additive code such that $\rank(\mathcal{C}  \cap  \mathcal{C}^{\perp_{\Tr}}) < \min\{\rank(\mathcal{C}),\rank(\mathcal{C}^{\perp_{\Tr}})\}$. Set $c : = \left\lceil\frac{1}{2m}\rank(\mathcal{C}/(\mathcal{C} \cap \mathcal{C}^{\perp_{\Tr}}))\right\rceil$. Let $\mathcal{C}^{(\ell)}$, $\ell = 0,1,\ldots,c$, be a sequence of additive codes constructed as in Proposition~\ref{prop:extless}. Set $D^{(\ell)} := d_{\mathrm{s}}\bigl({\mathcal{C}^{(\ell)}}^{\perp_{\Tr}} \setminus \mathcal{C}^{(\ell)}\bigr)$ and $K := p^{bm(n+c)}/|\mathcal{C}^{(c)}|$. Then, for $\ell = 0,1,\ldots,c$, there exists an $((n+\ell,K,D^{(\ell)};c-\ell))$ EAQECC over $\text{GR}(p^b,m)$. Moreover, $D^{(0)} \ge D^{(1)} \ge \cdots \ge D^{(c)}$.
\end{thm}

\begin{proof} By Theorem \ref{thm:stdform}(b), there is a minimal generating set, $\mathcal{G}$, of $\mathcal{C}$ in standard form. By Proposition~\ref{prop:count}(a) and Proposition~\ref{prop:dualGR}, the number, $e$, of hyperbolic pairs in $\mathcal{G}$ is $\frac12 \rank(\mathcal{C}/(\mathcal{C} \cap \mathcal{C}^{\perp_{\Tr}}))$.   As $\rank(\mathcal{C}  \cap  \mathcal{C}^{\perp_{\Tr}}) < \rank(\mathcal{C}),$ we have $\mathcal{C}   \not\subseteq \mathcal{C}^{\perp_{\Tr}},$ so that $e > 0$, and hence, $c = \lceil \frac{e}{m} \rceil \ge 1$. There exists a sequence of additive codes $\mathcal{C}^{(\ell)} \subseteq \text{GR}(p^b,m)^{2(n+\ell)}$, $\ell = 0,1,\ldots,c$, with the properties listed in the statement of Proposition~\ref{prop:extless}. For each $\ell \in 0,1,\ldots,c$, we apply Theorem \ref{thm:eaqecc} to $C^{(\ell)}$ and its $\chi$-self-orthogonal extension $\mathcal{C}^{(c)}$ to obtain an $((n+{\ell},p^{mb(n+c)}/{|\mathcal{C}^{(c)}|},\widetilde{D}^{(\ell)};c-\ell))$ EAQECC, with  \begin{equation*}
\widetilde{D}^{(\ell)}\ = \ \left\{\begin{array}{ll}
d_{\mathrm{s}}\bigl({\mathcal{C}^{(\ell)}}^{\perp_{\Tr}}\bigr) & \text{~~if~~} {\mathcal{C}^{(\ell)}}^{\perp_{\Tr}} \subseteq \mathcal{C}^{(\ell)} \\
d_{\mathrm{s}}\bigl({\mathcal{C}^{(\ell)}}^{\perp_{\Tr}} \setminus \mathcal{C}^{(\ell)}\bigr)  & \text{~~otherwise}.\,
\end{array}\right.
\end{equation*}

It only remains to show that $\widetilde{D}^{(\ell)} = D^{(\ell)} := d_{\mathrm{s}}\bigl({\mathcal{C}^{(\ell)}}^{\perp_{\Tr}} \setminus \mathcal{C}^{(\ell)}\bigr)$, or equivalently, that ${\mathcal{C}^{(\ell)}}^{\perp_{\Tr}} \not \subseteq \mathcal{C}^{(\ell)}$, for $\ell = 0,1,\ldots,c$.
By assumption, $\rank(\mathcal{C}  \cap  \mathcal{C}^{\perp_{\Tr}}) <  \rank(\mathcal{C}^{\perp_{\Tr}})$, so we have $\mathcal{C}^{\perp_{\Tr}} \not\subseteq \mathcal{C}$. Since $\mathcal{C}^{(0)} = \mathcal{C}$, we have ${\mathcal{C}^{(0)}}^{\perp_{\Tr}} \not \subseteq \mathcal{C}^{(0)}$, and hence, $D^{(0)} = d_{\mathrm{s}}\bigl({\mathcal{C}^{(0)}}^{\perp_{\Tr}} \setminus \mathcal{C}^{(0)}\bigr) < \infty$. Then, from property (ii) in Proposition~\ref{prop:extless}, it follows that for $\ell = 0,1,\ldots,c-1$, we have $d_{\mathrm{s}}\bigl({\mathcal{C}^{(\ell)}}^{\perp_{\Tr}} \setminus \mathcal{C}^{(\ell)}\bigr) < \infty$. Thus, recalling that, by definition, $d_s(A) = \infty$ for $A = \emptyset$, we have ${\mathcal{C}^{(\ell)}}^{\perp_{\Tr}} \not \subseteq \mathcal{C}^{(\ell)}$ for $\ell = 0,1,\ldots,c$. 

Finally, the chain of inequalities $D^{(0)} \ge D^{(1)} \ge \cdots \ge D^{(c)}$ follows again from property (ii) in Proposition~\ref{prop:extless}. 
\end{proof}


The following example illustrates the construction of the EAQECCs guaranteed by Theorem~\ref{thm:less}. In particular, it shows that it is possible for the inequality $D^{(\ell)} \ge D^{(\ell+1)}$ to hold with equality, indicating that it is sometimes possible to reduce the number of maximally entangled qudit pairs without incurring a penalty in terms of minimum distance. 

\begin{ex}Let $\mathcal{C}$ be the additive code of length $8$ over $\mathbb{Z}_4 \, (=\text{GR}(2^2,1))$ with a minimal generating set 
$\{(v_1,w_1),(v_2,w_2), $ $(v_3,w_3),$ $(x_1,y_1),(x_2,y_2)\},$  where
$(v_1,w_1)= (0,0,0,0,0,0,2,0)$, $(x_1,y_1)= (0,0,1,0,0,0,0,0),$ $(v_2,w_2)= (0,0,0,0,1,1,0,0)$, $(x_2,y_2)= (1,1,0,0,0,0,0,0)$ and $(v_3,w_3)=(0,0,0,1,0,0,0,0)$.
 It is easy to check that for $i = 1,2$, the generators $(v_i,w_i)$ and $(x_i,y_i)$ form hyperbolic pairs, and the generator $(v_3,w_3)$ is isotropic. Moreover, it can be verified that $\{(1,3,0,0,0,0,0,0),$ $(0,0,2,0,0,0,0,0),$ $(0,0,0,1,0,0,0,0),$ $(0,0,0,0,1,3,0,0)\}$ is a minimal generating set of  $\mathcal{C}^{\perp_{\mathrm{s}}}$.

Applying the construction in Proposition~\ref{prop:extless} gives us additive codes $\mathcal{C}^{(1)}$ and $\mathcal{C}^{(2)}$ over $\mathbb{Z}_4$ generated by $\{(v_1,0,w_1,-1),(v_2,0,w_2,0),(v_3,0,w_3,0),(x_1,2,y_1,0),(x_2,0,y_2,0)\}$ and  $\{(v_1,0,0,w_1,-1,0),(v_2,0,0,w_2,0,-1),(v_3,0,0,w_3,0,0),(x_1,2,0,y_1,0,0),(x_2,0,2,y_2,0,0)\},$ respectively. The code $\mathcal{C}^{(2)}\subseteq\mathbb{Z}_4^{2(4+2)}$ is a minimal $\chi$-self-orthogonal extension, obtainable via the Two-Step Construction, of $\mathcal{C}$ and $\mathcal{C}^{(1)}$. We have verified (using the Magma algebra system \cite{bosma}) that $|\mathcal{C}^{(2)}| = 4^5$, $d_{\mathrm{s}}(\mathcal{C}^{\perp_{\mathrm{s}}}\setminus \mathcal{C})=2$, $d_{\mathrm{s}}(\mathcal{C}^{(1)^{\perp_{\mathrm{s}}}}\setminus \mathcal{C}^{(1)})=2$, and  $d_{\mathrm{s}}(\mathcal{C}^{(2)^{\perp_{\mathrm{s}}}}\setminus \mathcal{C}^{(2)})=1$.  Thus, by Theorem \ref{thm:eaqecc}, there exist a $((4,4,2;2))$ EAQECC and a $((5,4,2;1))$ EAQECC over $\mathbb{Z}_4$. The code $\mathcal{C}^{(2)}$ itself yields a $((6,4,1))$ quantum stabilizer code over $\mathbb{Z}_4$.  Note that the EAQECCs constructed from $\mathcal{C}$ and $\mathcal{C}^{(1)}$ have the same minimum distance ($D^{(0)} = D^{(1)} = 2)$, while the quantum stabilizer code constructed from $\mathcal{C}^{(2)}$ has a strictly smaller minimum distance. \qed
\end{ex}

\bigskip

The next example brings out a different aspect of the effect that this lengthening method can have on the error-handling capability of the resulting EAQECCs. It considers a situation when we start with an additive code $\mathcal{C}$ over $\mathbb{Z}_4$  with a minimal generating set $\mathcal{G}$ in standard form that contains (at least) two hyperbolic pairs. A $1$-step extension of $\mathcal{C}$ can be obtained, as in Proposition~\ref{prop:extless}, by converting one of the hyperbolic pairs into isotropic generators. The example shows that the choice of the hyperbolic pair of generators to be made isotropic can make a significant difference.

\begin{ex} Let $\mathcal{C}$ be the additive code over $\mathbb{Z}_4$ of length 10 with a minimal generating set 
$\{(v_1,w_1),(v_2,w_2), $ $(v_3,w_3),$ $(v_4,w_4),(x_1,y_1),(x_2,y_2)\},$
 where $(v_1,w_1)=(1,1,0,0,0,1,0,0,1,0),$ $(x_1,y_1)=(1,0,0,1,0,0,1,0,1,0),$ $(v_2,w_2)=(0,0,0,0,1,0,0,0,0,0),$ $(x_2,y_2)=(0,0,0,0,0,0,0,0,0,1),$ $(v_3,w_3)=(1,1,1,3,0,0,0,0,0,0),$ and $(v_4,w_4)=(0,0,0,0,0,3,1,1,1,0)$. The generators $(v_i,w_i)$ and $(x_i,y_i)$ form hyperbolic pairs for $i=1,2$, and the generators $(v_3,w_3)$ and $(v_4,w_4)$ are isotropic. It can be verified that $\{(v_3,w_3),(v_4,w_4),(e_1,f_1),(e_2,f_2)\}$ with $(e_1,f_1)=(3,0,3,0,0,0,3,1,0,0)$ and $(e_2,f_2)=(0,3,1,0,0,3,1,0,0,0),$ is a minimal generating set of  $\mathcal{C}^{\perp_{\mathrm{s}}}$. Moreover, using the Magma algebra system \cite{bosma}, we get that $d_{\mathrm{s}}(\mathcal{C}^{\perp_{\mathrm{s}}}) = d_{\mathrm{s}}(\mathcal{C}^{\perp_{\mathrm{s}}}\setminus \mathcal{C})=3$.
 
 Following the construction of Proposition~\ref{prop:extless}, we can obtain two $1$-step extensions of $\mathcal{C}$, which we will denote by $\mathcal{C}^{(1)}$ and $\widetilde{\mathcal{C}}^{(1)}$. The first of these is obtained by converting the hyperbolic pair $(v_1,w_1)$ and $(x_1,y_1)$ into isotropic generators as prescribed in \eqref{eq:hyp2iso}. Thus, $\mathcal{C}^{(1)}$ has the generating set 
 $$
 \{(v_1,0,w_1,-1),(v_2,0,w_2,0),(v_3,0,w_3,0),(v_4,0,w_4,0),(x_1,-1,y_1,0),(x_2,0,y_2,0)\}.
 $$ 
 The code $\widetilde{\mathcal{C}}^{(1)}$, on the other hand, is obtained by converting the hyperbolic pair $(v_2,w_2)$ and $(x_2,y_2)$ into isotropic generators; thus, it has the generating set 
 $$
 \{(v_1,0,w_1,0),(v_2,0,w_2,3),(v_3,0,w_3,0),(v_4,0,w_4,0),(x_1,0,y_1,0),(x_2,1,y_2,0)\}.
 $$
 
The codes $\mathcal{C}$, $\mathcal{C}^{(1)}$ and $\mathcal{C}$, $\widetilde{\mathcal{C}}^{(1)}$ have common minimal $\chi$-self-orthogonal extensions $\mathcal{C}^{(2)}, \widetilde{\mathcal{C}}^{(2)}  \subseteq\mathbb{Z}_4^{2(5+2)}$ generated by 

$$
 \{(v_1,0,0,w_1,-1,0),(v_2,0,0,w_2,0,3),(v_3,0,0,w_3,0,0),(v_4,0,0,w_4,0,0),(x_1,-1,0,y_1,0,0),(x_2,0,1,y_2,0,0)\}
 $$  
 and 
 $$
 \{(v_1,0,0,w_1,0,-1),(v_2,0,0,w_2,3,0),(v_3,0,0,w_3,0,0),(v_4,0,0,w_4,0,0),(x_1,0,-1,y_1,0,0),(x_2,1,0,y_2,0,0)\},$$ respectively.  
 
 By Theorem~\ref{thm:eaqecc}, the codes $\mathcal{C}$, $\mathcal{C}^{(1)}$ and $\widetilde{\mathcal{C}}^{(1)}$ give rise to EAQECCs $\mathcal{Q}$, $\mathcal{Q}^{(1)}$ and $\widetilde{\mathcal{Q}}^{(1)}$, respectively, over $\mathbb{Z}_4$. The parameters of these EAQECCs can be computed with the aid of Magma \cite{bosma} as $((5,4,3;2))$, $((6,4,3;1))$ and $((6,4,3;1))$, respectively. While $\mathcal{Q}^{(1)}$ and $\widetilde{\mathcal{Q}}^{(1)}$ have the same parameters, they differ significantly in how they handle errors. Note that both these EAQECCs have minimum distance $3$, so they can correct all weight-$1$ errors in $\mathcal{P}_7(\mathbb{Z}_4)$ of the form $X(a,0)Z(b,0)$, with $a,b \in \mathbb{Z}_4^6$.

We claim that $\mathcal{Q}^{(1)}$ is non-degenerate in the following sense: for any two weight-$1$ errors $E_1=X(a_1,0)Z(b_1,0)$ and $E_2=X(a_2,0)Z(b_2,0)$, with $a_1,b_1,a_2,b_2 \in \mathbb{Z}_4^6$, we have that $ E_1 | u \rangle$ and $E_2 | v \rangle$ are linearly independent for any $| u \rangle, | v \rangle \in \mathcal{Q}^{(1)}$. To see this, we first note that by Lemma~\ref{comm}, $E_1^{\dag}E_2 = \omega^{\ell}X(-a_1+a_2,0)Z(-b_1+b_2,0)$ for some $\ell$, so that the weight of $E_1^{\dag}E_2$ is at most 2. This implies that $w_{\mathrm{s}}((-a_1+a_2,0,-b_1+b_2,0)) \leq 2.$ However, using the Magma algebra system \cite{bosma}, we get that $d_{\mathrm{s}}({\mathcal{C}^{(1)}}^{\perp_{\mathrm{s}}})=3$, which implies that $(-a_1+a_2,0,-b_1+b_2,0)\not\in{\mathcal{C}^{(1)}}^{\perp_{\mathrm{s}}}$. Thus, $E_1^{\dag}E_2$ does not belong to the centralizer of the stabilizer of $\mathcal{Q}^{(1)}$ in $\mathcal{P}_{7}(\mathbb{Z}_4)$. Hence, following the proof of Theorem \ref{thm:eaqecc}, we have $\langle u | E_1^{\dag}E_2| v \rangle=0$ for any $| u \rangle,| v \rangle \in \mathcal{Q}.$ This implies that $E_1| u \rangle$ and $E_2| v \rangle$  are linearly independent  for any $| u \rangle,| v \rangle \in \mathcal{Q}^{(1)}$, thus proving the claim. We remark here that a similar argument also shows that the EAQECC $\mathcal{Q}$ is also non-degenerate in an analogous sense.
 
On the other hand, $\widetilde{\mathcal{Q}}^{(1)}$ is degenerate, as we now show. Following the proof of Theorem \ref{thm:eaqecc}, we observe that for some $\ell \in \mathbb{Z}$, $\omega^\ell X(v_2,0,0)Z(w_2,3,0)$  ($= \omega^\ell X(0,0,0,0,1,0,0)Z(0,0,0,0,0,3,0)$ is an element of the stabilizer of $\widetilde{\mathcal{Q}}^{(1)}$. This implies that the weight-$1$ errors $\omega^\ell Z(0,0,0,0,0,3,0)$ and $X(0,0,0,0,3,0,0)$ ($=X(0,0,0,0,1,0,0)^{-1}$) produce the same result when applied to any element of $\widetilde{\mathcal{Q}}^{(1)}$, i.e.,  $\omega^\ell Z(0,0,0,0,0,3,0)| u \rangle$ and $X(0,0,0,0,3,0,0)| u \rangle$  are linearly dependent  for any $| u \rangle \in \widetilde{\mathcal{Q}}^{(1)}$. Therefore, by definition, $\widetilde{\mathcal{Q}}^{(1)}$ is a degenerate code. \qed
\end{ex}

\subsection{Retaining the Number of Maximally Entangled Qudit Pairs} \label{subsec:same}

The second method for lengthening an additive code is encapsulated in the proposition below.

\begin{prop} \label{prop:extsame}
Let $\mathcal{C} \subseteq \text{GR}(p^b,m)^{2n}$ be an additive code such that the additive code $\mathcal{C}^{\perp_{\Tr}}/(\mathcal{C}  \cap  \mathcal{C}^{\perp_{\Tr}})$ is free as a $\mathbb{Z}_{p^b}$-module and has  rank at least  $2m$. Further, let any minimal generating set of $\mathcal{C}$ in standard form have exactly $e$ hyperbolic pairs and $d$ isotropic generators. Set $c := \lceil \frac{e}{m}\rceil$.   Then, there exists a $1$-step extension, $\mathcal{M} \subseteq \text{GR}(p^b,m)^{2(n+1)}$, of $\mathcal{C}^{\perp_{\Tr}}$ such that the following hold:
\begin{itemize}
\item[(i)]  $\mathcal{M}^{\perp_{\Tr}}$ has a minimal generating set in standard form with exactly $e$ hyperbolic pairs and $d+2m$ isotropic generators, and $\mathcal{M}^{\perp_{\Tr}}$ contains a $1$-step extension of $\mathcal{C}$.
  \item[(ii)] $d_s\bigl({\mathcal{M}} \setminus \mathcal{M}^{\perp_{\Tr}}\bigr) \ge d_s\bigl({\mathcal{C}}^{\perp_{\Tr}} \setminus \mathcal{C}\bigr)$
   if $\rank(\mathcal{C}^{\perp_{\Tr}}/(\mathcal{C}^{\perp_{\Tr}} \cap \mathcal{C})) > 2m$, and 
   $d_s\bigl({\mathcal{M}} \bigr) \geq  d_s\bigl({\mathcal{C}}^{\perp_{\Tr}} \bigr)$ if $\rank(\mathcal{C}^{\perp_{\Tr}}/(\mathcal{C}^{\perp_{\Tr}} \cap \mathcal{C})) = 2m$;
  \item[(iii)] There exist minimal $\chi$-self-orthogonal extensions  $\mathcal{C}'\subseteq \text{GR}(p^b,m)^{2(n+c)}$  and $\mathcal{M}'\subseteq \text{GR}(p^b,m)^{2(n+c+1)}$ of $\mathcal{C}$ and $\mathcal{M}^{\perp_{\Tr}}$, respectively,  obtained using the Two-Step Construction, with $|\mathcal{M}'|=|\mathcal{C}'| \,p^{2mb}.$ 
 \end{itemize}
\end{prop}

It is the code $\mathcal{M}^{\perp_{\Tr}}$ in the statement of the proposition that we view as the desired lengthening of the additive code $\mathcal{C}$. We sketch here the construction of the code $\mathcal{M}$. The details of the verification of its properties (i), (ii) and (iii) are left to the proof given in Appendix~\ref{app:extsame_proof}.

Let $\mathcal{H}=\{(e_1,f_1),(e_2,f_2),\ldots, $ $(e_{t+s},f_{t+s}),$ $(g_1,h_1),(g_2,h_2),\ldots, (g_{t},h_{t})\}$ be any minimal generating set in standard form  of $\mathcal{C}^{\perp_{\Tr}}$  as a $\mathbb{Z}_{p^b}$-module,  such that, for $i = 1,2,\ldots,t$, the generators $(e_i,f_i)$ and $(g_i,h_i)$ form hyperbolic pairs, and the generators $(e_i,f_i)$, $i = t+1,\ldots,t+s$, are isotropic. From the condition that $\rank(\mathcal{C}^{\perp_{\Tr}}/(\mathcal{C}^{\perp_{\Tr}} \cap \mathcal{C})) \geq 2m$, we obtain, via Proposition~\ref{prop:count}(a), $t\geq m$. We follow the recipe for the $1$-step extension in the proof of Proposition~\ref{prop:extless} that converts the first $m$ hyperbolic pairs in $\mathcal{H}$ into isotropic generators, as described by \eqref{eq:hyp2iso} with $\ell = 1$. The resulting $1$-step extension of $\mathcal{C}^{\perp_{\Tr}}$ is precisely the additive code $\mathcal{M}$ in the statement of Proposition~\ref{prop:extsame}.

From the above proposition, we can readily derive a propagation rule for EAQECCs over Galois rings along the lines of that in \cite[Theorem~16]{luo1} --- see also Corollary~\ref{cor:same} in Appendix~\ref{app:trace-alt-form}. 

\begin{thm}\label{thm:same} 
 Let $\mathcal{C} \subseteq \text{GR}(p^b,m)^{2n}$ be an additive code such that the additive code $\mathcal{C}^{\perp_{\Tr}}/(\mathcal{C}  \cap  \mathcal{C}^{\perp_{\Tr}})$ is free as a $\mathbb{Z}_{p^b}$-module of rank at least  $2m.$ Set $c : = \left\lceil\frac{1}{2m}\rank(\mathcal{C}/(\mathcal{C} \cap \mathcal{C}^{\perp_{\Tr}}))\right\rceil$ and $D := d_{\mathrm{s}}\bigl({\mathcal{C}}^{\perp_{\Tr}} \setminus \mathcal{C}\bigr)$.  Then, the existence of an $((n,K,D;c))$ EAQECC over $\text{GR}(p^b,m)$ constructed from $\mathcal{C}$ using Theorem~\ref{dt3} implies the existence of an $((n+1,\frac{1}{p^{bm}}K,D';c))$ EAQECC  over  $\text{GR}(p^b,m)$ with 
 \begin{equation*}
D'~\geq ~\left\{\begin{array}{ll}
d_{\mathrm{s}}(\mathcal{C}^{\perp_{\Tr}})  & \text{~~if~~}  \text{rank} (\mathcal{C}^{\perp_{\Tr}}/(\mathcal{C}  \cap  \mathcal{C}^{\perp_{\Tr}})) =2m; \\
D   & \text{~~if~~} \text{rank} (\mathcal{C}^{\perp_{\Tr}}/(\mathcal{C}  \cap  \mathcal{C}^{\perp_{\Tr}})) >2m.\,
\end{array}\right.
\end{equation*}

\end{thm}

\begin{proof} By Theorem \ref{thm:stdform}(b), there is a minimal generating set, $\mathcal{G}$, of $\mathcal{C}$ in standard form. By Proposition~\ref{prop:count}(a) and Proposition~\ref{prop:dualGR}, the number, $e$, of hyperbolic pairs in $\mathcal{G}$ is $\frac12 \rank(\mathcal{C}/(\mathcal{C} \cap \mathcal{C}^{\perp_{\Tr}}))$. There exists an additive code $\mathcal{M}\subseteq \text{GR}(p^b,m)^{2(n+1)}$ with the properties listed in the statement of Proposition~\ref{prop:extsame}. We apply Theorem~\ref{thm:eaqecc} to $\mathcal{C}$ and $\mathcal{M}^{\perp_{\Tr}}$ using $\chi$-self-orthogonal extensions $\mathcal{C}'$ and $\mathcal{M}',$ respectively,  to obtain an $((n,p^{mbn}/{|\mathcal{C}'|},D;c))$  and an $((n+1,p^{mb(n+1)}/{|\mathcal{M}'|},D';c))$ EAQECC, with 
 \begin{equation*}
D~=~\left\{\begin{array}{ll}
d_{\mathrm{s}}(\mathcal{C}^{\perp_{\Tr}}) & \text{~~if~~} \mathcal{C}^{\perp_{\Tr}} \subseteq \mathcal{C} \\
d_{\mathrm{s}}(\mathcal{C}^{\perp_{\Tr}}\setminus \mathcal{C} )  & \text{~~otherwise}\,,
\end{array}\right.
\end{equation*} and   \begin{equation}\label{edis1}
D'~=~\left\{\begin{array}{ll}
d_{\mathrm{s}}(\mathcal{M}) & \text{~~if~~} \mathcal{M} \subseteq \mathcal{M}^{\perp_{\Tr}} \\
d_{\mathrm{s}}(\mathcal{M}\setminus \mathcal{M}^{\perp_{\Tr}} )  & \text{~~otherwise}.\,
\end{array}\right.
\end{equation}  Since $|\mathcal{M}'|=|\mathcal{C}'|p^{2mb},$  we have   $p^{mb(n+c+1)}/{|\mathcal{M}'|}=p^{mb(n+c)}/{(p^{mb}|\mathcal{C}'|)}.$
As $\text{rank}(\mathcal{C}^{\perp_{\Tr}}/(\mathcal{C}  \cap  \mathcal{C}^{\perp_{\Tr}})) \geq 2m,$ we have $\mathcal{C}^{\perp_{\Tr}} \not\subseteq \mathcal{C}.$ Thus $D=d_{\mathrm{s}}(\mathcal{C}^{\perp_{\Tr}}\setminus \mathcal{C} ).$ Note that $\mathcal{M} \subseteq \mathcal{M}^{\perp_{\Tr}}$ if and only if $\text{rank} (\mathcal{C}^{\perp_{\Tr}}/(\mathcal{C}  \cap  \mathcal{C}^{\perp_{\Tr}})) =2m.$ 
 From this, using Proposition~\ref{prop:extsame}(b) and \eqref{edis1}, we observe that if   $D' \geq d_{\mathrm{s}}(\mathcal{C}^{\perp_{\Tr}}) $ if $\text{rank} (\mathcal{C}^{\perp_{\Tr}}/(\mathcal{C}  \cap  \mathcal{C}^{\perp_{\Tr}})) =2m$ and $D' \geq D$ if $\text{rank} (\mathcal{C}^{\perp_{\Tr}}/(\mathcal{C}  \cap  \mathcal{C}^{\perp_{\Tr}})) >2m.$ This completes the proof of the theorem. \end{proof}

The corollary below specializes Theorem~\ref{thm:same} to the case of finite fields. 

\begin{cor} Let $\mathcal{C} \subseteq \mathbb{F}_{p^m}^{2n}$ be an additive code over the finite field $\F_{p^m}$ such that $(\dim_{\F_p}(\mathcal{C}^{\perp_{\tr}}) - \dim_{\F_p}(\mathcal{C} \cap \mathcal{C}^{\perp_{\tr}})) \geq 2m.$ Set $c : = \left\lceil  \frac{1}{2m} (\dim_{\F_p}(\mathcal{C}) - \dim_{\F_p}(\mathcal{C} \cap \mathcal{C}^{\perp_{\tr}})) \right\rceil $ and $D=d_{\mathrm{s}}\bigl({\mathcal{C}}^{\perp_{\tr}} \setminus \mathcal{C}\bigr)$.  Then the existence of an $((n,p^{m(n+c)}/{|\mathcal{C}|},D;c))$ EAQECC over $\mathbb{F}_{p^m}$ constructed from $\mathcal{C}$ using Corollary~\ref{cor:Fpm} implies the existence of an $((n+1,p^{m(n+c-1)}/{|\mathcal{C}|},D';c))$ EAQECC over $ \mathbb{F}_{p^m}$ with
  \begin{equation*}
D'~\geq ~\left\{\begin{array}{ll}
d_{\mathrm{s}}(\mathcal{C}^{\perp_{\tr}})  & \text{~~if~~} (\dim_{\F_p}(\mathcal{C}^{\perp_{\tr}}) - \dim_{\F_p}(\mathcal{C} \cap \mathcal{C}^{\perp_{\tr}}))  =2m; \\
D   & \text{~~if~~} (\dim_{\F_p}(\mathcal{C}^{\perp_{\tr}}) - \dim_{\F_p}(\mathcal{C} \cap \mathcal{C}^{\perp_{\tr}}))  >2m.\,
\end{array}\right.
\end{equation*}
 \end{cor}
\begin{proof} The proof follows from Theorem \ref{thm:same} and the observations that, for finite fields $\F_{p^m},$ $\mathcal{C},$ $\mathcal{C} \cap \mathcal{C}^{\perp_{\tr}},$ $\mathcal{C}/(\mathcal{C} \cap \mathcal{C}^{\perp_{\tr}})$ and $\mathcal{C}^{\perp_{\tr}}/(\mathcal{C} \cap \mathcal{C}^{\perp_{\tr}})$ are vector spaces over $\F_p$, thus are free modules.\end{proof}

The following example illustrates the propagation rule for EAQECCs given in Theorem~\ref{thm:same}. In particular, it shows that the new EAQECC obtained via the lengthening procedure can have a strictly larger minimum distance than the original EAQECC, while keeping the number of maximally entangled qudit pairs the same. However, there is a price to be paid in terms of reduction in dimension. 

\begin{ex} Let $\mathcal{C}$ be an additive code over $\mathbb{Z}_9 \ (=\text{GR}(3^2,1))$ of length 12 with a minimal generating set 
$\{(v_1,w_1),(v_2,w_2),(v_3,w_3),(v_4,w_4),(x_1,y_1),(x_2,y_2)\}$,
 where $(v_1,w_1)=(0,0,0,0,1,1,0,0,0,0,1,0)$, $(x_1,y_1)=(0,0,0,0,2,0,0,0,0,1,1,0)$, $(v_2,w_2)=(0,0,0,1,1,0,0,0,0,0,1,0)$, $(x_2,y_2)=(0,0,0,0,2,0,0,0,0,0,1,1)$, $(v_3,w_3)=(1,1,0,0,0,0,0,1,0,0,0,0),$ and $(v_4,w_4)=(0,2,0,0,0,0,1,1,0,0,0,0)$. For $i = 1,2$, the generators $(v_i,w_i)$ and $(x_i,y_i)$ form hyperbolic pairs, whereas the generators $(v_3,w_3)$ and $(v_4,w_4)$ are isotropic. It can be verified that $\mathcal{C}^{\perp_{\mathrm{s}}}$ has $\{(v_3,w_3),(v_4,w_4),(e_1,f_1),(g_1,h_1),(e_2,f_2),(g_2,h_2)\}$ as a minimal generating set, with $(e_1,f_1)=(0,0,1,0,0,0,0,0,0,0,0,0), \ (g_1,h_1)=(0,0,0,0,0,0,0,0,1,0,0,0), \ (e_2,f_2)=(0,0,0,1,1,1,0,0,0,0,1,0)$, and $(g_2,h_2)=(0,0,0,0,2,0,0,0,0,1,1,1)$.  
 Note that $\mathcal{C},$ $\mathcal{C}^{\perp_{\mathrm{s}}},$ $\mathcal{C}/\mathcal{C}  \cap  \mathcal{C}^{\perp_{{\mathrm{s}}}}$  and $\mathcal{C}^{\perp_{\mathrm{s}}}/\mathcal{C}  \cap  \mathcal{C}^{\perp_{{\mathrm{s}}}}$ are all free $\mathbb{Z}_9$-modules. 
Moreover, using Magma \cite{bosma}, we get that $d_{\mathrm{s}}(\mathcal{C}^{\perp_{\mathrm{s}}}\setminus \mathcal{C}))=1$. Thus, by Theorem~\ref{dt2}, we obtain from $\mathcal{C}$ a $((6,9^2,1;2))$ EAQECC over $\mathbb{Z}_9.$ 

 Following the construction of Proposition~\ref{prop:extsame}, we first obtain a $1$-step extension, $\mathcal{M}$, of $\mathcal{C}^{\perp_{\mathrm{s}}}$, by converting the hyperbolic pair $(e_2,f_2)$ and $(g_2,h_2)$ into isotropic generators.  Thus, $\mathcal{M}$ has the generating set  
$\{(e_1,0,f_1,8),(g_1,8,h_1,0),(v_3,0,w_3,0),(v_4,0,w_4,0),(e_2,0,f_2,0),(g_2,0,h_2,0)\}$. Then, as in the proof of Proposition~\ref{prop:extsame} in Appendix~\ref{app:extsame_proof}, $\mathcal{M}^{\perp_{\mathrm{s}}}$ has 
$$\mathcal{A}=\{(e_1,0,f_1,8),(g_1,8,h_1,0),(v_1,0,w_1,0),(v_2,0,w_2,0),(v_3,0,w_3,0),(v_4,0,w_4,0),(x_1,0,y_1,0),(x_2,0,y_2,0)\}$$
as a minimal generating set. Using Magma \cite{bosma}, we find that $d_{\mathrm{s}}(\mathcal{M}\setminus \mathcal{M}^{\perp_{\mathrm{s}}}))=3$. Note that $\mathcal{M}^{\perp_{\mathrm{s}}}$ is a free $\mathbb{Z}_9$-module 9 (as indeed is $\mathcal{M}^{\perp_{{\mathrm{s}}}}/\mathcal{M}  \cap  \mathcal{M}^{\perp_{{\mathrm{s}}}}$). So, via Theorem~\ref{dt2}, we obtain from $\mathcal{M}^{\perp_{\mathrm{s}}}$ a $((7,9,3;2))$ EAQECC over $\mathbb{Z}_9$. This has minimum distance strictly larger than that of the original EAQECC obtained from $\mathcal{C}$.

While constructing the $1$-step extension from $\mathcal{C}^{\perp_{\mathrm{s}}}$, we could instead have converted the hyperbolic pair $(e_2,f_2)$ and $(g_2,h_2)$ into isotropic generators. If we had done this and gone through the same lengthening procedure, we would have ended up with a $((7,9,1;2))$ EAQECC, which is not as good as the $((7,9,3;2))$ EAQECC above. This once again demonstrates that the choice of which hyperbolic pairs to convert into isotropic generators can make a significant difference.
 \end{ex}

\section{Conclusion} \label{sec6}


In this paper, we proposed a framework to construct, from first principles, EAQECCs from additive codes over finite commutative local Frobenius rings. Given an additive code $\mathcal{C}$ over such a ring $\mathcal{R}$, we provide a means of constructing an EAQECC that uses a certain number of pre-shared pairs of maximally entangled qudits, a number that is determined by the algebraic structure of the code $\mathcal{C}$. This yields an upper bound on the minimum number of pairs of maximally entangled qudits required to construct an EAQECC from $\mathcal{C}$. We show that this bound is in fact tight for EAQECCs constructed from additive codes over $\mathbb{Z}_{p^a}$, but it can be improved when we start from codes over more general Galois rings. We derive an explicit expression for the minimum number of pre-shared pairs of maximally entangled qudits required to construct an EAQECC from an additive code over a Galois ring. This result significantly extends known results for EAQECCs constructed from additive codes over finite fields. Finally, we presented two methods to lengthen an additive code so as to modify, with some degree of control, the parameters of the EAQECCs obtained from these codes.

An interesting direction of future work would be to extend our formula for the minimum number of pre-shared pairs of maximally entangled qudits to EAQECCs over general finite commutative local Frobenius rings (beyond Galois rings). It would also be useful to find constructions of EAQECCs over Frobenius rings with good parameters, for example, codes that saturate the generalized quantum Singleton bound applicable to EAQECCs \cite{GHW21}. Finally, it would be of considerable interest to extend the EAQECC framework, starting from Theorem~\ref{thm:stdform}, to the general setting of finite commutative Frobenius rings, beyond the cases discussed at the end of Section~\ref{sec3}, obtained via the Chinese Remainder Theorem.

\section*{Appendices}
\appendix
\section{A CSS-Like Construction of EAQECCs} \label{app:CSS}
In the section, we will provide a construction of EAQECCs of length $n$ over the Galois ring $\text{GR}(p^b,m)$ using two additive codes of length $n$ over $\text{GR}(p^b,m).$ This construction is a straightforward generalization of the well-known CSS construction, due to Calderbank and Shor \cite{calderbank} and Steane \cite{steane}.   

We start with some definitions. As usual, for $u = (u_1,\ldots,u_n), v = (v_1,\ldots,v_n)\in \text{GR}(p^b,m)^n$, $u \cdot v$ denotes the standard (Euclidean) dot product $\sum_{i=1}^nu_iv_i$. 
\begin{defin} For a subset $\mathcal{C} \subseteq \text{GR}(p^b,m)^{n}$, we define\footnote{To distinguish this from the notation used for the trace-symplectic dual, we use $\independent$ instead of $\perp$ to denote the dual object.}
\begin{itemize}
\item the trace-Euclidean dual, $\mathcal{C}^{\independent_{\Tr}}$, of $\mathcal{C}$ as
$$\mathcal{C}^{\independent_{\Tr}}=\{ v \in \text{GR}(p^b,m)^{n}  : \Tr(u \cdot v) = 0 \text{~for all~}  u \in \mathcal{C}\},$$
\item and for $t = 0,1,\ldots,b$, the trace-Euclidean $t$-dual of $\mathcal{C}$ as 
$$\mathcal{C}^{\independent_{\Tr,t}} = \{v  \in \text{GR}(p^b,2m)^{n} ~:~ \Tr(u \cdot v) \equiv 0 \!\!\!   \pmod{p^{b-t}} \text{~for all~}  u \in \mathcal{C}\}.$$
\end{itemize}
(Note that $\mathcal{C}^{\independent_{\Tr,0}} = \mathcal{C}^{\independent_{\Tr}}$.)
\end{defin}



We will be applying the construction in Theorem~\ref{thm:dt3_restatement} to the direct sum $\mathcal{C}_1 \oplus \mathcal{C}_2 := \{(u,v): u \in \mathcal{C}_1, v \in \mathcal{C}_2\}$. For this, the following lemma relating the trace-symplectic dual of the direct sum to the trace-Euclidean duals of the component codes will be useful.

\begin{lem}
\label{lem:oplus}
Let $\mathcal{C}_1$ and $\mathcal{C}_2$ be additive codes of length $n$ over $\text{GR}(p^b,m)$, and let $\mathcal{C} = \mathcal{C}_1 \oplus \mathcal{C}_2$. We then have, for $t = 0,1,\ldots,b$,
\begin{itemize}
\item[\emph{(a)}] $\mathcal{C}^{\perp_{\Tr,t}} = \mathcal{C}_2^{\independent_{\Tr,t}} \oplus \mathcal{C}_1^{\independent_{\Tr,t}}$ for $t = 0,1,\ldots,b$, and
\item[\emph{(b)}] $\mathcal{C}/(\mathcal{C} \cap \mathcal{C}^{\perp_{\Tr,t}}) \  \cong \ \mathcal{C}_1/ (\mathcal{C}_1 \cap\mathcal{C}_2^{\independent_{\Tr,t}})  \oplus \mathcal{C}_2/(\mathcal{C}_2 \cap \mathcal{C}_1^{\independent_{\Tr,t}})$.
\end{itemize}
In particular (the $t=0$ case), $\mathcal{C}^{\perp_{\Tr}} = \mathcal{C}_2^{\independent_{\Tr}} \oplus \mathcal{C}_1^{\independent_{\Tr}}$ and $\mathcal{C}/(\mathcal{C} \cap \mathcal{C}^{\perp_{\Tr}})  \cong \mathcal{C}_1/ (\mathcal{C}_1 \cap\mathcal{C}_2^{\independent_{\Tr}})  \oplus \mathcal{C}_2/(\mathcal{C}_2 \cap \mathcal{C}_1^{\independent_{\Tr}})$.
\end{lem}
\begin{proof}
(a)\ We first prove that $\mathcal{C}^{\perp_{\Tr}}= \mathcal{C}_2^{\independent_{\Tr}} \oplus \mathcal{C}_1^{\independent_{\Tr}}$. Note that for $(u,v) \in  \mathcal{C}_2^{\independent_{\Tr}} \oplus \mathcal{C}_1^{\independent_{\Tr}}$ and $(u',v') \in \mathcal{C}_1 \oplus \mathcal{C}_2,$ we have $\Tr(v \cdot u'-v' \cdot u)=\Tr(v \cdot u')-\Tr(v' \cdot u)=0.$ Thus, $\mathcal{C}_2^{\independent_{\Tr}} \oplus \mathcal{C}_1^{\independent_{\Tr}} \subseteq \mathcal{C}^{\perp_{\Tr}}$. On the other hand, by Lemma \ref{card}, we also have
\begin{equation*}
\bigl|\mathcal{C}^{\perp_{\Tr}}\bigr|=\frac{p^{2bmn}}{|\mathcal{C}|}=\frac{p^{bmn}}{|\mathcal{C}_1|} \times\frac{p^{bmn}}{|\mathcal{C}_2|}= \bigl|\mathcal{C}_1^{\independent_{\Tr}}\bigr|\bigl|\mathcal{C}_2^{\independent_{\Tr}}\bigr|.\end{equation*} Here, the last equality holds as $\mathcal{C}_i^{\independent_{\Tr}}$ is equal to the orthogonal of $\mathcal{C}_i$ defined via the generating character of $\text{GR}(p^b,m)$, and the equality $\bigl|\mathcal{C}_i\bigr| \bigl|\mathcal{C}_i^{\independent_{\Tr}}\bigr| = {\bigl| \text{GR}(p^b,m) \bigr|}^n$ holds for this orthogonal --- see, e.g., \cite[Chapter 3]{dougherty17}.
Since the expression on the right-hand side equals $\bigl| \mathcal{C}_2^{\independent_{\Tr}} \oplus \mathcal{C}_1^{\independent_{\Tr}} \bigr|$, the inclusion in $\mathcal{C}_2^{\independent_{\Tr}} \oplus \mathcal{C}_1^{\independent_{\Tr}} \subseteq \mathcal{C}^{\perp_{\Tr}}$ must in fact be an equality.

Next, we want to prove that $\mathcal{C}^{\perp_{\Tr,t}} = \mathcal{C}_2^{\independent_{\Tr,t}} \oplus \mathcal{C}_1^{\independent_{\Tr,t}}$ for $t = 1,\ldots,b$. As before, the inclusion $\mathcal{C}_2^{\independent_{\Tr,t}} \oplus \mathcal{C}_1^{\independent_{\Tr,t}} \subseteq \mathcal{C}^{\perp_{\Tr,t}}$ is proved as follows: for $(u,v) \in  \mathcal{C}_2^{\independent_{\Tr},t} \oplus \mathcal{C}_1^{\independent_{\Tr},t}$ and   $(u',v') \in \mathcal{C}_1 \oplus \mathcal{C}_2,$ we have $\Tr(v \cdot u'-v' \cdot u)=\Tr(v \cdot u')-\Tr(v' \cdot u)\equiv 0 \!\!  \pmod{p^{b-t}}$. For the reverse inclusion, consider any $(u,v) \in \mathcal{C}^{\perp_{\Tr,t}}$. We then have $\Tr(v \cdot u'-v' \cdot u)\equiv 0 \!\! \pmod{p^{b-t}}$ for all $(u',v') \in \mathcal{C}_1 \oplus \mathcal{C}_2.$ This implies that $\Tr(p^{t}v \cdot u'-v' \cdot p^t u) = p^t \Tr(v \cdot u'-v' \cdot u) \equiv 0 \!\! \pmod{p^{b}}$ for all  $(u',v') \in \mathcal{C}_1 \oplus \mathcal{C}_2$, from which we infer that $(p^t u,p^t v) \in \mathcal{C}^{\perp_{\Tr}} =  \mathcal{C}_2^{\independent_{\Tr}} \oplus \mathcal{C}_1^{\independent_{\Tr}}$. Thus, for any $(u,v) \in\mathcal{C}^{\perp_{\Tr,t}}$, we have $p^t u \in \mathcal{C}_2^{\independent_{\Tr}}$ and  $p^t v \in\mathcal{C}_1^{\independent_{\Tr}}$, which, by definition, implies that $u \in \mathcal{C}_2^{\independent_{\Tr},t}$ and $ v \in\mathcal{C}_1^{\independent_{\Tr},t}.$ Hence $\mathcal{C}^{\perp_{\Tr,t}} \subseteq \mathcal{C}_2^{\independent_{\Tr},t} \oplus \mathcal{C}_1^{\independent_{\Tr},t}$, as required. 

\medskip

(b)\ From part (a), we deduce that $\mathcal{C} \cap \mathcal{C}^{\perp_{{\Tr,t}}} = (\mathcal{C}_1 \oplus \mathcal{C}_2) \cap (\mathcal{C}_2^{\independent_{\Tr},t} \oplus \mathcal{C}_1^{\independent_{\Tr},t}) =(\mathcal{C}_1 \cap\mathcal{C}_2^{\independent_{\Tr},t}) \oplus (\mathcal{C}_2 \cap \mathcal{C}_1^{\independent_{\Tr},t})$. Consequently,
$$
 \mathcal{C}/(\mathcal{C} \cap \mathcal{C}^{\independent_{\Tr,t}})  =(\mathcal{C}_1 \oplus \mathcal{C}_2)  /\bigl((\mathcal{C}_1 \cap\mathcal{C}_2^{\independent_{\Tr,t}}) \oplus (\mathcal{C}_2 \cap \mathcal{C}_1^{\independent_{\Tr,t}})\bigr) \cong \ \mathcal{C}_1/ (\mathcal{C}_1 \cap\mathcal{C}_2^{\independent_{\Tr,t}})  \oplus \mathcal{C}_2/(\mathcal{C}_2 \cap \mathcal{C}_1^{\independent_{\Tr,t}}).
 $$
\end{proof}

The CSS-like construction described in the proposition below is obtained by applying Theorem~\ref{thm:dt3_restatement} to the code $\mathcal{C} = \mathcal{C}_1 \oplus \mathcal{C}_2$. Its proof, via Lemma~\ref{lem:oplus}, is an exercise in the relevant definitions, and is omitted.

\begin{prop}\label{prop:CSS} (CSS-like Construction) Let $\mathcal{C}_1$ and $\mathcal{C}_2$ be two  additive codes of length $n$ over the Galois ring $\text{GR}(p^b,m).$  From $\mathcal{C}_1$ and $\mathcal{C}_2$,  we can construct  an $((n,K,D;c))$ EAQECC over $\text{GR}(p^b,m),$ where the minimum number, $c$, of pairs of maximally entangled qudits needed for the construction is equal to  $\left\lceil \frac1{2m}\bigl[\rank(\mathcal{C}_1/(\mathcal{C}_1 \cap \mathcal{C}_2^{\independent_{\Tr}})) + \rank(\mathcal{C}_2/(\mathcal{C}_2 \cap \mathcal{C}_1^{\independent_{\Tr}}))\bigr]  \right\rceil$, the minimum distance is

\begin{equation*}
D~=~\left\{\begin{array}{ll}
\min\{ d_{\mathrm{H}}(\mathcal{C}_1^{\independent_{\Tr}}), d_{\mathrm{H}}(\mathcal{C}_2^{\independent_{\Tr}})\} & \text{~~if~~} \mathcal{C}_2^{\independent_{\Tr}} \subseteq \mathcal{C}_1 \\
\min\{d_{\mathrm{H}}(\mathcal{C}_1^{\independent_{\Tr}}\setminus \mathcal{C}_2 ), d_{\mathrm{H}}(\mathcal{C}_2^{\independent_{\Tr}}\setminus \mathcal{C}_1 ) \} & \text{~~otherwise}\,,
\end{array}\right.
\end{equation*}
and the dimension $K$ is bounded as $p^{bm(n+c)}/ (|\mathcal{C}_1||\mathcal{C}_2| \, {p^{\sum_{t=1}^{b-1} (b-t)\rho_t}}) \leq  K \leq p^{bm(n+c)}/(|\mathcal{C}_1||\mathcal{C}_2|)$, with $\rho_t := \rank(\mathcal{C}_1/(\mathcal{C}_1 \cap \mathcal{C}_2^{\independent_{\Tr},t-1}))+\rank(\mathcal{C}_2/(\mathcal{C}_2 \cap \mathcal{C}_1^{\independent_{\Tr},t-1}))-\rank(\mathcal{C}_1/(\mathcal{C}_1 \cap \mathcal{C}_2^{\independent_{\Tr},t}))-\rank(\mathcal{C}_2/(\mathcal{C}_2 \cap \mathcal{C}_1^{\independent_{\Tr},t}))$.   

If either \emph{(a)}~$\mathcal{C}_1$ and  $\mathcal{C}_2$ are  free modules over $\mathbb{Z}_{p^b}$ or \emph{(b)}~$\mathcal{C}_1/(\mathcal{C}_1 \cap \mathcal{C}_2^{\independent_{\Tr}})$ and  $\mathcal{C}_2/(\mathcal{C}_2 \cap \mathcal{C}_1^{\independent_{\Tr}})$ are free modules over $\mathbb{Z}_{p^b}$, then $K = {p^{bm(n+c)}}/{(|\mathcal{C}_1||\mathcal{C}_2|)}.$  In the case of \emph{(b)}, we additionally have $$c=\left\lceil \frac1{2m}\bigl[\rank(\mathcal{C}_1)+\rank(\mathcal{C}_2) - \rank(\mathcal{C}_1 \cap \mathcal{C}_2^{\independent_{\Tr}}) - \rank(\mathcal{C}_2 \cap \mathcal{C}_1^{\independent_{\Tr}})\bigr]  \right\rceil.$$
\end{prop}

\section{Proof of Proposition~\ref{prop:extsame}} \label{app:extsame_proof}

 Let  $$\mathcal{G}=\{(v_1,w_1),(v_2,w_2),\ldots, (v_{e+d},w_{e+d}),(x_1,y_1),(x_2,y_2),\ldots, (x_{e},y_{e})\}$$ be a minimal generating set of $\mathcal{C}$ as a $\mathbb{Z}_{p^b}$-module such that, for $i = 1,2\ldots,e$, the generators $(v_i,w_i)$ and $(x_i,y_i)$ form hyperbolic pairs, and the generators $(v_i,w_i)$, $i = e+1,\ldots,e+d$, are isotropic. Further, let $\mathcal{H}=\{(e_1,f_1),(e_2,f_2),\ldots, $ $(e_{t+s},f_{t+s}),$ $(g_1,h_1),(g_2,h_2),\ldots, (g_{t},h_{t})\}$ be any minimal generating set in standard form  of $\mathcal{C}^{\perp_{\Tr}}$  as a $\mathbb{Z}_{p^b}$-module,  such that, for $i = 1,2,\ldots,t$, the generators $(e_i,f_i)$ and $(g_i,h_i)$ form hyperbolic pairs, and the generators $(e_i,f_i)$, $i = t+1,\ldots,t+s$, are isotropic.

Let $\tau : \mathcal{C}^{\perp_{\Tr}}  \to \mathcal{C}^{\perp_{\Tr}}/(\mathcal{C}^{\perp_{\Tr}} \cap \mathcal{C})$ be the canonical projection map that takes $(v,w) \in \mathcal{C}^{\perp_{\Tr}}$ to the coset $(v,w) + (\mathcal{C}^{\perp_{\Tr}} \cap \mathcal{C})$. By Proposition~\ref{prop:count}(a), $2t = \rank(\mathcal{C}^{\perp_{\Tr}}/(\mathcal{C}^{\perp_{\Tr}} \cap \mathcal{C})).$ As $\rank(\mathcal{C}^{\perp_{\Tr}}/(\mathcal{C}^{\perp_{\Tr}} \cap \mathcal{C})) \geq 2m$, we get $t\geq m.$ Again using the fact that $\mathcal{H}$ is a minimal generating set of $\mathcal{C}^{\perp_{\Tr}}$ and  $ \rank(\mathcal{C}^{\perp_{\Tr}}/(\mathcal{C}^{\perp_{\Tr}} \cap \mathcal{C}))=2t,$ we get $\{\tau((e_1,f_1)),\tau((e_2,f_2)),\ldots, \tau((e_{t},f_{t})),$ $\tau((g_1,h_1)),$ $\tau((g_2,h_2)),\ldots,\tau((g_{t},h_{t}))\}$ is a minimal generating set of $\mathcal{C}^{\perp_{\Tr}}/(\mathcal{C}^{\perp_{\Tr}} \cap \mathcal{C}) $ as a free module over $\mathbb{Z}_{p^b}$. 
 Thus $\{\tau((e_1,f_1)),$ $\tau((e_2,f_2)),\ldots, $ $\tau((e_{t},f_{t})),\tau((g_1,h_1)),\tau((g_2,h_2)),\ldots, \tau((g_{t},h_{t}))\}$ is a linearly independent set over $\mathbb{Z}_{p^b}$ --- this holds by using Proposition \ref{prop:free}.  This  implies that $\{(e_1,f_1),(e_2,f_2),\ldots, (e_{t},f_{t}),(g_1,h_1),(g_2,h_2),\ldots, (g_{t},h_{t})\}$ is a linearly independent set over $\mathbb{Z}_{p^b}.$ Using the fact that $\mathcal{C}^{\perp_{\Tr}}/(\mathcal{C}^{\perp_{\Tr}} \cap \mathcal{C}) $ is a free module over $\mathbb{Z}_{p^b}$ and by Theorem \ref{prop:count}(b), we get that $\{(e_i,f_i) : t+1 \leq i \leq t+s\}$ is a minimal generating set of $\mathcal{C}^{\perp_{\Tr}} \cap \mathcal{C}$ as a $\mathbb{Z}_{p^b}$-module.

For $1 \leq i \leq m,$ $\chi(f_ig_i-e_ih_i)$ is a $p^b$-th root of unity, and $(e_i,f_i)$ and $(g_i,h_i)$ form hyperbolic pairs, so let $\chi(f_ig_i-e_ih_i)=\zeta^{\theta_{i}},$ where $\zeta = \exp(\frac{2\pi i}{p^b})$ and $0< \theta_{i} < p^b.$  Let $\gamma_1,\gamma_2,\ldots,\gamma_m$ be a  basis of $\text{GR}(p^b,m)$ as a free module over $\mathbb{Z}_{p^b}$ and let $\beta_1,\beta_2,\ldots,\beta_m$ be the dual basis of $\gamma_1,\gamma_2,\ldots,\gamma_m.$
Let $\mathcal{M} \subseteq \text{GR}(p^b,m)^{2(n+1)}$ be an additive code generated by  
\begin{equation}\label{Mdual} \mathcal{H}'=\{(e_i,0,f_i,-\beta_{i}), (g_i,\theta_i\gamma_{i},h_i,0) : 1 \leq i \leq m \} \cup \{ (e_{k},0,f_{k},0),(g_{\ell},0,h_{\ell},0) : m+1 \leq k \leq t+s \text{~and~} m+1 \leq \ell \leq t \} .\end{equation}
  Note that the generators in $\mathcal{H}'$ are obtained from the generators in $\mathcal{H}$ by adding extra coordinates. Therefore, the fact that $\mathcal{H}$ is a minimal generating set of $\mathcal{C}^{\perp_{\Tr}}$ implies that $\mathcal{H}'$ is a minimal generating set of $\mathcal{M}$. 
  It is straightforward to verify that the generators $(e_k,0,f_k,0)$ and $(g_k,0,h_k,0)$ for $m+1 \leq k \leq t$ form hyperbolic pairs, and the generators $(e_j,0,f_j,0)$, $j = t+1,\ldots,t+s$, and $(e_i,0,f_i,-\beta_{i})$ and $(g_i,\theta_i\gamma_{i},h_i,0)$,  $1 \leq i \leq m$, are isotropic. 
By using Lemma \ref{card}, we get  \begin{equation}\label{ecard1}|\mathcal{M}^{\perp_{\Tr}}|=\frac{p^{bm(2n+2)}}{|\mathcal{M}|}=\frac{p^{bm(2n+2)}}{|\mathcal{C}^{\perp_{\Tr}}|}= p^{2bm} |\mathcal{C}|.\end{equation}

Now consider an additive code $\mathcal{N} \subseteq \text{GR}(p^b,m)^{2(n+1)}$  generated by 
$$\mathcal{A}=\{(e_i,0,f_i,-\beta_{i}), (g_i,\theta_i\gamma_{i},h_i,0) : 1 \leq i \leq m \} \cup \{  (v_{j},0,w_{j},0),(x_k,0,y_k,0) : 1 \leq j \leq e+d ,  1\leq  k \leq e\}.$$  It is straightforward to verify that $\mathcal{N} \subseteq \mathcal{M}^{\perp_{\Tr}}.$ Here we assert that $|\mathcal{N}|=|\mathcal{C}|(p^{b})^{2m},$ which will, using \eqref{ecard1}, imply that  $\mathcal{N} =\mathcal{M}^{\perp_{\Tr}}.$
We will prove this assertion by first showing that $\mathcal{A}$ is a minimal generating set of $\mathcal{N}$ as a $\mathbb{Z}_{p^b}$-module. 

Assume, to the contrary, that $\mathcal{A}$ is not a minimal generating set of $\mathcal{N}$ as a $\mathbb{Z}_{p^b}$-module. This means that
$$\sum \limits_{i=1}^{m} (a_i (e_i,0,f_i,-\beta_{i})+b_i (g_i,\theta_i\gamma_{i},h_i,0))+\sum \limits_{j=1}^{e+d}  c_j (v_{j},0,w_{j},0) +\sum \limits_{k=1}^{e} d_k (x_k,0,y_k,0) =0 $$
for some $a_i,b_i,c_j,d_k \in \mathbb{Z}_{p^b}$ with at least one among $a_i,b_i,c_j,d_k$ being a unit in $\mathbb{Z}_{p^b}.$
This gives
\begin{equation}\label{ES1}
\sum \limits_{i=1}^{m} (a_i (e_i,f_i)+b_i (g_i,h_i)) = -\sum \limits_{j=1}^{e+d}  c_j (v_{j},w_{j}) -\sum \limits_{k=1}^{e} d_k (x_k,y_k) \in \mathcal{C}.
\end{equation}
From this and using the fact that $(e_i,f_i), (g_i,h_i) \in \mathcal{C}^{\perp_{\Tr}},$ we get $\sum \limits_{i=1}^{m} (a_i (e_i,f_i)+b_i (g_i,h_i)) \in \mathcal{C}^{\perp_{\Tr}} \cap \mathcal{C},$ which implies that $\tau(\sum \limits_{i=1}^{m} (a_i (e_i,f_i)+b_i (g_i,h_i)))=\mathcal{C}^{\perp_{\Tr}} \cap \mathcal{C}.$ As $\{\tau((e_1,f_1)),\tau((e_2,f_2)),\ldots, $ $\tau((e_{m},f_{m})),\tau((g_1,h_1)),$ $\tau((g_2,h_2)),$ $\ldots, \tau((g_{m},h_{m}))\}$ is a linearly independent set over $\mathbb{Z}_{p^b},$ we get $a_i=b_i=0$ for $1 \leq i \leq m.$ This, by using \eqref{ES1}, implies that 
\begin{equation}\label{ES2}
\sum \limits_{j=1}^{e+d}  c_j (v_{j},w_{j}) +\sum \limits_{k=1}^{e} d_k (x_k,y_k)=0.
\end{equation}
From this and using the facts that at least one among $a_i,b_i,c_j$ and $d_k$ is a unit in $\mathbb{Z}_{p^b}$, and $a_i=b_i=0$ for $1 \leq i \leq m,$ we get that at least one among $c_j$ and $d_k$ is a unit in $\mathbb{Z}_{p^b}$. This, by using \eqref{ES2}, implies that $\{  (v_{j},w_{j}),(x_k,y_k) : 1 \leq j \leq e+d ,  1\leq  k \leq e\}$ is not a minimal generating set of $\mathcal{C}$ as a $\mathbb{Z}_{p^b}$-module, which is a contradiction. 

Thus, $\mathcal{A}$ is a minimal generating set of $\mathcal{N}$ as a $\mathbb{Z}_{p^b}$-module. From this and the fact that $\{(e_1,f_1),(e_2,f_2),\ldots, (e_{m},f_{m}),(g_1,h_1),(g_2,h_2),\ldots, (g_{m},h_{m})\}$ is a linearly independent set over $\mathbb{Z}_{p^b}$, while $\mathcal{C}$ does not contain any $\mathbb{Z}_{p^b}$ linear combination from this set, it follows that $|\mathcal{N}|=|\mathcal{C}|(p^{b})^{2m}$. Thus, via \eqref{ecard1}, $\mathcal{N} =\mathcal{M}^{\perp_{\Tr}}$. Clearly, the generators $(v_{k},0,w_{k},0)$ and $(x_k,0,y_k,0),$  $1\leq  k \leq e$, form hyperbolic pairs, and the generators $(v_{j},0,w_{j},0),$ $e+1 \leq j \leq e+d$ and, $(e_i,0,f_i,-\beta_{i})$ and $(g_i,\theta_i\gamma_{i},h_i,0),$  $1 \leq i \leq m$, are isotropic. This verifies item (i) in the statement of the proposition.

For the second item, we start by noting that 
\begin{equation*}
\left\langle\{(e_j,0,f_j,0): t+1 \leq j \leq t+s\} \cup \{(e_i,0,f_i,-\beta_{i}),(g_i,\theta_i\gamma_{i},h_i,0) : 1 \leq i \leq m\}\right\rangle \subseteq \mathcal{M} \cap \mathcal{M}^{\perp_{\Tr}} .
\end{equation*}
Next we observe that if 
$\sum \limits_{i=m+1}^{t} (a_i (e_i,0,f_i,0)+b_i (g_i,0,h_i,0)) \in \mathcal{M} \cap \mathcal{M}^{\perp_{\Tr}} $ for some $a_i,b_i \in \mathbb{Z}_{p^b},$ then 
$$\sum \limits_{i=m+1}^{t} (-a_i (e_i,0,f_i,0)-b_i (g_i,0,h_i,0))=\sum \limits_{i=1}^{m} (a_i (e_i,0,f_i,-\beta_{i})+b_i (g_i,\theta_i\gamma_{i},h_i,0))+\sum \limits_{j=1}^{e+d}  c_j (v_{j},0,w_{j},0) +\sum \limits_{k=1}^{e} d_k (x_k,0,y_k,0)$$
for some $a_i,b_i,c_j,d_k \in \mathbb{Z}_{p^b}.$

This gives
\begin{equation*}
\sum \limits_{i=1}^{t} (a_i (e_i,f_i)+b_i (g_i,h_i)) = -\sum \limits_{j=1}^{e+d}  c_j (v_{j},w_{j}) -\sum \limits_{k=1}^{e} d_k (x_k,y_k) \in \mathcal{C}.
\end{equation*}

From this and using the fact that $(e_i,f_i), (g_i,h_i) \in \mathcal{C}^{\perp_{\Tr}},$ we get $\sum \limits_{i=1}^{t} (a_i (e_i,f_i)+b_i (g_i,h_i)) \in \mathcal{C}^{\perp_{\Tr}} \cap \mathcal{C}.$ This implies that   $\tau(\sum \limits_{i=1}^{t} (a_i (e_i,f_i)+b_i (g_i,h_i)))=0.$ Further, as $\{\tau((e_{1},f_{1})),\tau((g_{1},h_{1})),\ldots, \tau((e_{{t}},f_{{t}})),\tau((g_{{t}},h_{{t}}))\}$ is a linearly independent set over $\mathbb{Z}_{p^b},$ we get $a_i=b_i=0$ for $1 \leq i \leq t.$  Thus
\begin{equation*}
\mathcal{M} \cap \mathcal{M}^{\perp_{\Tr}} =\langle\{(e_j,0,f_j,0): t+1 \leq j \leq t+d\} \cup \{(e_i,0,f_i,-\beta_{i}),(g_i,\theta_i\gamma_{i},h_i,0) : 1 \leq i \leq m\}\rangle.
\end{equation*}


From this and by using \eqref{Mdual} and the fact that $\{(e_i,f_i) : t+1 \leq i \leq t+d\}$ is a minimal generating set of $\mathcal{C}^{\perp_{\Tr}} \cap \mathcal{C}$ as a $\mathbb{Z}_{p^b}$-module, we get  $d_s\bigl({\mathcal{M}} \setminus \mathcal{M}^{\perp_{\Tr}}\bigr) \ge d_s\bigl({\mathcal{C}}^{\perp_{\Tr}} \setminus \mathcal{C}\bigr) $ if $\text{rank} (\mathcal{C}^{\perp_{\Tr}}/(\mathcal{C}  \cap  \mathcal{C}^{\perp_{\Tr}})) >2m$ and $ d_s\bigl({\mathcal{M}} \bigr) \ge d_s\bigl({\mathcal{C}}^{\perp_{\Tr}}\bigr) $ if $\text{rank} (\mathcal{C}^{\perp_{\Tr}}/(\mathcal{C}  \cap  \mathcal{C}^{\perp_{\Tr}})) =2m.$
This verifies item (ii) in the statement of the proposition.

Finally, we will verify item (iii). We use the notation $0^{\mu}$, for a non-negative integer $\mu$, to denote a string of $\mu$ $0$s. For $1 \leq i \leq e,$ $\chi(w_i \cdot x_i-v_i \cdot y_i)$ is a $p^b$-th root of unity and $(v_i,w_i)$ and $(x_i,y_i)$ form hyperbolic pairs, so let $\chi(w_i \cdot x_i-v_i \cdot y_i)=\zeta^{z_{i}},$ where  $0< z_{i} < p^b.$  Let $\mathcal{C}'\subseteq \text{GR}(p^b,m)^{2(n+c)}$ be an additive code generated by $\mathcal{G}' = \{u'_{11},u'_{12},\ldots,u'_{e1},u'_{e2},b'_1,\ldots,b'_d\}$, with 
$$
 u_{i1}' := (v_i,0^c,w_i,0^k,-\beta_{\ell},0^{c-1-k}) \text{ and }  u'_{i2} = (x_i,0^{k},z_i\gamma_{\ell},0^{c-1-k},y_i,0^c)
 $$ for $i = km+ \ell$, with $0 \le k \le c-2$ and $1 \le \ell \le m$  and, 

  $$
 u_{i1}' := (v_i,0^c,w_i,0^{c-1},-\beta_{\ell}) \text{ and }  u'_{i2} = (x_i,0^{c-1},z_i\gamma_{\ell},y_i,0^c)
 $$ for $i = (c-1)m + \ell$, with $1 \le \ell \le e-(c-1)m$, and  $b'_j = (v_{e+j},0^c,w_{e+j},0^c) \text{ for } j=1,2\ldots,d$. Clearly, $\mathcal{C}'\subseteq \text{GR}(p^b,m)^{2(n+c)}$ is a $\chi$-self-orthogonal extension $\mathcal{C}$ with entanglement degree equal to $c,$   obtained using the Two-Step Construction.  
By Theorem \ref{EAQZ}, the minimum entanglement degree of any $\chi$-self-orthogonal extension of $\mathcal{C}$ is equal to  $\left\lceil { \frac{1}{2m} \rank(\mathcal{C}/(\mathcal{C} \cap \mathcal{C}^{\perp_{\Tr}}))  }\right\rceil=\left\lceil  \frac{e}{m}\right\rceil=c.$ This implies that $\mathcal{C}'\subseteq \text{GR}(p^b,m)^{2(n+c)}$ is a minimal $\chi$-self-orthogonal extension of an additive code $\mathcal{C} \subseteq \text{GR}(p^b,m)^{2n}.$ 

Let $\mathcal{M}'\subseteq \text{GR}(p^b,m)^{2(n+c+1)}$ be an additive code generated by $\mathcal{A}' = \{v'_{11},v'_{12},\ldots,v'_{e1},v'_{e2},w'_1,\ldots,w'_d,x'_1,\ldots,$ $x'_m,$ $y'_1,\ldots,y'_m\}$, with 
$$
 v_{i1}' := (v_i,0^{c+1},w_i,0^{k+1},-\beta_{\ell},0^{c-1-k}) \text{ and }  v'_{i2} = (x_i,0^{k+1},z_{i}\gamma_{\ell},0^{c-1-k},y_i,0^{c+1})
 $$ for $i = km+ \ell$, with $0 \le k \le c-2$ and $1 \le \ell \le m$, and

  $$
 v_{i1}' := (v_i,0^{c+1},w_i,0^{c},-\beta_{\ell}) \text{ and }  v'_{i2} = (x_i,0^{c},z_{i}\gamma_{\ell},y_i,0^{c+1})
 $$ for $i = (c-1)m + \ell$, with $1 \le \ell \le e-(c-1)m$, and  $w'_j = (v_{e+j},0,w_{e+j},0) \text{ for } j=1,2\ldots,d$, $x'_k =(e_k,0^{c+1},f_k,-\beta_{k},0^{c}),$ and $y'_k=(g_k,\theta_k\gamma_{k},0^{c},h_k,0^{c+1})$ for $k=1,2,\ldots,m.$ Clearly, $\mathcal{M}'\subseteq \text{GR}(p^b,m)^{2(n+c+1)}$ is a $\chi$-self-orthogonal extension $\mathcal{M}^{\perp_{\Tr}} $ with entanglement degree equal to $c,$   obtained using the Two-Step Construction.  
 By Theorem \ref{EAQZ}, the minimum entanglement degree of any $\chi$-self-orthogonal extension of $\mathcal{M}^{\perp_{\Tr}}$ is equal to  $\left\lceil { \frac{1}{2m} \rank(\mathcal{M}^{\perp_{\Tr}}/(\mathcal{M}^{\perp_{\Tr}} \cap \mathcal{M}))  }\right\rceil=\left\lceil  \frac{e}{m}\right\rceil=c.$ This implies that $\mathcal{M}'\subseteq \text{GR}(p^b,m)^{2(n+c+1)}$ is a minimal $\chi$-self-orthogonal extension of an additive code $\mathcal{M}^{\perp_{\Tr}} \subseteq \text{GR}(p^b,m)^{2n}.$ 
Note that $|\mathcal{M}'|=|\mathcal{C}'|p^{2mb}$ --- this follows from the facts that $\{(e_1,f_1),(e_2,f_2),\ldots, (e_{m},f_{m}),(g_1,h_1),$ $(g_2,h_2),$ $\ldots, (g_{m},h_{m})\}$ is a linearly independent set over $\mathbb{Z}_{p^b}$ and $\mathcal{C}$ does not contain any $\mathbb{Z}_{p^b}$-linear combination from this set.

\section{EAQECCs over $\text{GR}(p^b,m)$ from Additive Codes over  $\text{GR}(p^b,2m)$}\label{app:trace-alt-form} 

There is a well-known equivalence between (a)~quantum stabilizer codes of length $n$ over the field $\mathbb{F}_q$ and (b)~additive codes of length $n$ over $\mathbb{F}_{q^2}$ that are self-orthogonal with respect to a certain trace-alternating form \cite{calder}, \cite[Section~IV-B]{ketkar}.
In this appendix, we provide the machinery needed (see Proposition~\ref{prop:equiv} below) to extend this equivalence to codes over Galois rings. This paves the way for the use of additive codes over $\text{GR}(p^b,2m)$ for the purpose of constructing EAQECCs over $\text{GR}(p^b,m).$ 

We begin by observing that $\text{GR}(p^b,2m)$ is a degree-$2$ extension of $\text{GR}(p^b,m).$ Moreover, there exists a monic polynomial $g(x)\in \text{GR}(p^b,m)[x]$ of degree $2$ such that its reduction $(g(x) \text{ mod } p) \in \mathbb{F}_{p^m}[x]$ is irreducible and primitive over $\mathbb{F}_{p^m}$ and $g(x)$ divides $x^{p^{2m}-1}-1$ in $\text{GR}(p^b,m)[x].$ Thus we have  $\text{GR}(p^b,2m) \cong \text{GR}(p^b,m) [x]/ \langle g(x) \rangle.$ 
If $\delta=x+\langle g(x) \rangle,$ then clearly $g(\delta)=0,$ and every element of $\text{GR}(p^b,2m) (\cong \text{GR}(p^b,2m)[x]/ \langle g(x) \rangle)$ has a unique representation as $$r=r_0+r_1\delta\text{~with~} r_0,r_1 \in \text{GR}(p^b,m).$$
This implies that $\text{GR}(p^b,2m)$ is a free module of rank $2$ over  $\text{GR}(p^b,m)$ with $\{1,\delta\}$ as a basis (see \cite[Ch.\ 14]{wan}).  

Define $\sigma: \text{GR}(p^b,m)^{2n} \to \text{GR}(p^b,2m)^n $ as follows:
  $$\sigma(a_1,a_2,\ldots,a_n, b_1,b_2,\ldots,b_n) =(a_1+\delta b_1,a_2+\delta b_2,\ldots,a_n+\delta b_n).$$ Clearly, $\sigma$ is a bijective and  additive map, thus a $\mathbb{Z}_{p^b}$-module isomorphism. 
  Throughout this section, whenever for a tuple $(a,b)~ (=(a_1,\ldots,a_n, b_1,\ldots,b_n)) \in \text{GR}(p^b,m)^{2n},$   we consider its image $\sigma(a,b)=a+\delta b\in \text{GR}(p^b,2m)^n,$ it is to be implicitly understood that $a+\delta b=(a_1+\delta b_1,a_2+\delta b_2,\ldots,a_n+\delta b_n) \in \text{GR}(p^b,2m)^n.$   
Clearly, $wt_s((a,b) )=w_H(\sigma((a,b)),$ thus $\sigma$ is an isometry in this sense. 

We define a trace-alternating form on $\text{GR}(p^b,2m)^{n}$ as follows: for $ a+\delta b$ and $ a'+\delta b' \in \text{GR}(p^b,2m)^{n},$
$${\langle a+\delta b |  a'+\delta b' \rangle}_{\mathrm{a}} = \Tr(ba'-b'a).$$ Here, ${\Tr}: \text{GR}(p^b,m) \to \mathbb{Z}_{p^b}$ is the generalized trace map of  $\text{GR}(p^b,m)$ relative to $\mathbb{Z}_{p^b}.$

\begin{defin} For a subset $\mathcal{C} \subseteq \text{GR}(p^b,2m)^{n}$, we define
\begin{itemize}
\item the trace-alternating dual, $\mathcal{C}^{\perp_{\mathrm{a}}}$, of $\mathcal{C}$ as
$$\mathcal{C}^{\perp_{\mathrm{a}}}=\{a'+\delta b' \in \text{GR}(p^b,2m)^{n}  : {\langle a+\delta b |  a'+\delta b' \rangle}_{\mathrm{a}} = 0 \text{~for all~}  a+\delta b \in \mathcal{C}\},$$
\item and for $t = 0,1,\ldots,b$, the trace-alternating $t$-dual of $\mathcal{C}$ as 
$$\mathcal{C}^{\perp_{\mathrm{a},t}} = \{a'+\delta b'  \in \text{GR}(p^b,2m)^{n} ~:~ {\langle a+\delta b |  a'+\delta b' \rangle}_{\mathrm{a}}  \equiv 0 \!\!\! \pmod{p^{b-t}} \text{~for all~}  a+\delta b  \in \mathcal{C}\}.$$
\end{itemize}
\end{defin}

As  $\sigma$ is a $\mathbb{Z}_{p^b}$-module isomorphism, a subset $\mathcal{C}$ of $\text{GR}(p^b,m)^{2n}$ is an additive code over $\text{GR}(p^b,m)$  of length $2n$ if and only if 
$\sigma{(\mathcal{C})}$ is an additive code over $\text{GR}(p^b,2m)$  of length $n.$
The proposition below follows immediately from the relevant definitions.

 \begin{prop}\label{prop:equiv} Let $\mathcal{C}\subseteq \text{GR}(p^b,m)^{2n}$ be an additive code, and let $\overline{\mathcal{C}} = \sigma(\mathcal{C})$ be its image in $\text{GR}(p^b,2m)^n$ under the mapping $\sigma$. 
Then, $\overline{\mathcal{C}}^{\perp_{\mathrm{a}}} = \sigma(\mathcal{C}^{\perp_{\Tr}}),$ and for $t = 0,1,\ldots,b$, we have $\overline{\mathcal{C}}^{\perp_{\mathrm{a},t}} = \sigma(\mathcal{C}^{\perp_{\Tr,t}}).$ As a consequence, for $t = 0,1,\ldots,b$,  $\rank(\mathcal{C}/(\mathcal{C} \cap \mathcal{C}^{\perp_{\Tr,t}}))=\rank(\overline{\mathcal{C}}/(\overline{\mathcal{C}} \cap \overline{\mathcal{C}}^{\perp_{\mathrm{a},t}})).$ Moreover, $\mathcal{C}/(\mathcal{C} \cap \mathcal{C}^{\perp_{\Tr}}) $ is a free module over $\mathbb{Z}_{p^b}$ if and only if $\overline{\mathcal{C}}/(\overline{\mathcal{C}} \cap \overline{\mathcal{C}}^{\perp_{\mathrm{a}}})$  is a free module over $\mathbb{Z}_{p^b}.$ 
\end{prop}

\begin{rem}\label{rem:field} If $\mathcal{C}$ is an additive code over a finite field $\mathbb{F}_q$, then $\overline{\mathcal{C}}^{\perp_{\mathrm{a}}}$ is the same as the trace-alternating dual defined in \cite[Section IV-B]{ketkar}. Moreover, if  $\overline{\mathcal{C}}$ is a linear code over $\mathbb{F}_{q^2}$, then by \cite[Lemma 18]{ketkar},   $\overline{\mathcal{C}}^{\perp_{\mathrm{a}}}$ is equal to the Hermitian dual of $\overline{\mathcal{C}}$. Thus, for example, Proposition~1 of Lison\v{e}k and Singh \cite{liso} may be viewed as a particular case of our Theorem 3.1(c).
\end{rem}

The following result is essentially a restatement of Theorem~\ref{thm:dt3_restatement}, expressed in the language of this appendix. This generalizes analogous results in Ketkar et al.\ \cite[Theorem 15]{ketkar} and Galindo et al.\ \cite[Theorem 3]{galindo}.

\begin{thm}\label{thm:dt3_equiv}[Equivalent to Theorem~\ref{thm:dt3_restatement}] Let $\mathcal{C} \subseteq \text{GR}(p^b,2m)^{n}$ be an additive code over the Galois ring $\text{GR}(p^b,2m)$.   From $\mathcal{C}$, we can construct an  $((n,K,D;c))$ EAQECC over $\text{GR}(p^b,m),$ where the minimum number, $c$, of pairs of maximally entangled qudits needed for the construction is equal to  $\left\lceil \frac1{2m}\rank(\mathcal{C}/(\mathcal{C} \cap \mathcal{C}^{\perp_{\mathrm{a}}}))  \right\rceil$, the minimum distance is 
\begin{equation*}
D~=~\left\{\begin{array}{ll}
d_{\mathrm{H}}(\mathcal{C}^{\perp_{\mathrm{a}}}) & \text{~~if~~} \mathcal{C}^{\perp_{\mathrm{a}}} \subseteq \mathcal{C} \\
d_{\mathrm{H}}(\mathcal{C}^{\perp_{\mathrm{a}}}\setminus \mathcal{C} )  & \text{~~otherwise}\,
\end{array}\right.
\end{equation*}
and the dimension $K$ is bounded as $p^{bm(n+c)}/ (|\mathcal{C}| \, {p^{\sum_{t=1}^{b-1} (b-t)\rho_t}}) \leq  K \leq p^{bm(n+c)}/|\mathcal{C}|$, with $\rho_t := \rank(\mathcal{C}/(\mathcal{C} \cap \mathcal{C}^{\perp_{\mathrm{a},t-1}}))-\rank(\mathcal{C}/(\mathcal{C} \cap \mathcal{C}^{\perp_{\mathrm{a},t}}))$.  
If either \emph{(a)}~$\mathcal{C}$ is free or \emph{(b)}~$\mathcal{C}/(\mathcal{C} \cap \mathcal{C}^{\perp_{\mathrm{a}}})$ is a free module over $\mathbb{Z}_{p^b}$, then $K = p^{bm(n+c)}/|\mathcal{C}|$. In the case of \emph{(b)}, we additionally have $c=\left\lceil \frac1{2m}[\rank(\mathcal{C}) - \rank(\mathcal{C} \cap \mathcal{C}^{\perp_{\mathrm{a}}})]\right\rceil$.\end{thm}

The following corollary immediately follows from Proposition~\ref{prop:equiv} and Theorem  \ref{thm:same}, and is a generalization of \cite[Theorem  16]{luo1}.

\begin{cor}\label{cor:same}  Let $\mathcal{C} \subseteq \text{GR}(p^b,2m)^{n}$ be an additive code over the Galois ring $\text{GR}(p^b,2m)$ such that the additive code $\mathcal{C}^{\perp_{\mathrm{a}}}/(\mathcal{C}  \cap  \mathcal{C}^{\perp_{\mathrm{a}}})$ is free as a $\mathbb{Z}_{p^b}$-module of rank atleast  $2m.$  Set $c:=\left\lceil \frac1{2m}\rank(\mathcal{C}/(\mathcal{C} \cap \mathcal{C}^{\perp_{\mathrm{a}}}))  \right\rceil$ and $D:=d_{\mathrm{H}}(\mathcal{C}^{\perp_{\mathrm{a}}}\setminus \mathcal{C} ).$  Then, the existence of an $((n,K,D;c))$ EAQECC over  $\text{GR}(p^b,m)$ constructed from $\mathcal{C}$ using Theorem~\ref{thm:dt3_equiv}   implies the existence of an $((n+1,\frac{1}{p^{bm}}K,D';c))$ EAQECC over $\text{GR}(p^b,m)$ with 
 \begin{equation*}
D'~\geq ~\left\{\begin{array}{ll}
d_{\mathrm{H}}(\mathcal{C}^{\perp_{\mathrm{a}}})  & \text{~~if~~}  \text{rank} (\mathcal{C}^{\perp_{\mathrm{a}}}/(\mathcal{C}  \cap  \mathcal{C}^{\perp_{\mathrm{a}}})) =2m; \\
D   & \text{~~if~~} \text{rank} (\mathcal{C}^{\perp_{\mathrm{a}}}/(\mathcal{C}  \cap  \mathcal{C}^{\perp_{\mathrm{a}}})) >2m.\,
\end{array}\right.
\end{equation*}
\end{cor}

\end{document}